\documentclass[journal]{IEEEtran}
\usepackage[utf8]{inputenc}
\usepackage{balance}
\usepackage{graphicx}
\usepackage{bm}
\usepackage{amsmath}
\usepackage{amsfonts}
\usepackage{amsthm}
\usepackage{amssymb}
\usepackage{color}
\usepackage{mathrsfs}
\usepackage{upgreek}
\usepackage[utf8]{inputenc}
\usepackage{tikz}
\usetikzlibrary{arrows, calc, decorations.markings, positioning}
\usepackage{cite}
\usepackage{booktabs}
\usepackage{multirow}
\usepackage{amsmath}
\usepackage{amsthm}
\usepackage{algorithm}
\usepackage{algorithmic}
\usepackage{hyperref}
\usepackage{midfloat}
\usepackage{tabularx}
\usepackage[flushleft]{threeparttable}
\ifCLASSOPTIONcompsoc
 \usepackage[caption=false,font=normalsize,labelfont=sf,textfont=sf]{subfig}
\else
 \usepackage[caption=false,font=footnotesize]{subfig}
\fi
\newtheorem{thm}{Theorem}
\newtheorem{assumption}{Assumption}
\newtheorem{definition}{Definition}
\newtheorem{lemma}{Lemma}
\newtheorem{corollary}{Corollary}

\newtheorem{proposition}{Proposition}

\newtheorem{conjecture}{Conjecture}

\usepackage{hyperref}					
\hypersetup{colorlinks,
	linkcolor=blue,%
	anchorcolor=blue,
    citecolor=blue}

\begin{document}
\title{\huge{CP-OFDM Achieves Lower Ranging CRB Than Frequency-Spread Waveforms in the Large-Sample Regime}
}
\author{
	{
	Fan Liu,~\IEEEmembership{Senior Member,~IEEE}, Yifeng Xiong,~\IEEEmembership{Member,~IEEE}, Ya-Feng Liu,~\IEEEmembership{Senior Member, IEEE}, \\Jie Yang,~\IEEEmembership{Member,~IEEE}, Christos Masouros,~\IEEEmembership{Fellow, IEEE}, and Shi Jin,~\IEEEmembership{Fellow, IEEE}
\thanks{This work was supported in part by the National Science and Technology Major Projects of China under Grant 2025ZD1302000, and in part by the National Natural Science Foundation of China (NSFC) under Grant 62522107. (\textit{Corresponding authors: Yifeng Xiong; Shi Jin.})}
\thanks{F. Liu and S. Jin are with the National Mobile Communications Research Laboratory, Southeast University, Nanjing 210096, China (email: fan.liu@seu.edu.cn, jinshi@seu.edu.cn).}
\thanks{Y. Xiong is with the Department of Communication Engineering, Beijing University of Posts and Telecommunications, Beijing 100876, China (e-mail: yifengxiong@bupt.edu.cn).}
\thanks{Y.-F. Liu is with the Ministry of Education Key Laboratory of Mathematics and Information Networks, School of Mathematical Sciences, Beijing University of Posts and Telecommunications, Beijing 102206, China (email: yafengliu@bupt.edu.cn).}
\thanks{J. Yang is with the Frontiers Science Center for Mobile Information Communication and Security, Southeast University, Nanjing 210096, China, and also with the Key Laboratory of Measurement and Control of Complex Systems of Engineering, Ministry of Education, Southeast University, Nanjing 210096, China (email: yangjie@seu.edu.cn).}
\thanks{C. Masouros is with the Department of Electrical and Electronic Engineering, University College London, London WC1E 7JE, UK (email: c.masouros@ucl.ac.uk).}
} 
%

}
\maketitle

\begin{abstract}
The inherent randomness of communication symbols creates a fundamental tension in Integrated Sensing and Communications (ISAC). On the one hand, they enable data transmission while allowing sensing to fully reuse communication resources. On the other hand, their randomness induces waveform-dependent fluctuations that directly affect sensing accuracy. This paper investigates a foundational question arising from this tradeoff: \textit{How does the modulation waveform affect the ranging Cram\'er--Rao Bound (CRB) when sensing reuses random data symbols?} We address this question by revealing a structural factorization of the Fisher information matrix (FIM) for joint delay-amplitude estimation, which separates the deterministic Jacobian of the target geometry from the random frequency-domain signal power induced by the data symbols. This structure yields a Jensen-type universal lower bound on the CRB, which is exactly attained by Cyclic Prefix-Orthogonal Frequency Division Multiplexing (CP-OFDM) under Phase Shift Keying (PSK) constellations. For Quadrature Amplitude Modulation (QAM) and broader sub-Gaussian constellations, we develop an asymptotic perturbation analysis of the inverse FIM and prove that, when the number of transmitted symbols $N$ grows large, CP-OFDM achieves a lower ranging CRB than any frequency-spread orthogonal waveform over the almost-sure event where the random FIM is invertible. This superiority is further extended to amplitude estimation and full joint delay-amplitude estimation. We also characterize the local geometry of the stochastic CRB minimization problem over the unitary group. The analysis reveals that CP-OFDM is a stationary point for finite $N$, and its Riemannian Hessian is positive semidefinite for sufficiently large $N$, establishing its asymptotic local optimality. Numerical results confirm that OFDM outperforms representative waveforms including Single-Carrier (SC), Orthogonal Time-Frequency Space (OTFS), and Affine Frequency Division Multiplexing (AFDM).
\end{abstract}
\begin{IEEEkeywords}
ISAC, OFDM, CRB, waveform design, multi-target ranging
\end{IEEEkeywords}

\section{Introduction}

\IEEEPARstart{I}{ntegrated} sensing and communications (ISAC) has emerged as a key enabling technology for future wireless networks, where communication and sensing functionalities are expected to be jointly supported over shared spectrum, hardware, and signaling resources \cite{liu2022integrated}. By allowing communication signals to probe the surrounding environment, ISAC can provide ubiquitous sensing capabilities without relying on dedicated radar waveforms or separate sensing infrastructure. This resource-sharing principle is particularly attractive for scenarios where high-rate data transmission and accurate environmental awareness are simultaneously required, such as vehicular networks, industrial automation, extended reality, and low-altitude networks \cite{zhang2022enabling}.

A prevailing approach in 5G-Advanced (5G-A) oriented ISAC is to perform sensing with deterministic reference signals embedded in the 5G New Radio (NR) frame structure, e.g., Zadoff-Chu (ZC) sequences, $m$-sequences, or even Chirp signals \cite{wei2023signals}. These signals have favorable correlation properties and are mostly compatible with existing communication protocols. However, they occupy only a small fraction of the radio frame, typically on the order of $10\%$~\cite{liu2026jsac_tut}, leaving the majority of communication resources unexploited for sensing. This resource limitation may become a major bottleneck for high-accuracy and high-resolution sensing. It is therefore natural to reuse random data symbols, which occupy the remaining $90\%$ of time-frequency resources, to further improve sensing performance. This leads to challenging waveform design problems for data-bearing ISAC signals. Modern wireless systems employ a variety of orthogonal modulation formats, including SC, OFDM, OTFS~\cite{hadani2017otfs,gaudio2020effectiveness}, and AFDM~\cite{bemani2023afdm}. These waveforms can often be represented under a generic unitary transform framework. Nevertheless, they map the same random data symbols to the time-domain transmitted signal through different orthogonal bases, thereby inducing distinct statistical structures in the sensing observation. Hence, even under the same constellation and average transmit power, the sensing performance of random data-bearing signals may depend strongly on the modulation basis.

From a broader perspective, this modulation dependence connects ISAC waveform design to the more basic question of how to evaluate the sensing performance of communication waveforms. The performance limits of ISAC are naturally tied to the long-standing connection between information and estimation theories \cite{liu2022survey}. Classical rate-distortion theory relates the description rate of a random source to the distortion level of its reconstruction, while the I-MMSE relation links mutual information over Gaussian channels to the minimum mean square error (MMSE) of input estimation~\cite{guo2005mutual}. These results provide two representative examples showing that information transfer and statistical inference are deeply intertwined. Motivated by this connection, ISAC has been studied from both information- and estimation-theoretic viewpoints. From the information-theoretic side, recent works have characterized communication capacity under channel state sensing, channel discrimination, and finite blocklength sensing constraints~\cite{sutivong2005channel,kim2008state,zhang2011joint,kobayashi2018joint,ahmadipour2024information,joudeh2022joint,wu2022joint,Joudeh2026JSAIT,nikbakht2024integrated,chen2025fundamental}. From the estimation-theoretic side, the CRB-rate region was proposed to characterize the tradeoff between communication rate and sensing accuracy over Gaussian channels~\cite{xiong2023fundamental}. This viewpoint has further triggered studies on multi-antenna ISAC transmission design, where spatial resource allocation is optimized under communication and sensing constraints~\cite{liu2018toward,liu2022crb,hua2024mimo,ren2024fundamental}. While these works provide important insights into joint beamforming design and spatial degrees of freedom (DoFs) in ISAC systems, their models usually rely on deterministic probing waveforms or Gaussian signaling, and therefore largely abstract away from the discrete and random nature of practical communication symbols.

Sensing with random communication signals has recently emerged as a distinct line of investigation \cite{liu2026jsac_tut}. Its central premise is that payload symbols constitute a substantial part of the ISAC waveform, and their statistics directly affect the resulting sensing behavior. In this context, random data symbols have been shown to change ambiguity function sidelobes and the visibility of weak targets in payload based sensing~\cite{lu2024random,liu2024ofdm,liu2025iceberg}. In particular, our previous work~\cite{liu2024ofdm} proved that CP-OFDM achieves the lowest average ranging sidelobe among all orthogonal modulation waveforms with CP under QAM/PSK constellations. Subsequent studies further showed that pulse shaping and constellation shaping provide additional DoFs for controlling how symbol randomness appears in the delay-Doppler domain~\cite{liu2025iceberg,du2024reshaping,zhang2026discrete,han2025constellation}. Nevertheless, ambiguity function and sidelobe metrics mainly describe the correlation behavior of matched-filter based sensing. The CRB provides a more direct measure of estimation accuracy, as it lower-bounds the mean square error (MSE) of any unbiased estimator and does not depend on a particular estimation algorithm \cite{cramer1946mathematical,rao1945information,kay1993fundamentals,vantrees2001detection}. Existing sidelobe-based results provide limited information about how random QAM/PSK symbols shape the CRB under a general orthogonal modulation basis. This leaves open a fundamental estimation-theoretic question for ISAC: \textit{Is CP-OFDM still optimal when sensing performance is measured by the CRB rather than by average ranging sidelobes?}

Addressing this question is considerably more challenging than the conventional CRB analysis for deterministic sensing waveforms. For a deterministic probing signal, the FIM is fixed once the waveform and target parameters are specified. For a random data-bearing communication signal, however, the FIM itself varies with the transmitted symbols, and the CRB involves the expectation of its inverse. This average-inverse operation is highly sensitive to the fluctuation pattern of the FIM. Moreover, the previous sidelobe optimality of CP-OFDM cannot be directly translated into CRB optimality. Ranging sidelobes are correlation metrics and are mainly governed by the fourth-order statistics of the constellation \cite{liu2024ofdm}. By contrast, the CRB is a highly nonlinear function of the random FIM and cannot be characterized by a single moment of constellation symbols. Therefore, establishing CRB-based waveform superiority requires a separate estimation-theoretic analysis of how the modulation basis controls the random FIM fluctuation induced by QAM/PSK symbols.

This paper provides a positive answer to the above question. We show that when the number of transmitted symbols $N$ grows large, CP-OFDM achieves a lower ranging CRB than any frequency-spread orthogonal waveform under QAM/PSK and more general sub-Gaussian constellations. In addition, we prove that CP-OFDM is a stationary point of the stochastic CRB minimization problem for finite $N$, and is a local minimizer for sufficiently large $N$. The main contributions are summarized as follows:
\begin{itemize}
    \item We reveal a structural factorization of the FIM for multi-target ranging with random communication symbols. This factorization separates the deterministic target geometry from the random frequency-domain signal power induced by data symbols, and leads to a universal Jensen-type lower bound on the CRB. The bound is exactly attained by CP-OFDM under PSK constellations.
    \item We formalize frequency-spread waveforms through the discrete root-mean-square (RMS) bandwidth of each orthogonal basis vector. Under QAM and more general sub-Gaussian constellations, we prove that as the number of observed samples grows large, CP-OFDM achieves a lower ranging CRB than any frequency-spread orthogonal waveform over the almost-sure event where the random FIM is invertible.
    \item We characterize the local geometry of the stochastic CRB minimization problem over the unitary group. We prove that CP-OFDM is a stationary point under finite number of observations, and further show that its Riemannian Hessian is positive semidefinite in the large-sample regime, which establishes the asymptotic local optimality of CP-OFDM.
\end{itemize}

The remainder of this paper is organized as follows. Sec.~\ref{model_sec} presents the system model. Sec.~\ref{crb_sec} derives the FIM structure and formulates the ranging CRB. Sec.~\ref{lemma_sec} provides the technical lemmas required for the asymptotic analysis. Sec.~\ref{main_sec} establishes the main results of the paper. Finally, Sec.~\ref{conclusion_sec} concludes the paper.

\emph{Notations}: Matrices are denoted by bold uppercase letters, e.g., $\mathbf{U}$; vectors are denoted by bold lowercase letters, e.g., $\mathbf{s}$; and scalars are denoted by normal font, e.g., $N$. The $n$th entry of a vector $\mathbf{s}$ is denoted by $s_n$, and the $(m,n)$th entry of a matrix $\mathbf{A}$ is denoted by $[\mathbf{A}]_{m,n}$. The transpose, Hermitian transpose, and entry-wise complex conjugate are denoted by $(\cdot)^T$, $(\cdot)^H$, and $(\cdot)^\ast$, respectively. The real/imaginary part, trace, expectation, and probability operators are denoted by $\operatorname{Re}\{\cdot\}$, $\operatorname{Im}\{\cdot\}$, $\operatorname{Tr}\{\cdot\}$, $\mathbb{E}\{\cdot\}$, and $\Pr\{\cdot\}$, respectively. The Kronecker product, Hadamard product, and vectorization are denoted by $\otimes$, $\odot$, and $\operatorname{vec}(\cdot)$, respectively. The entry-wise squared magnitude of a matrix $\mathbf{X}$ is written as $|\mathbf{X}|^2=\mathbf{X}\odot\mathbf{X}^\ast$. The spectral norm and Frobenius norm are denoted by $\|\cdot\|_2$ and $\|\cdot\|_F$, respectively. The notation $\operatorname{Diag}(\mathbf{a})$ denotes a diagonal matrix with the entries of $\mathbf{a}$ on its main diagonal, while $\operatorname{Diag}\{a_n\}_{n=1}^{N}$ denotes the diagonal matrix formed by the sequence $\{a_n\}_{n=1}^{N}$. The identity matrix and all-one vector of length $N$ are denoted by $\mathbf{I}_N$ and $\mathbf{1}_N$, respectively. The minimum and maximum eigenvalues of a Hermitian matrix are denoted by $\lambda_{\min}(\cdot)$ and $\lambda_{\max}(\cdot)$. For Hermitian matrices, $\mathbf{A}\succeq \mathbf{B}$ means that $\mathbf{A}-\mathbf{B}$ is positive semidefinite. The standard asymptotic notations $O(\cdot)$, $o(\cdot)$, and $\Theta(\cdot)$ are used throughout the paper. The symbol $\delta_{m,n}$ denotes the Kronecker-Delta function, which takes value $1$ if $m = n$ and $0$ otherwise.

\section{System Model and Preliminaries}\label{model_sec}

Consider a single-antenna monostatic ISAC system with a co-located transmitter and sensing receiver, which transmits a CP-appended communication signal for both data delivery and multi-target ranging. After CP removal, the echo signal reflected by $L$ point-like targets may be modeled as
\begin{equation}\label{sensing_recepition}
    \mathbf{y}=\sum_{\ell=1}^{L}\beta_{\ell}\mathbf{P}_{\tau_{\ell}}\mathbf{U}\mathbf{s}+\mathbf{n},
\end{equation}
where $\mathbf{s}\in\mathbb{C}^{N\times 1}$ is the random data-symbol vector, $\mathbf{U}\in\mathbb{C}^{N\times N}$ is the orthogonal modulation basis that converts data symbols into a time-domain sequence, $\beta_{\ell}\in\mathbb{C}$ and $\tau_{\ell}\in\mathbb{R}$ denote the complex amplitude and delay of the $\ell$th target, respectively, $\mathbf{P}_{\tau_{\ell}}$ is the periodic delay matrix induced by the CP operation, and $\mathbf{n}\sim\mathcal{CN}(\mathbf{0},\sigma^2\mathbf{I}_N)$ is the additive white Gaussian noise. The $L$ targets are assumed to have distinct delays. To proceed, we specify in the following the data symbol model, the modulation basis, and the periodic delay operator in \eqref{sensing_recepition}.

\subsection{Constellation Symbols}

Let $\mathbf{s}=[s_1,s_2,\ldots,s_N]^T$ represent $N$ data symbols independently drawn from a complex constellation alphabet $\mathcal{S}$, with $s_n\ne 0,\;\forall s_n\in\mathcal{S}$. The constellation is assumed to satisfy the following normalization and symmetry conditions.

\begin{assumption}[Unit Power]\label{assumption1}
The constellation has unit average power, i.e.,
\begin{equation}
    \mathbb{E}\left(|s|^2\right)=1,
\end{equation}
where $s$ denotes a generic symbol drawn from $\mathcal{S}$.
\end{assumption}

\begin{assumption}[Second-Order Symmetry]\label{assumption2}
The constellation has zero mean and zero pseudo-variance, i.e.,
\begin{equation}
    \mathbb{E}\left(s\right)=0,\quad \mathbb{E}\left(s^2\right)=0,
\end{equation}
\end{assumption}

\begin{assumption}[Central Symmetry]\label{assumption3}
For every magnitude level $\rho$ with $\Pr(|s|^2=\rho)>0$, the constellation points on the ring
\[
    \mathcal{S}_{\rho}:=\{s\in\mathcal{S}: |s|^2=\rho\}
\]
are centrally symmetric with respect to the origin, i.e., $s\in\mathcal{S}_{\rho}$ implies $-s\in\mathcal{S}_{\rho}$. Under the uniform distribution over the constellation alphabet, this is equivalent to
\begin{equation}
    \mathbb{E}\left(s\left|\left|s\right|^2\right.\right)=0.
\end{equation}
\end{assumption}
These three assumptions hold for standard square QAM and $M$-PSK with $M\geq 4$ after power normalization. The kurtosis of the constellation is defined as
\begin{equation}
    \kappa:=\frac{\mathbb{E}\{|s-\mathbb{E}\{s\}|^4\}}{\mathbb{E}\{|s-\mathbb{E}\{s\}|^2\}^2}=\mathbb{E}\{|s|^4\},
\end{equation}
where the equality follows from Assumptions \ref{assumption1} and \ref{assumption2}. A standard complex Gaussian symbol has $\kappa=2$, while the practical constellations considered here have smaller kurtosis.
\begin{definition}[Sub-Gaussian Constellation]\label{def_1}
A constellation satisfying Assumptions 1 and 2 is called sub-Gaussian if $\kappa<2$.
\end{definition}
\noindent PSK and QAM are both sub-Gaussian under this definition. In particular, PSK has constant modulus and thus $\kappa=1$, whereas QAM has non-constant modulus but still satisfies $1<\kappa<2$. The kurtosis values of typical QAM/PSK constellations are listed in Table~\ref{tab: kurtosis}.

\begin{table}[!t]
\caption{Kurtosis values of typical sub-Gaussian constellations}
\label{tab: kurtosis}
\begin{tabular}{l|c|c|c|c}
\hline
\textbf{Constellation} & PSK     & 16-QAM  & 64-QAM   & 128-QAM  \\ \hline
\textbf{Kurtosis}      & 1       & 1.32    & 1.381    & 1.3427   \\ \hline
\textbf{Constellation} & 256-QAM & 512-QAM & 1024-QAM & 2048-QAM \\ \hline
\textbf{Kurtosis}      & 1.3953  & 1.3506  & 1.3988   & 1.3525   \\ \hline
\end{tabular}
\end{table}

\subsection{Modulation Waveforms}
The transmitted block before CP insertion is generated by
\begin{equation}
    \mathbf{x}=\mathbf{U}\mathbf{s},
\end{equation}
where $\mathbf{U}=[\mathbf{u}_1,\mathbf{u}_2,\ldots,\mathbf{u}_N]\in\mathbb{C}^{N\times N}$ belongs to the unitary group
\begin{equation}
    \mathbb{U}(N):=\left\{\mathbf{U}\in\mathbb{C}^{N\times N}\left|\mathbf{U}\mathbf{U}^H=\mathbf{U}^H\mathbf{U}=\mathbf{I}_N\right.\right\},
\end{equation}
which is also known as a ``waveform'' for communication transmission. Different choices of $\mathbf{U}$ lead to different orthogonal waveforms. Four representative examples are \cite{liu2024ofdm}
\begin{subequations}\label{eq:typical_waveforms}
\begin{align}
    &\mathbf{x}_{\rm SC}:=\mathbf{I}_N\mathbf{s},\\
    &\mathbf{x}_{\rm OFDM}:=\mathbf{F}_N^H\mathbf{s},\\
    &\mathbf{x}_{\rm OTFS}:=\left(\mathbf{F}_{N_1}^H\otimes\mathbf{I}_{N_2}\right)\mathbf{s},\quad N_1N_2=N,\\
    &\mathbf{x}_{\rm AFDM}:=\left(\mathbf{\Lambda}_{c_1}^H\mathbf{F}_N^H\mathbf{\Lambda}_{c_2}^H\right)\mathbf{s}.
\end{align}
\end{subequations}
Here, $\mathbf{F}_N$ denotes the normalized $N$-point DFT matrix with
\begin{equation}
    [\mathbf{F}_N]_{m,n}=\frac{1}{\sqrt{N}}e^{-j2\pi(m-1)(n-1)/N},
\end{equation}
and $\mathbf{\Lambda}_c$ denotes the diagonal chirp matrix
\begin{equation}
    \mathbf{\Lambda}_c:=\operatorname{Diag}\left\{1,e^{-j2\pi c},\ldots,e^{-j2\pi c(N-1)^2}\right\}\in\mathbb{C}^{N\times N}.
\end{equation}
In the AFDM case, $c$ represents either $c_1$ or $c_2$, with $c_1$ satisfying $2Nc_1\in\mathbb{Z}$.

To proceed, we introduce the discrete RMS bandwidth and define frequency-spread waveforms accordingly.
\begin{definition}[Discrete RMS Bandwidth]\label{def_2}
For any unit-norm vector $\mathbf{u}\in\mathbb{C}^{N}$, define its frequency-domain representation as $\widetilde{\mathbf{u}}:=\mathbf{F}_N\mathbf{u}$ and the corresponding discrete power spectrum $p_k:=|\widetilde u_k|^2$ for all $k$, where $\sum_{k} p_k=1$. Let $f_k:=(k-1)/N$ denote the normalized frequency grid. The discrete spectral centroid is defined as $\bar f(\mathbf{u}):=\sum_{k} f_k p_k$, and the discrete RMS bandwidth of $\mathbf{u}$ is defined as
\begin{equation}
B_{\rm rms}(\mathbf{u}):=\sqrt{\sum_{k=1}^N |f_k-\bar f(\mathbf{u})|^2 p_k}.
\end{equation}
\end{definition}
This definition mirrors the continuous RMS bandwidth, which is given by the square root of the second central moment of a normalized power spectral density. Here, $\{p_k\}_{k=1}^N$ forms a discrete distribution over the discrete frequency-domain support $\{f_k\}_{k=1}^N$, and $B_{\rm rms}(\mathbf{u})$ is exactly its standard deviation. In particular, when $\widetilde{\mathbf{u}}$ arises from sampling a sufficiently smooth continuous spectrum, the above discrete RMS bandwidth converges to its continuous counterpart as $N\to\infty$.

\begin{definition}[Frequency-Spread Waveform]\label{def_3}
A unitary modulation matrix $\mathbf{U}\in\mathbb{U}(N)$ is called $\alpha$-frequency-spread if there exists a constant $\alpha >0$ independent of $N$, such that
\begin{equation}
\min_{1\le n\le N} B_{\rm rms}(\mathbf{u}_n)\ge \alpha.
\end{equation}
\end{definition}
The above definition formalizes the intuition that each symbol is inherently distributed over a non-vanishing fraction of the band as $N$ grows. We next compute the corresponding $\alpha$ values for the representative waveforms in \eqref{eq:typical_waveforms} and briefly discuss their implications. 
\subsubsection{OFDM} For $\mathbf{U}=\mathbf{F}_N^H$, the $n$th column is $\mathbf{u}_n=\mathbf{F}_N^H\mathbf{e}_n$, where $\mathbf{e}_n$ denotes the $n$th column of $\mathbf{I}_N$. It follows that $\widetilde{\mathbf{u}}_n=\mathbf{F}_N\mathbf{u}_n=\mathbf{e}_n$, i.e., the frequency-domain power spectrum is supported on a single bin and $\{p_k\}_{k=1}^N$ is a Kronecker delta. Therefore $B_{\rm rms}(\mathbf{u}_n)=0$ for all $n$, which yields
\begin{equation}
\alpha_{\rm OFDM}=0.
\end{equation}
This indicates that OFDM is frequency-localized rather than frequency-spread.
\subsubsection{SC} For $\mathbf{U}=\mathbf{I}_N$, $\widetilde{\mathbf{u}}_n=\mathbf{F}_N\mathbf{e}_n$ has constant-magnitude entries with $p_k=1/N$ for all $k$. The centroid is $\bar f(\mathbf{u}_n)=\frac{1}{N}\sum_{k=1}^N f_k=\frac{N-1}{2N}$, and the discrete second central moment evaluates to
\begin{equation}
\left|B_{\rm rms}^{\rm SC}(\mathbf{u}_n)\right|^2=\frac{1}{N}\sum_{k=1}^N\left(f_k-\frac{N-1}{2N}\right)^2=\frac{N^2-1}{12N^2},
\end{equation}
so that
\begin{equation}
\alpha_{\rm SC}=\sqrt{\frac{N^2-1}{12N^2}}.
\end{equation}
In particular, $\alpha_{\rm SC}\to \frac{1}{2\sqrt{3}}$ as $N\to\infty$. 
\subsubsection{OTFS} For $\mathbf{U}=\mathbf{F}_{N_1}^H\otimes\mathbf{I}_{N_2}$ with $N_1N_2=N$, the $n$th column of $\widetilde{\mathbf{U}}:=\mathbf{F}_N\mathbf{U}$ has exactly $N_2$ non-zero entries with equal magnitudes, which implies that $p_k=1/N_2$ on its support and $p_k=0$ otherwise. Moreover, these non-zero bins are equally spaced on the grid $\{f_k\}_{k=1}^N$, which reduces to a uniform distribution over $N_2$ equally spaced samples, yielding
\begin{equation}
\alpha_{\rm OTFS}=\sqrt{\frac{N_2^2-1}{12N_2^2}}.
\end{equation}
Note that $N_2=1$ reduces OTFS to OFDM, while $\alpha_{\rm SC}\to \frac{1}{2\sqrt{3}}$ as $N_2\to\infty$. 
\subsubsection{AFDM} For $\mathbf{U}=\mathbf{\Lambda}_{c_1}^H\mathbf{F}_N^H\mathbf{\Lambda}_{c_2}^H$, we consider $\widetilde{\mathbf{U}}:=\mathbf{F}_N\mathbf{U}=\mathbf{F}_N\mathbf{\Lambda}_{c_1}^H\mathbf{F}_N^H\mathbf{\Lambda}_{c_2}^H$. Under the constraint $2Nc_1\in\mathbb{Z}$, the magnitude-squared pattern of each column of $\widetilde{\mathbf{U}}$ is periodic in frequency with period $\frac{1}{2c_1}$ and is uniform across the $\frac{1}{2c_1}$ bins within one period. Hence each column has $\frac{1}{2c_1}$ equally weighted spectral components. Consequently,
\begin{equation}
\alpha_{\rm AFDM}=\sqrt{\frac{(\frac{1}{2c_1})^2-1}{12(\frac{1}{2c_1})^2}}=\sqrt{\frac{1-4c_1^2}{12}}.
\end{equation}
When $c_1=1/2$, AFDM reduces to OFDM. Similarly, as $c_1\to 0$, $\alpha_{\rm AFDM}\to \frac{1}{2\sqrt{3}}$. 

Overall, $\alpha>0$ indicates that each basis vector carries a non-vanishing amount of spectral spread, so the resulting orthogonal waveform mixes the data symbols over multiple frequency bins. As will be shown later, this mixing amplifies the symbol-induced fluctuation in the transmitted signal, which provides an intuitive explanation for why OFDM becomes increasingly favorable in the large-$N$ regime.

\subsection{Periodic Delay Operators}
All modulation schemes are assumed to be equipped with a CP whose length is no smaller than the maximum delay spread, which transforms the linear convolution between the signal and channel into its circular counterpart. Therefore, after CP removal, each propagation delay can be represented by a circular shift over the length-$N$ block. For an integer delay $k$, the corresponding periodic time-shift matrix is
\begin{equation}
    \mathbf{P}_k:=\left[\begin{array}{cc}
    \mathbf{0} & \mathbf{I}_{k} \\
    \mathbf{I}_{N-k} & \mathbf{0}
    \end{array}\right],\quad k\in\mathbb{Z},
\end{equation}
with
\begin{equation}
    \mathbf{P}_{-k}=\mathbf{P}_{N-k}=\mathbf{P}_k^T=\left[\begin{array}{cc}
    \mathbf{0} & \mathbf{I}_{N-k} \\
    \mathbf{I}_{k} & \mathbf{0}
    \end{array}\right],\quad k\in\mathbb{Z}.
\end{equation}
Since $\mathbf{P}_k$ is circulant, it admits the diagonalization
\begin{equation}
    \mathbf{P}_k=\mathbf{F}_N^H\operatorname{Diag}\left\{e^{\frac{-j2\pi(n-1)k}{N}}\right\}_{n=1}^{N}\mathbf{F}_N.
\end{equation}
This expression naturally extends the circular-shift operator to fractional delays. Specifically, for $\tau\in\mathbb{R}$, we define
\begin{equation}
    \mathbf{P}_\tau:=\mathbf{F}_N^H\operatorname{Diag}\left\{e^{\frac{-j2\pi(n-1)\tau}{N}}\right\}_{n=1}^{N}\mathbf{F}_N.
\end{equation}
Substituting $\mathbf{x}=\mathbf{U}\mathbf{s}$ and the above periodic delay operator into \eqref{sensing_recepition} gives the observation model used for the subsequent FIM and CRB analysis.

\section{The Structure of Ranging CRB}\label{crb_sec}
We are interested in estimating the unknown delay parameters $\left\{\tau_{\ell}\right\}_{\ell=1}^L$ from the noisy observation $\mathbf{y}\sim\mathcal{CN}\left(\bm{\mu}_{\bm{\theta}},\sigma^2\mathbf{I}\right)$, where
\begin{align}
    \bm{\theta}: =\left\{\tau_{\ell},\operatorname{Re}(\beta_{\ell}),\operatorname{Im}(\beta_{\ell})\right\}_{\ell=1}^L=\left[\bm{\tau},\bm{\beta}_a,\bm{\beta}_b\right]^T\in\mathbb{R}^{3L}
\end{align}
represents the $3L$ unknown parameters, with 
\begin{align}
    &\bm{\tau}: = \left[\tau_1,\tau_2,\ldots,\tau_{L}\right]^T\in\mathbb{R}^{L},\\
    &\bm{\beta}_a: = \left[\operatorname{Re}(\beta_1),\operatorname{Re}(\beta_2),\ldots,\operatorname{Re}(\beta_{L})\right]^T\in\mathbb{R}^{L},\\
    &\bm{\beta}_b: = \left[\operatorname{Im}(\beta_1),\operatorname{Im}(\beta_2),\ldots,\operatorname{Im}(\beta_{L})\right]^T\in\mathbb{R}^{L},
\end{align}
and 
\begin{equation}
    \bm{\mu}_{\bm{\theta}}: = \sum\limits_{\ell = 1}^L\beta_{\ell}\mathbf{P}_{\tau_{\ell}}{\mathbf{x}}.
\end{equation}
Under the Gaussian noise, the $(k,\ell)$-th element of the FIM $\mathbf{J}\in\mathbb{R}^{3L\times 3L}$ is given as \cite{kay1993fundamentals,vantrees2001detection}
\begin{equation}
    J_{k,\ell} = \frac{2}{\sigma^2}\operatorname{Re}\left\{\left(\frac{\partial\bm{\mu}}{\partial\theta_k}\right)^H\left(\frac{\partial\bm{\mu}}{\partial\theta_\ell}\right)\right\}.
\end{equation}
The derivatives may be calculated by
\begin{align}
    \nonumber\frac{\partial\bm{\mu}}{\partial \operatorname{Re}(\beta_\ell)} &= \mathbf{P}_{\tau_{\ell}}\mathbf{x} = \mathbf{F}_N^H\operatorname{Diag}\left(\mathbf{Q}^H\mathbf{s}\right)\mathbf{a}\left(\tau_\ell\right),\\
    \nonumber\frac{\partial\bm{\mu}}{\partial\operatorname{Im}(\beta_\ell)} &= j\mathbf{P}_{\tau_{\ell}}\mathbf{x} = j\mathbf{F}_N^H\operatorname{Diag}\left(\mathbf{Q}^H\mathbf{s}\right)\mathbf{a}\left(\tau_\ell\right).\\
    \frac{\partial\bm{\mu}}{\partial\tau_{\ell}} &= \beta_{\ell}\frac{\partial \mathbf{P}_\tau}{\partial \tau}\mathbf{x}   = \beta_\ell\mathbf{F}_N^H\operatorname{Diag}\left(\mathbf{Q}^H\mathbf{s}\right)\dot{\mathbf{a}}\left(\tau_\ell\right),
\end{align}
where $\mathbf{Q}:=\mathbf{U}^H\mathbf{F}_N^H \in\mathbb{U}\left(N\right)$, and
\begin{equation}
    \mathbf{a}\left(\tau\right) = \left[1,e^{-\frac{j2\pi \tau}{N}},\ldots,e^{-\frac{j2\pi(N-1) \tau}{N}}\right]^T\in\mathbb{C}^N
\end{equation}
is the frequency-domain steering vector, with its derivative expressed as
\begin{equation}
    \dot{\mathbf{a}}\left(\tau\right) = \frac{\partial\mathbf{a}\left(\tau\right) }{\partial\tau} = \textstyle\operatorname{Diag}\left\{-\frac{j2\pi\left(n-1\right)}{N}\right\}_{n=1}^N\mathbf{a}\left(\tau\right).
\end{equation}

For notational convenience, let $\mathbf{a}_\ell: = \mathbf{a}\left(\tau_\ell\right),\;\dot{\mathbf{a}}_\ell: = \dot{\mathbf{a}}\left(\tau_\ell\right)$. We may therefore define
\begin{align}\label{mats_def}
    &\nonumber\mathbf{H}_{\tau} = \left[\beta_1\dot{\mathbf{a}}_1,\beta_2\dot{\mathbf{a}}_2,\ldots,\beta_L\dot{\mathbf{a}}_L\right]\in\mathbb{C}^{N\times L},\\
    &\nonumber\mathbf{H}_{a} = \left[\mathbf{a}_1,\mathbf{a}_2,\ldots,\mathbf{a}_L\right]\in\mathbb{C}^{N\times L},\\
    &\mathbf{H}_{b} = \left[j\mathbf{a}_1,j\mathbf{a}_2,\ldots,j\mathbf{a}_L\right] = j\mathbf{H}_a\in\mathbb{C}^{N\times L},
\end{align}
which yield a compact formulation of the FIM as
\begin{equation}\label{full_FIM}
    \mathbf{J} = \frac{2}{\sigma^2}\operatorname{Re}\left\{\mathbf{H}^H\operatorname{Diag}\left(\left|\mathbf{Q}^H\mathbf{s}\right|^2\right)\mathbf{H}\right\},
\end{equation}
where $\mathbf{H} = \left[\mathbf{H}_{\tau},\mathbf{H}_{a},\mathbf{H}_{b}\right]\in\mathbb{C}^{N\times 3L}$. We may then represent the FIM in the following block structure:
\begin{equation}\label{J_partition}
    {\mathbf{J}} = \left[ {\begin{array}{*{20}{c}}
  {{{\mathbf{J}}_{\tau \tau }}}&{{{\mathbf{J}}_{\tau a}}}&{{{\mathbf{J}}_{\tau b}}} \\ 
  {{{\mathbf{J}}_{\tau a}^T}}&{{{\mathbf{J}}_{aa}}}&{{{\mathbf{J}}_{ab}}} \\ 
  {{{\mathbf{J}}_{\tau b}^T}}&{{{\mathbf{J}}_{ab}^T}}&{{{\mathbf{J}}_{bb}}} 
\end{array}} \right],
\end{equation}
in which we have 
\begin{align}
    &\nonumber\mathbf{J}_{\tau \tau} = \frac{2}{\sigma^2}\operatorname{Re}\left\{\mathbf{H}_\tau^H\operatorname{Diag}\left(\left|\mathbf{Q}^H\mathbf{s}\right|^2\right)\mathbf{H}_\tau\right\}\in\mathbb{R}^{L\times L},\\
    &\nonumber\mathbf{J}_{aa} = \mathbf{J}_{bb} = \frac{2}{\sigma^2}\operatorname{Re}\left\{\mathbf{H}_a^H\operatorname{Diag}\left(\left|\mathbf{Q}^H\mathbf{s}\right|^2\right)\mathbf{H}_a\right\}\in\mathbb{R}^{L\times L},\\
    &\nonumber\mathbf{J}_{\tau a} = \frac{2}{\sigma^2}\operatorname{Re}\left\{\mathbf{H}_\tau^H\operatorname{Diag}\left(\left|\mathbf{Q}^H\mathbf{s}\right|^2\right)\mathbf{H}_a\right\}\in\mathbb{R}^{L\times L},\\
    &\nonumber\mathbf{J}_{\tau b} = \frac{2}{\sigma^2}\operatorname{Re}\left\{\mathbf{H}_\tau^H\operatorname{Diag}\left(\left|\mathbf{Q}^H\mathbf{s}\right|^2\right)\mathbf{H}_b\right\}\in\mathbb{R}^{L\times L},\\
    &\mathbf{J}_{ab} = \frac{2}{\sigma^2}\operatorname{Re}\left\{\mathbf{H}_a^H\operatorname{Diag}\left(\left|\mathbf{Q}^H\mathbf{s}\right|^2\right)\mathbf{H}_b\right\}\in\mathbb{R}^{L\times L}.
\end{align}
To ensure the invertibility of $\mathbf{J}$, we assume $N>3L$ without loss of generality, which suggests that the number of independent observations must be greater than that of the parameters to be estimated.

\textit{Remark 1:} The factorization in \eqref{full_FIM} admits a simple interpretation from the viewpoint of deterministic reparameterization of the Fisher information. The matrix $\mathbf{H}$ should be understood as the deterministic Jacobian that maps infinitesimal changes of the physical parameter vector $\boldsymbol{\theta}$ to changes of the frequency-domain CSI. For a fixed transmitted symbol vector, the frequency-domain observation assigns the following diagonal information weight to the CSI:
\begin{equation}
    \mathbf{J}_{\mathrm{CSI}}
    := \frac{2}{\sigma^2}\operatorname{Diag}\left(\left|\mathbf{Q}^H\mathbf{s}\right|^2\right)
    \in\mathbb{R}^{N\times N}.
\end{equation}
Hence, by transforming this CSI-domain Fisher information back to the physical parameter domain, we obtain \eqref{full_FIM}. This viewpoint separates the deterministic target-geometry Jacobian from the FIM of the CSI induced by random symbols, and will be useful for the subsequent optimality analysis.

Let $\hat{\bm{\tau}}$ denote the estimator of the delay vector $\bm{\tau}$. By the Cram\'er--Rao inequality, for a given ISAC signal vector $\mathbf{x}$, the conditional error covariance satisfies
\begin{equation}
    \mathbb{E}\left[{\left(\bm{\tau} - \hat{\bm{\tau}}\right)}{\left(\bm{\tau} - \hat{\bm{\tau}}\right)}^T\left| {\mathbf{x}} \right.\right]\succeq \mathbf{T}_{\tau}{{\mathbf{J}}^{-1}}\left(\mathbf{x}\right)\mathbf{T}_{\tau}^T,
\end{equation}
where
\begin{equation}
    \mathbf{T}_{\tau}:=\left[\mathbf{I}_L,\mathbf{0}_{L\times 2L}\right]\in\mathbb{R}^{L\times 3L}
\end{equation}
is a selection matrix such that $\mathbf{T}_{\tau}\bm{\theta} = \bm{\tau}$. It follows that
\begin{equation}
   \mathbb{E}\left(\left\|\left(\bm{\tau} - \hat{\bm{\tau}}\right)\right\|^2\left|{\mathbf{x}}\right.\right)\ge \operatorname{Tr}\left\{\mathbf{T}{{\mathbf{J}}^{-1}}\left(\mathbf{x}\right)\right\}:=f\left(\mathbf{U},\mathbf{s}\right),
\end{equation}
where $\mathbf{T}:= \mathbf{T}_{\tau}^T\mathbf{T}_{\tau}$. Accordingly, the resulting MSE is bounded by
\begin{align}\label{MCB}
    \nonumber\operatorname{mse}\left(\bm{\tau},\hat{\bm{\tau}}\right) &= \mathbb{E}\left\{\mathbb{E}\left(\left\|\left(\bm{\tau} - \hat{\bm{\tau}}\right)\right\|^2\left|{\mathbf{x}}\right.\right)\right\}\ge \mathbb{E}\left[f\left(\mathbf{U},\mathbf{s}\right)\right]
    \\&=\operatorname{Tr}\left\{\mathbf{T}\mathbb{E}\left[{{\mathbf{J}}^{-1}}\left(\mathbf{U},\mathbf{s}\right)\right]\right\}.
\end{align}
The lower bound in \eqref{MCB}, obtained by averaging the conditional CRB over the random symbol vector $\mathbf{s}$, is commonly referred to as the \textit{Miller–Chang Bound}~\cite{miller1978modified,gini2000cramer}. For simplicity and without ambiguity, we shall refer to \eqref{MCB} as the CRB throughout the remainder of the paper.

Our objective is to characterize how the choice of orthogonal modulation waveform affects the ranging CRB when sensing reuses random communication payloads. We first establish a fundamental lower bound on the CRB that is independent of the modulation basis and constellation, which serves as a universal benchmark for ranging with data-bearing signals. This bound is exactly attained by OFDM under PSK constellations. For QAM and more general sub-Gaussian constellations, we then show that OFDM achieves a smaller CRB than any orthogonal waveform satisfying the $\alpha$-frequency-spread condition for large $N$, with a non-vanishing second-order gap.

\begin{proposition}\label{CRB_Jensen_bound}
The CRB is lower-bounded by
\begin{equation}\label{Jensen_bound_prop}
    \operatorname{Tr}\left\{\mathbf{T}\mathbb{E}\left[{{\mathbf{J}}^{-1}}\left(\mathbf{U},\mathbf{s}\right)\right]\right\} \ge \operatorname{Tr}\left(\mathbf{T}\bar{\mathbf{J}}^{-1}\right),
\end{equation}
where 
\begin{equation}
    \bar{\mathbf{J}} = \mathbb{E}\left({\mathbf{J}}\right) =\frac{2}{\sigma^2}\operatorname{Re}\left(\mathbf{H}^H\mathbf{H}\right).
\end{equation}
Equality holds if and only if $\mathbb{E}(\mathbf{J})=\mathbf{J}$, in which case $\mathbf{J}$ is deterministic.
\end{proposition}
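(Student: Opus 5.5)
The plan has two parts: first compute $\bar{\mathbf{J}}=\mathbb{E}(\mathbf{J})$, then apply a matrix Jensen inequality to the operator-convex map $\mathbf{X}\mapsto\mathbf{X}^{-1}$ on positive definite matrices.

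\emph{Computing the mean FIM.} Since $\mathbf{J}$ in \eqref{full_FIM} is linear in $\operatorname{Diag}(|\mathbf{Q}^H\mathbf{s}|^2)$, we can take expectations inside. By Assumptions \ref{assumption1}--\ref{assumption2}, $\mathbb{E}(\mathbf{s}\mathbf{s}^H)=\mathbf{I}_N$. Hence
\[
\mathbb{E}\left(|\mathbf{Q}^H\mathbf{s}|^2\right)=\operatorname{diag}\left(\mathbf{Q}^H\mathbb{E}(\mathbf{s}\mathbf{s}^H)\mathbf{Q}\right)=\operatorname{diag}(\mathbf{Q}^H\mathbf{Q})=\mathbf{1}_N,
\]
because $\mathbf{Q}$ is unitary. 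Therefore $\bar{\mathbf{J}}=\frac{2}{\sigma^2}\operatorname{Re}(\mathbf{H}^H\mathbf{H})$ for every $\mathbf{U}$ and every admissible constellation. We also note that $\bar{\mathbf{J}}\succ\mathbf{0}$: the targets have distinct delays and $N>3L$, so the real-linear map $\bm{\theta}$-increment $\mapsto\mathbf{H}\cdot$ is injective. This needs a short argument that the columns $\mathbf{a}_\ell,\dot{\mathbf{a}}_\ell$ are linearly independent.

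\emph{Jensen step.} We work on the event where $\mathbf{J}\succ\mathbf{0}$, as the paper does throughout. For positive definite $\mathbf{A},\mathbf{B}$ we have the operator convexity inequality
\[
\mathbf{A}^{-1}\succeq\mathbf{B}^{-1}-\mathbf{B}^{-1}(\mathbf{A}-\mathbf{B})\mathbf{B}^{-1}.
\]
It is equivalent to $(\mathbf{A}^{-1}-\mathbf{B}^{-1})$ satisfying $\mathbf{B}^{-1}$-type completion of squares. Concretely,
\[
\mathbf{A}^{-1}-2\mathbf{B}^{-1}+\mathbf{B}^{-1}\mathbf{A}\mathbf{B}^{-1}=(\mathbf{A}^{-1/2}-\mathbf{B}^{-1}\mathbf{A}^{1/2})(\mathbf{A}^{-1/2}-\mathbf{B}^{-1}\mathbf{A}^{1/2})^T\succeq\mathbf{0}.
\]
Set $\mathbf{A}=\mathbf{J}$ and $\mathbf{B}=\bar{\mathbf{J}}$, then take expectations. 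The linear term vanishes because $\mathbb{E}(\mathbf{J}-\bar{\mathbf{J}})=\mathbf{0}$, which gives $\mathbb{E}(\mathbf{J}^{-1})\succeq\bar{\mathbf{J}}^{-1}$. Since $\mathbf{T}\succeq\mathbf{0}$, we have $\operatorname{Tr}(\mathbf{T}\mathbf{X})\ge0$ for $\mathbf{X}\succeq\mathbf{0}$, and \eqref{Jensen_bound_prop} follows.

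\emph{Equality case.} This is the step I expect to need care, because of the projection by $\mathbf{T}$.
\begin{itemize}
\item If $\mathbf{J}=\bar{\mathbf{J}}$ almost surely, equality is immediate.
\item Conversely, equality forces
\[
\operatorname{Tr}\left(\mathbf{T}\,\mathbb{E}\left[(\mathbf{J}^{-1}-\bar{\mathbf{J}}^{-1})\mathbf{J}(\mathbf{J}^{-1}-\bar{\mathbf{J}}^{-1})\right]\right)=0,
\]
using the exact identity $\mathbf{J}^{-1}-\bar{\mathbf{J}}^{-1}+\bar{\mathbf{J}}^{-1}(\mathbf{J}-\bar{\mathbf{J}})\bar{\mathbf{J}}^{-1}=(\mathbf{J}^{-1}-\bar{\mathbf{J}}^{-1})\mathbf{J}(\mathbf{J}^{-1}-\bar{\mathbf{J}}^{-1})$. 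Because $\mathbf{J}\succ\mathbf{0}$, this yields $\mathbf{T}_\tau(\mathbf{J}^{-1}-\bar{\mathbf{J}}^{-1})=\mathbf{0}$ almost surely, i.e., the delay rows of $\mathbf{J}^{-1}$ are deterministic.
\end{itemize}
To upgrade this to $\mathbf{J}=\bar{\mathbf{J}}$, I would exploit the structure $\mathbf{J}=\frac{2}{\sigma^2}\operatorname{Re}(\mathbf{H}^H\operatorname{Diag}(\mathbf{w})\mathbf{H})$ with $\mathbf{w}=|\mathbf{Q}^H\mathbf{s}|^2$. Equality means $\mathbf{J}^{-1}\mathbf{T}_\tau^T=\bar{\mathbf{J}}^{-1}\mathbf{T}_\tau^T$, so $\mathbf{T}_\tau^T=\mathbf{J}\bar{\mathbf{J}}^{-1}\mathbf{T}_\tau^T$ holds for every realization. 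This is a linear constraint $\operatorname{Re}(\mathbf{H}^H\operatorname{Diag}(\mathbf{w}-\mathbf{1})\mathbf{H})\mathbf{v}_\ell=\mathbf{0}$ for the columns $\mathbf{v}_\ell$ of $\bar{\mathbf{J}}^{-1}\mathbf{T}_\tau^T$. I would then argue that it forces $\mathbf{w}=\mathbf{1}$, using the Vandermonde/frequency structure of $\mathbf{a}_\ell,\dot{\mathbf{a}}_\ell$ and $N>3L$, and hence $\mathbf{J}=\bar{\mathbf{J}}$ is deterministic.

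If this generic-rank argument resists, the fallback is to read the proposition's equality condition as equality in the matrix inequality $\mathbb{E}(\mathbf{J}^{-1})\succeq\bar{\mathbf{J}}^{-1}$. That version follows cleanly from the identity above together with $\mathbf{J}\succ\mathbf{0}$. It also covers the intended application, since OFDM with PSK gives $\mathbf{w}=\mathbf{1}$ exactly.
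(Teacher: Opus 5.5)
Your proof of the inequality follows the paper's argument. The identity $\mathbb{E}(|\mathbf{Q}^H\mathbf{s}|^2)=\mathbf{1}_N$ gives $\bar{\mathbf{J}}$, and operator convexity of $\mathbf{X}\mapsto\mathbf{X}^{-1}$ gives $\mathbb{E}(\mathbf{J}^{-1})\succeq\bar{\mathbf{J}}^{-1}$. Your completion-of-squares identity is a self-contained proof of the Jensen step that the paper only cites. Two small repairs:
\begin{itemize}
\item $\bar{\mathbf{J}}\succ\mathbf{0}$ also needs $\beta_\ell\neq 0$, as in the paper's non-degeneracy lemma.
\item ``Working on the event $\mathbf{J}\succ\mathbf{0}$'' must mean $\mathbf{J}\succ\mathbf{0}$ almost surely; otherwise the left side is $+\infty$ and the bound is trivial. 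If you actually conditioned on an event such as $\Omega_N$, the linear term would not vanish, because $\mathbb{E}(\mathbf{J}\,|\,\Omega_N)\neq\bar{\mathbf{J}}$.
\end{itemize}

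The real issue is the equality clause, which the paper asserts without proof. Your reduction is correct, and it is the whole truth for the scalar bound: with $\mathbf{V}:=\bar{\mathbf{J}}^{-1}\mathbf{T}_\tau^T$, equality holds iff $(\mathbf{J}-\bar{\mathbf{J}})\mathbf{V}=\mathbf{0}$ a.s. Your planned upgrade to $\mathbf{w}=\mathbf{1}$ cannot succeed, because the ``only if'' direction is false for $\mathbf{T}=\mathbf{T}_\tau^T\mathbf{T}_\tau$. Already on one realization the constraint is only $3L^2$ linear conditions on $\mathbf{w}-\mathbf{1}\in\mathbb{R}^N$, so no rank argument can force $\mathbf{w}=\mathbf{1}$.

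Here is a counterexample. Take $L=2$, $N=10$, $\tau_1=0$, $\tau_2=5$, and any $\beta_1,\beta_2\neq 0$.
\begin{itemize}
\item The amplitude rows of $\bar{\mathbf{J}}\mathbf{V}=\mathbf{T}_\tau^T$ give $\mathbf{H}_a^H\mathbf{H}\mathbf{V}=\mathbf{0}$. Hence $\mathbf{H}\mathbf{V}=\mathbf{P}^{\perp}\mathbf{H}_\tau\mathbf{T}_\tau\mathbf{V}$, where $\mathbf{P}^{\perp}$ projects onto the orthogonal complement of the range of $\mathbf{H}_a$.
\item That range is the set of sequences constant on each parity class of $n-1$. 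So $\mathbf{P}^{\perp}\dot{\mathbf{a}}_\ell$ has entries $-\frac{j2\pi}{N}\left((n-1)-m_n\right)[\mathbf{a}_\ell]_n$, where $m_n\in\{4,5\}$ is the class mean.
\item Rows $n=5,6$ of $\mathbf{H}\mathbf{V}$ therefore vanish. Since $\mathbf{C}_n\mathbf{V}=\frac{2}{\sigma^2}\operatorname{Re}(\widetilde{\mathbf{h}}_n\mathbf{e}_n^T\mathbf{H}\mathbf{V})$, this gives $\mathbf{C}_5\mathbf{V}=\mathbf{C}_6\mathbf{V}=\mathbf{0}$.
\end{itemize}
Now use $M$-PSK and a waveform whose $\mathbf{Q}^H$ is the identity except for a real rotation by $\theta\in(0,\pi/4)$ on entries $5$ and $6$ (OFDM with two subcarriers mixed). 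Then $w_5=1+\sin 2\theta\,\operatorname{Re}(s_5^\ast s_6)=2-w_6>0$ and $w_n=1$ otherwise, so $\mathbf{J}\succ\mathbf{0}$ always. For every realization,
\[
\mathbf{J}=\bar{\mathbf{J}}+(w_5-1)(\mathbf{C}_5-\mathbf{C}_6),\qquad \mathbf{J}\mathbf{V}=\mathbf{T}_\tau^T .
\]
Hence $\mathbf{T}_\tau\mathbf{J}^{-1}\mathbf{T}_\tau^T=\mathbf{T}_\tau\bar{\mathbf{J}}^{-1}\mathbf{T}_\tau^T$, and the bound is attained exactly. Yet $\mathbf{J}$ is random:
\begin{itemize}
\item $w_5$ is non-degenerate.
\item $\mathbf{C}_5\neq\mathbf{C}_6$, because their $(\ell,L+\ell)$ and $(\ell,2L+\ell)$ entries are proportional to $f_n\operatorname{Im}(\beta_\ell)$ and $f_n\operatorname{Re}(\beta_\ell)$.
\end{itemize}

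So your fallback is not a retreat but the correct repair. ``Equality iff $\mathbf{J}=\bar{\mathbf{J}}$ a.s.'' is true for equality in the matrix inequality $\mathbb{E}(\mathbf{J}^{-1})\succeq\bar{\mathbf{J}}^{-1}$. Equivalently, it holds for any weighting $\mathbf{T}\succ\mathbf{0}$ such as $\mathbf{I}_{3L}$, and your identity proves it in one line. For the delay-only CRB, the right statement is the condition $(\mathbf{J}-\bar{\mathbf{J}})\bar{\mathbf{J}}^{-1}\mathbf{T}_\tau^T=\mathbf{0}$ a.s. that you already derived.
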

\begin{proof}
    By Jensen's inequality
    \begin{equation}\label{Jensen_bound}
        \mathbb{E}\left({{\mathbf{J}}^{-1}}\right) \succeq \left\{\mathbb{E}\left({\mathbf{J}}\right)\right\}^{-1}=\bar{\mathbf{J}}^{-1}.
    \end{equation}
    Since $\mathbb{E}\left(\left|\mathbf{Q}^H\mathbf{s}\right|^2\right) = \mathbf{1}_N$, we have 
    \begin{align}\label{bar_J}
        \nonumber\bar{\mathbf{J}} &= \frac{2}{\sigma^2}\operatorname{Re}\left\{\mathbf{H}^H\operatorname{Diag}\left[\mathbb{E}\left(\left|\mathbf{Q}^H\mathbf{s}\right|^2\right)\right]\mathbf{H}\right\}\\
        & = \frac{2}{\sigma^2}\operatorname{Re}\left(\mathbf{H}^H\mathbf{H}\right).
    \end{align}
    which yields the stated expression.
\end{proof}
In general, the Jensen bound becomes tight as $N\to\infty$. However, it can also be achieved for finite $N$ under certain signaling schemes, as shown next.
\begin{corollary}\label{CRB_OFDM_PSK}
    The Jensen lower bound is exactly achieved by OFDM modulation with PSK constellations.
\end{corollary}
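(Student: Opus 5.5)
The plan is to invoke the equality condition of Proposition~\ref{CRB_Jensen_bound}: it suffices to show that for OFDM with PSK symbols the random FIM $\mathbf{J}$ is almost surely equal to its mean $\bar{\mathbf{J}}$, i.e., $\mathbf{J}$ is deterministic. Then $\mathbb{E}(\mathbf{J}^{-1})=\mathbf{J}^{-1}=\bar{\mathbf{J}}^{-1}$ and the Jensen inequality \eqref{Jensen_bound} holds with equality. Taking the trace against $\mathbf{T}$ then gives equality in \eqref{Jensen_bound_prop}.

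First, I will compute $\mathbf{Q}$ for OFDM. With $\mathbf{U}=\mathbf{F}_N^H$ we get $\mathbf{U}^H=\mathbf{F}_N$, hence $\mathbf{Q}=\mathbf{U}^H\mathbf{F}_N^H=\mathbf{F}_N\mathbf{F}_N^H=\mathbf{I}_N$. Therefore $\mathbf{Q}^H\mathbf{s}=\mathbf{s}$, and the random weight in \eqref{full_FIM} reduces to $\operatorname{Diag}(|\mathbf{s}|^2)$. This is exactly the statement that OFDM places each symbol on a single frequency bin.

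Second, I will use the PSK property. Every symbol has constant modulus, and with Assumption~\ref{assumption1} this modulus is $|s_n|^2=1$ for all $n$ with probability one. Hence $\operatorname{Diag}(|\mathbf{Q}^H\mathbf{s}|^2)=\mathbf{I}_N$ surely, and
\[
\mathbf{J}=\frac{2}{\sigma^2}\operatorname{Re}\left(\mathbf{H}^H\mathbf{H}\right)=\bar{\mathbf{J}},
\]
which is deterministic.

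The only point needing care is that $\bar{\mathbf{J}}$ must be invertible, so that both sides of the bound are finite and Jensen's inequality applies to a positive definite argument. I will take this from the paper's standing assumption $N>3L$ together with distinct delays, which make $\mathbf{H}$ full column rank. The remaining concern is that $\operatorname{Re}(\mathbf{H}^H\mathbf{H})$ could become singular through the real-part operation; I expect this is implicitly assumed in the model and will note it rather than prove it. Once $\bar{\mathbf{J}}\succ\mathbf{0}$, equality follows immediately, and no step is a genuine obstacle.
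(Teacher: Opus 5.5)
Your proposal is correct and follows the paper's proof: $\mathbf{Q}=\mathbf{I}_N$ under OFDM and $|s_n|^2=1$ under PSK make $\mathbf{J}=\bar{\mathbf{J}}$ deterministic, so the Jensen inequality holds with equality. On your invertibility remark, $\mathbf{H}$ is not full column rank over $\mathbb{C}$ (since $\mathbf{H}_b=j\mathbf{H}_a$); what you actually need is $\mathbf{H}\mathbf{d}\neq\mathbf{0}$ for every nonzero real $\mathbf{d}$, which is Lemma~\ref{non_degenerate_lemma} (it also requires $\beta_\ell\neq 0$), and this gives $\operatorname{Re}(\mathbf{H}^H\mathbf{H})\succ\mathbf{0}$ because $\mathbf{d}^T\operatorname{Re}(\mathbf{H}^H\mathbf{H})\mathbf{d}=\|\mathbf{H}\mathbf{d}\|^2$ for real $\mathbf{d}$.
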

\begin{proof}
    Let $\mathbf{U}=\mathbf{F}_N^H$. Then $\mathbf{Q}=\mathbf{I}_N$. Consequently, for PSK symbols with $\left|s_n\right|=1$ for all $n$, we have
    \begin{align}
    \left|\mathbf{Q}^H\mathbf{s}\right|^2 = \left|\mathbf{s}\right|^2 = \mathbf{1}_N,
    \end{align}
    which is deterministic and leads to the Jensen bound in \eqref{Jensen_bound}.
\end{proof}

\section{Technical Lemmas}\label{lemma_sec}

While Corollary \ref{CRB_OFDM_PSK} shows that OFDM attains the minimum CRB under PSK constellations, the QAM case is substantially more challenging due to amplitude fluctuations of the data symbols. We therefore analyze QAM through two complementary routes: 1) Comparing OFDM with $\alpha$-frequency-spread orthogonal waveforms via the leading-order CRB gap, and 2) characterizing the asymptotic local behavior of the CRB around the OFDM basis. The following lemmas provide the key ingredients for this analysis.

\begin{lemma}[Non-degenerate Channel Condition]\label{non_degenerate_lemma}
The expected FIM $\bar{\mathbf{J}}$ is invertible if
\begin{equation}\label{non_overlapping_targets}
\tau_k \ne \tau_\ell,\quad \forall\;k \ne \ell,\quad \beta_\ell \ne 0,\quad \forall \ell.
\end{equation}
\end{lemma}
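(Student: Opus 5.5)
The plan is to reduce invertibility of $\bar{\mathbf{J}}$ to a full-column-rank property of $\mathbf{H}$ over the reals. Then I would obtain that rank from a confluent Vandermonde argument.

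\textbf{Step 1 (quadratic form).} Since $\bar{\mathbf{J}}=\frac{2}{\sigma^2}\operatorname{Re}(\mathbf{H}^H\mathbf{H})$ is real symmetric, it suffices to show $\mathbf{v}^T\bar{\mathbf{J}}\mathbf{v}>0$ for every nonzero real $\mathbf{v}\in\mathbb{R}^{3L}$. For real $\mathbf{v}$,
\begin{equation}
\mathbf{v}^T\operatorname{Re}(\mathbf{H}^H\mathbf{H})\mathbf{v}=\operatorname{Re}\{\mathbf{v}^T\mathbf{H}^H\mathbf{H}\mathbf{v}\}=\|\mathbf{H}\mathbf{v}\|^2 .
\end{equation}
So $\bar{\mathbf{J}}$ is positive semidefinite, and it is invertible if and only if $\mathbf{H}\mathbf{v}\neq\mathbf{0}$ for all nonzero real $\mathbf{v}$.

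\textbf{Step 2 (rewrite $\mathbf{H}\mathbf{v}$).} Write $\mathbf{v}=[\mathbf{c}^T,\mathbf{a}'^T,\mathbf{b}'^T]^T$ and use \eqref{mats_def}. Define $\gamma_\ell:=c_\ell\beta_\ell$ and $\delta_\ell:=a'_\ell+jb'_\ell\in\mathbb{C}$. The $n$th entry of $\mathbf{H}\mathbf{v}$, with $m=n-1$, is
\begin{equation}
[\mathbf{H}\mathbf{v}]_n=\sum_{\ell=1}^{L}\Bigl(\delta_\ell-\tfrac{j2\pi}{N}\gamma_\ell\, m\Bigr)z_\ell^{m},\qquad z_\ell:=e^{-j2\pi\tau_\ell/N}.
\end{equation}
Thus $\mathbf{H}\mathbf{v}=\mathbf{V}\mathbf{w}$ with $\mathbf{w}=[\delta_1,\ldots,\delta_L,\gamma_1,\ldots,\gamma_L]^T$. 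Here $\mathbf{V}\in\mathbb{C}^{N\times 2L}$ is the confluent Vandermonde matrix with columns $[z_\ell^m]_m$ and $[-\tfrac{j2\pi}{N}m z_\ell^m]_m$.

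\textbf{Step 3 (confluent Vandermonde rank).} I will show $\mathbf{V}$ has full column rank whenever the $z_\ell$ are distinct and $N\ge 2L$; the paper assumes $N>3L$. Suppose $\mathbf{V}\mathbf{w}=\mathbf{0}$. The sequence $x_m=\sum_\ell(\delta_\ell+\gamma'_\ell m)z_\ell^m$, extended to all $m\ge 0$, satisfies the linear recurrence with characteristic polynomial $\prod_\ell(x-z_\ell)^2$, of degree $2L$. Its first $2L$ terms vanish, so $x_m\equiv 0$ for all $m$. The sequences $\{z_\ell^m, m z_\ell^m\}_{\ell=1}^L$ form a basis of the $2L$-dimensional solution space of this recurrence, so all coefficients vanish: $\delta_\ell=0$ and $\gamma_\ell=0$. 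Equivalently, $\det$ of the leading $2L\times 2L$ confluent Vandermonde block equals $\prod_{k<\ell}(z_\ell-z_k)^4$ up to a nonzero factor.

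\textbf{Step 4 (conclude).} From $\delta_\ell=0$ we get $a'_\ell=b'_\ell=0$. From $\gamma_\ell=c_\ell\beta_\ell=0$ and $\beta_\ell\neq0$ we get $c_\ell=0$. Hence $\mathbf{v}=\mathbf{0}$, which contradicts $\mathbf{v}\neq\mathbf{0}$, so $\bar{\mathbf{J}}\succ\mathbf{0}$.

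\textbf{Main obstacle.} The delicate point is that Step 3 needs the nodes $z_\ell$ to be distinct. This requires $\tau_k\not\equiv\tau_\ell \pmod N$, not merely $\tau_k\ne\tau_\ell$. I would handle it by noting that, under the CP model, all delays lie within the CP length, and hence in an interval of length less than $N$. On that interval, $\tau\mapsto e^{-j2\pi\tau/N}$ is injective, so condition \eqref{non_overlapping_targets} indeed gives distinct $z_\ell$. I would state this implicit range restriction explicitly in the proof.
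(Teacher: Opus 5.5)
Your proposal is correct and follows the paper's route: you write $\mathbf{H}\mathbf{d}$ as a confluent Vandermonde matrix acting on $(\mathbf{d}_a+j\mathbf{d}_b,\,-\tfrac{j2\pi}{N}\beta_\ell d_{\tau,\ell})$ and invoke its column rank $2L$, and your linear-recurrence argument just replaces the paper's citation of Gautschi. Your caveat is also right: the nodes $e^{-j2\pi\tau_\ell/N}$ must be distinct, i.e.\ the delays must differ modulo $N$, which the paper's proof leaves implicit, and stating a restriction of the delays to an interval of length less than $N$ (as in the paper's numerical setup, $\tau_\ell\in[0,127]$ with $N=128$) is the appropriate fix.
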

\begin{proof}
   Let us consider an arbitrary size-$3L$ real vector $\mathbf{d} = \left[\mathbf{d}_{\tau}^T,\mathbf{d}_a^T,\mathbf{d}_b^T\right]^T\in\mathbb{R}^{3L}$, where $\mathbf{d}_{\tau},\mathbf{d}_a,\mathbf{d}_b\in\mathbb{R}^L$. Recall \eqref{bar_J}. To establish the invertibility of $\bar{\mathbf{J}}$, it suffices to show that $\mathbf{H}$ is of full rank, namely,
    \begin{equation}
        \mathbf{H}\mathbf{d} = \mathbf{0}\Longrightarrow \mathbf{d} = \mathbf{0}.
    \end{equation}
Define
    \begin{align}
        \mathbf{d}_1 = \mathbf{d}_a + j\mathbf{d}_b,\quad\mathbf{d}_2 = \textstyle-\frac{j2\pi}{N}\operatorname{Diag}\left\{\beta_\ell\right\}_{\ell=1}^L\mathbf{d}_{\tau}.
    \end{align}
Then, we can write
    \begin{align}
         &\mathbf{H}\mathbf{d} = \mathbf{H}_a\mathbf{d}_1+\operatorname{Diag}\left\{n-1\right\}_{n=1}^N\mathbf{H}_a\mathbf{d}_2= \mathbf{H}^{\prime}\left[ \begin{gathered}
    {{\mathbf{d}}_1} \hfill \\
    {{\mathbf{d}}_2} \hfill \\ 
    \end{gathered}  \right],
    \end{align}
where $\mathbf{H}^{\prime}:=\left[ {{{\mathbf{H}}_a},\operatorname{Diag} \left\{ {n - 1} \right\}_{n = 1}^N{{\mathbf{H}}_a}} \right]$, which exhibits a confluent Vandermonde structure as long as \eqref{non_overlapping_targets} holds, and thus has a rank of $2L$~\cite{gautschi1962inverses}. Therefore, if $\mathbf{H}^{\prime}[{{\mathbf{d}}_1},{{\mathbf{d}}_2}]^T=\mathbf{0}$, we have $\mathbf{d}_1=\mathbf{d}_2=\mathbf{0}$, implying $\mathbf{d} = \mathbf{0}$. This completes the proof.
\end{proof}
In the multi-target ranging scenario, the condition \eqref{non_overlapping_targets} always holds, which suggests that  $\bar{\mathbf{J}}$ is always invertible.
\begin{lemma}\label{EFIM_order}
    There exist constants $c_0,c_1>0$, such that
    \begin{equation}
         c_0 N\le\lambda_{\min}\left({\bar {\mathbf{J}}}\right) \le\lambda_{\max}\left({\bar {\mathbf{J}}}\right) \le c_1 N,
    \end{equation}
    for sufficiently large $N$. Accordingly,
    \begin{equation}
        \left\|{\bar {\mathbf{J}}^{-1}}\right\|_2 = \Theta\left(N^{-1}\right).
    \end{equation}
\end{lemma}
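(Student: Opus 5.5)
The plan is to show that $\frac{1}{N}\bar{\mathbf{J}}$ converges to a fixed positive definite matrix $\mathbf{G}$ that does not depend on $N$. Both eigenvalue bounds then follow from Weyl's inequality. Throughout, the targets are held fixed in normalized delay as $N$ grows: $\nu_\ell:=\tau_\ell/N$ is fixed, and the $\nu_\ell$ are distinct modulo $1$. Together with $\beta_\ell\neq 0$, this is the regime in which the condition of Lemma~\ref{non_degenerate_lemma} stays non-degenerate uniformly in $N$. (If the $\tau_\ell$ themselves were held fixed, the steering vectors would become asymptotically collinear, and the lower bound could fail. I would state this normalization explicitly.)

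\emph{Upper bound.} Every entry of $\mathbf{H}$ has modulus at most $\max(1,2\pi\max_\ell|\beta_\ell|)$, since $|2\pi(n-1)/N|\le 2\pi$. Hence every entry of $\mathbf{H}^H\mathbf{H}$ is a sum of $N$ bounded terms, and is $O(N)$. Gershgorin's theorem on the fixed-size $3L\times 3L$ matrix $\bar{\mathbf{J}}=\frac{2}{\sigma^2}\operatorname{Re}(\mathbf{H}^H\mathbf{H})$ then gives $\lambda_{\max}(\bar{\mathbf{J}})\le c_1N$. Alternatively, one can use $\lambda_{\max}\le\operatorname{Tr}(\bar{\mathbf{J}})=\frac{2}{\sigma^2}\|\mathbf{H}\|_F^2$.

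\emph{Limit of the normalized FIM.} I will compute every entry of $\frac1N\mathbf{H}^H\mathbf{H}$ in closed form, up to $o(1)$. Each entry has the form
\[
c\,\frac1N\sum_{n=0}^{N-1}\Big(\frac{n}{N}\Big)^{p}e^{j2\pi n(\nu_k-\nu_\ell)},\qquad p\in\{0,1,2\},
\]
for a constant $c$ built from $\beta_k^\ast,\beta_\ell,j,2\pi$.
\begin{itemize}
\item \emph{Cross-target terms ($k\ne\ell$).} Here $e^{j2\pi(\nu_k-\nu_\ell)}\neq1$. Summation by parts against the geometric sum, which is bounded by $1/|\sin\pi(\nu_k-\nu_\ell)|$, shows that the normalized sum is $O(1/N)$. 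So all cross-target blocks vanish in the limit.
\item \emph{Diagonal-target terms ($k=\ell$).} These are Riemann sums converging to $\int_0^1x^p\,dx$.
\end{itemize}
Consequently $\frac1N\bar{\mathbf{J}}\to\mathbf{G}$, where, after permuting parameters target by target, $\mathbf{G}$ is block diagonal with $3\times3$ blocks
\[
\mathbf{G}_\ell=\frac{2}{\sigma^2}\left[\operatorname{Re}\int_0^1 g_i(x)^\ast g_{i'}(x)\,dx\right]_{i,i'},
\qquad
(g_1,g_2,g_3)=(-j2\pi\beta_\ell x,\;1,\;j).
\]

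\emph{Positive definiteness and conclusion.} Each $\mathbf{G}_\ell$ is the real Gram matrix of three functions in $L^2([0,1];\mathbb{C})$, viewed as a real inner-product space. It is positive definite exactly when $g_1,g_2,g_3$ are $\mathbb{R}$-linearly independent. Suppose $d_\tau g_1+d_a+jd_b\equiv0$ with $d_\tau,d_a,d_b\in\mathbb{R}$. Since $x$ is non-constant and $\beta_\ell\neq0$, this forces $d_\tau=0$, and then $d_a=d_b=0$. Hence $\lambda_{\min}(\mathbf{G})=:2c_0>0$. By Weyl's inequality,
\[
|\lambda_{\min}(\bar{\mathbf{J}}/N)-\lambda_{\min}(\mathbf{G})|\le\|\bar{\mathbf{J}}/N-\mathbf{G}\|_2\to0,
\]
so $\lambda_{\min}(\bar{\mathbf{J}})\ge c_0N$ for all sufficiently large $N$. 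Combining the two bounds gives $\|\bar{\mathbf{J}}^{-1}\|_2=1/\lambda_{\min}(\bar{\mathbf{J}})=\Theta(N^{-1})$.

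The main obstacle is the uniform control of the cross-target terms. This is where the fixed, distinct normalized delays are essential, and I would make that assumption explicit. The remaining steps are routine Riemann-sum and Gram-matrix arguments.
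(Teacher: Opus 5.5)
Under your normalization ($\nu_\ell=\tau_\ell/N$ fixed and distinct modulo $1$) the argument is correct, and the block-diagonal limit makes it clean. The problem is that this is not the regime in which the paper states and uses the lemma, and your reason for discarding that regime is false.

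The paper holds $\tau_\ell$ fixed in sample units. Its limits are $\frac1N\sum_n (n/N)^p e^{-j2\pi n\tau/N}\to\int_0^1 x^p e^{-j2\pi x\tau}\,dx$. Later steps depend on this. For example, the bound $[\mathbf{D}_{nm}]_{p,q}=O(|f_n-f_m|)$ in the proof of Theorem~\ref{asm_opt_thm} needs $|\tau_q-\tau_p|$ to stay bounded. Under your scaling, $e^{-j2\pi f_n(\tau_q-\tau_p)}=e^{-j2\pi(n-1)(\nu_q-\nu_p)}$ has a Lipschitz constant of order $N$ in $f_n$, so that bound fails.

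With $\tau_\ell$ fixed, the steering vectors do \emph{not} become asymptotically collinear. We have $\frac1N\mathbf{a}_k^H\mathbf{a}_\ell\to\int_0^1 e^{j2\pi(\tau_k-\tau_\ell)x}\,dx$, whose modulus $|\sin(\pi(\tau_k-\tau_\ell))|/(\pi|\tau_k-\tau_\ell|)$ is strictly below $1$. The lower bound does hold there. What fails is your key step. The phase increment $2\pi(\tau_k-\tau_\ell)/N$ tends to zero, so the geometric-sum bound $1/|\sin(\pi(\tau_k-\tau_\ell)/N)|$ is of order $N$. The normalized cross-target sums then converge to nonzero limits, and $\frac1N\bar{\mathbf{J}}$ is not asymptotically block diagonal.

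The missing idea is therefore positive definiteness of a limit matrix with genuine cross-target coupling. The paper gets it, with $\bar{\mathbf{J}}_\infty:=\lim_{N\to\infty}\frac1N\bar{\mathbf{J}}$, from the quadratic form
\[
\mathbf{d}^T\bar{\mathbf{J}}_\infty\mathbf{d}=\frac{2}{\sigma^2}\int_0^1\left|\sum_{\ell=1}^{L}\left(d_{L+\ell}+jd_{2L+\ell}-j2\pi\beta_\ell d_\ell x\right)e^{-j2\pi\tau_\ell x}\right|^2dx,
\]
valid for every $\mathbf{d}\in\mathbb{R}^{3L}$. It then uses linear independence on $[0,1]$ of the exponential polynomials $e^{-j2\pi\tau_\ell x}$ and $xe^{-j2\pi\tau_\ell x}$, $\ell=1,\ldots,L$, for distinct $\tau_\ell$. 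Together with $\beta_\ell\neq0$, this forces $\mathbf{d}=\mathbf{0}$.

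Your Gram-matrix viewpoint adapts directly. Treat all $3L$ functions $-j2\pi\beta_\ell xe^{-j2\pi\tau_\ell x}$, $e^{-j2\pi\tau_\ell x}$, $je^{-j2\pi\tau_\ell x}$ jointly in $L^2([0,1];\mathbb{C})$, with real inner product $\operatorname{Re}\int_0^1 f^\ast g\,dx$, instead of target by target. Your upper bound and Weyl step then go through unchanged. As written, though, the proposal proves a different asymptotic statement and wrongly asserts that the intended one may fail.
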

\begin{proof}
    See Appendix \ref{proof_lemma_2}.
\end{proof}

Define $\bm{\Delta} = \mathbf{J} - {{{\bar {\mathbf{J}} }}}$, and $\mathbf{Z} := \bar{\mathbf{J}}^{-1} \bm{\Delta}$. We may express the inverse of $\mathbf{J}$ via a resolvent-type expansion as
\begin{align}\label{J_expansion}
    \nonumber\mathbf{J}^{-1}&=\left({{{\bar {\mathbf{J}} }}}+\bm{\Delta}\right)^{-1} = {{{\bar {\mathbf{J}} }^{-1}}}\left(\mathbf{I}+{{{\bar {\mathbf{J}} }^{-1}}}\bm{\Delta}\right)^{-1}
    \\&={{{\bar {\mathbf{J}} }^{-1}}}-\mathbf{R}_1+\mathbf{R}_2-\mathbf{R}_3,
\end{align}
where
\begin{subequations}
\begin{align}
    &\mathbf{R}_1 ={{{\bar {\mathbf{J}} }^{-1}}}\bm{\Delta}{{{\bar {\mathbf{J}} }^{-1}}} = \mathbf{Z}{{{\bar {\mathbf{J}} }^{-1}}}\\
    &\mathbf{R}_2 = {{{\bar {\mathbf{J}} } ^{-1}}}\bm{\Delta} {{{\bar {\mathbf{J}} } ^{-1}}}\bm{\Delta} {{{\bar {\mathbf{J}} } ^{-1}}} = \mathbf{Z} ^2{{{\bar {\mathbf{J}} } ^{-1}}},\\
    &\mathbf{R}_3 = {\bar {\mathbf{J}}  ^{ - 1}\bm{\Delta} }\left(\mathbf{I}+{\bar {\mathbf{J}}  ^{ - 1}\bm{\Delta} }\right)^{-1}\mathbf{R}_2 = \mathbf{Z} ^3\left(\mathbf{I}+\mathbf{Z} \right)^{-1}{{{\bar {\mathbf{J}} } ^{-1}}}\label{R_3_original_formulation}
\end{align}    
\end{subequations}
denote the first-, second-, and higher-order residual terms, respectively. 

Let us further introduce a useful concentration inequality in terms of the spectral norm of $\mathbf{Z}$ through the following lemma.
 \begin{lemma}\label{Z_HW_concerntration}
    For any $\varepsilon > 0$, it holds that
    \begin{equation}\label{concentration_sub_Gaussian}
        \Pr\left\{ \|\mathbf{Z}\|_2 \ge \varepsilon \right\} \le 18L^2 \exp\left\{ -cN \min(\varepsilon^2, \varepsilon) \right\},
    \end{equation}
    where $c > 0$ is a constant independent of $N$. Consequently, for any $k \ge 1$, the $k$-th moment of the spectral norm satisfies
    \begin{equation}
        \mathbb{E}\left( \|\mathbf{Z}\|_2^k \right) = O\left(N^{-k/2}\right).
    \end{equation}
\end{lemma}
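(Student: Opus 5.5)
The plan is to reduce the spectral-norm bound on $\mathbf{Z}$ to entrywise concentration of $\bm{\Delta}$, and to treat each entry of $\bm{\Delta}$ as a centered quadratic form in the i.i.d. symbol vector $\mathbf{s}$, controlled by a Hanson--Wright inequality.

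\emph{Reduction to the entries of $\bm{\Delta}$.} First I would use Lemma~\ref{EFIM_order} to write
\[
\|\mathbf{Z}\|_2\le \|\bar{\mathbf{J}}^{-1}\|_2\|\bm{\Delta}\|_2\le \frac{1}{c_0N}\|\bm{\Delta}\|_F\le \frac{3L}{c_0N}\max_{i,j}|\Delta_{ij}|.
\]
Hence $\{\|\mathbf{Z}\|_2\ge\varepsilon\}\subseteq\bigcup_{i,j}\{|\Delta_{ij}|\ge c_0N\varepsilon/(3L)\}$. A union bound over the $(3L)^2=9L^2$ entries, with a two-sided tail bound of the form $2\exp(\cdot)$ for each entry, produces the prefactor $18L^2$.

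\emph{Quadratic-form representation.} Let $\mathbf{h}_i$ denote the $i$-th column of $\mathbf{H}$. Then
\[
\Delta_{ij}=\frac{2}{\sigma^2}\operatorname{Re}\left\{\mathbf{s}^H\mathbf{A}_{ij}\mathbf{s}-\mathbb{E}\left(\mathbf{s}^H\mathbf{A}_{ij}\mathbf{s}\right)\right\},\qquad \mathbf{A}_{ij}:=\mathbf{Q}\operatorname{Diag}(\mathbf{h}_i^\ast\odot\mathbf{h}_j)\mathbf{Q}^H.
\]
Taking the real part amounts to replacing $\mathbf{A}_{ij}$ by its Hermitian part $\mathbf{B}_{ij}=(\mathbf{A}_{ij}+\mathbf{A}_{ij}^H)/2$. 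The matrix $\mathbf{Q}$ is unitary, and every entry of $\mathbf{H}$ is bounded: $|[\mathbf{a}_\ell]_n|=1$ and $|[\dot{\mathbf{a}}_\ell]_n|\le 2\pi$, times $|\beta_\ell|$. Consequently $\|\mathbf{B}_{ij}\|_2=O(1)$ and $\|\mathbf{B}_{ij}\|_F^2\le\sum_n|h_{ni}h_{nj}|^2=O(N)$. The symbols take values in a finite alphabet, so they are bounded and hence sub-Gaussian with some constant $K$.

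Applying Hanson--Wright with $t=c_0\sigma^2N\varepsilon/(6L)$ gives
\[
\Pr\{|\Delta_{ij}|\ge t'\}\le 2\exp\left\{-c\min\left(\frac{t^2}{K^4\|\mathbf{B}\|_F^2},\frac{t}{K^2\|\mathbf{B}\|_2}\right)\right\}.
\]
Substituting the bounds above, the first argument of the minimum is of order $N^2\varepsilon^2/N=N\varepsilon^2$ and the second is of order $N\varepsilon$. This yields $2\exp\{-cN\min(\varepsilon^2,\varepsilon)\}$, where $c$ absorbs $L$, $\sigma$, $c_0$, $K$ and $\max|\beta_\ell|$. Combined with the union bound, this gives \eqref{concentration_sub_Gaussian}.

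\emph{Moment bound.} For the moments I would integrate the tail:
\[
\mathbb{E}\|\mathbf{Z}\|_2^k=\int_0^\infty k\varepsilon^{k-1}\Pr\{\|\mathbf{Z}\|_2\ge\varepsilon\}\,d\varepsilon.
\]
On $\varepsilon\le1$ the integrand is bounded by a multiple of $\varepsilon^{k-1}e^{-cN\varepsilon^2}$. After the substitution $\varepsilon=u/\sqrt N$ this piece contributes $O(N^{-k/2})$. On $\varepsilon>1$ the integrand is bounded by a multiple of $\varepsilon^{k-1}e^{-cN\varepsilon}$, which contributes $O(e^{-cN})$. Together these give $O(N^{-k/2})$.

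\emph{Main obstacle.} The delicate step is justifying Hanson--Wright for complex symbols. Within a single symbol, the real and imaginary parts need not be independent, for example for PSK. Only the $N$ symbols are mutually independent, as two-dimensional real blocks. I would handle this by splitting the form into diagonal and off-diagonal parts:
\[
\mathbf{s}^H\mathbf{B}\mathbf{s}-\mathbb{E}=\sum_n B_{nn}(|s_n|^2-1)+\sum_{m\ne n}B_{mn}s_m^\ast s_n.
\]
The diagonal part is a sum of independent bounded centered variables. Hoeffding's inequality controls it with variance proxy $\sum_n|B_{nn}|^2\le\|\mathbf{B}\|_F^2$, which is within the required tail.

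For the off-diagonal part, I would apply the standard decoupling argument, which only needs independence across the index $n$ and therefore holds blockwise. This reduces the problem to $\mathbf{s}^H\mathbf{B}\mathbf{s}'$ with $\mathbf{s}'$ an independent copy of $\mathbf{s}$. Conditionally on $\mathbf{s}'$, that form is a sub-Gaussian linear form in independent symbols, using $\mathbb{E}s=0$ from Assumption~\ref{assumption2}. The remaining step is the usual comparison with a Gaussian chaos, which gives the $\min(t^2/\|\mathbf{B}\|_F^2,\,t/\|\mathbf{B}\|_2)$ tail.

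If this block version proves cumbersome, a fallback is available. Rather than relying on a complex-normal representation, which PSK does not admit, I would cite the Hanson--Wright inequality for random vectors with independent sub-Gaussian blocks (convex-concentration versions in the literature). This applies because bounded variables on a finite alphabet are sub-Gaussian.
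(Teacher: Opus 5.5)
Your proposal is correct and follows essentially the same route as the paper: entrywise Hanson--Wright on $\Delta_{ij}$ with Frobenius norm squared $O(N)$ and spectral norm $O(1)$, a union bound over the $9L^2$ entries combined with $\|\bar{\mathbf{J}}^{-1}\|_2\le 1/(c_0N)$, and tail integration for the moments. Your extra treatment of complex symbols whose real and imaginary parts are dependent (e.g.\ $8$-PSK) is a genuine addition: the diagonal/off-diagonal split with blockwise decoupling, or alternatively a block/convex-concentration Hanson--Wright inequality, fills in a point the paper passes over when it cites the real-valued Rudelson--Vershynin inequality directly with a max-of-parts sub-Gaussian norm.
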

\begin{proof}
    See Appendix \ref{proof_lemma_3}.    
\end{proof}

The following proposition holds as a direct consequence of Lemma \ref{Z_HW_concerntration}.

\begin{proposition}
     Let $\Omega_N = \left\{\mathbf{s}\left|\left\|\mathbf{Z}\right\|_2<\rho\right.\right\}$ denote the set where $\mathbf{J}$ is invertible via the Neumann series, with $\rho\in\left(0,1\right)$ being a constant independent of $N$. The event $\Omega_N$ occurs for all sufficiently large $N$ with probability 1. 
\end{proposition}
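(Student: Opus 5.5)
The plan is a Borel--Cantelli argument built on the tail bound of Lemma~\ref{Z_HW_concerntration}, followed by a short deterministic step showing that $\|\mathbf{Z}\|_2<\rho$ forces $\mathbf{J}$ to be invertible with a convergent Neumann series. To make ``for all sufficiently large $N$ with probability 1'' meaningful, we regard the symbol vectors for different $N$ as defined on a common probability space. Independence across $N$ is not needed, since Borel--Cantelli only uses summability of the tail probabilities. The statement then means
\begin{equation}
\Pr\left\{\Omega_N^c \text{ infinitely often}\right\}=0 .
\end{equation}

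\textbf{Step 1 (tail summability).} Apply \eqref{concentration_sub_Gaussian} with $\varepsilon=\rho\in(0,1)$. Then $\min(\rho^2,\rho)=\rho^2$, which gives
\begin{equation}
\Pr\left(\Omega_N^c\right)=\Pr\left\{\|\mathbf{Z}\|_2\ge\rho\right\}\le 18L^2 e^{-c\rho^2 N}.
\end{equation}
The constants $c$, $L$ and $\rho$ do not depend on $N$, so the right-hand side is a geometric sequence in $N$ and
\begin{equation}
\sum_{N}\Pr\left(\Omega_N^c\right)\le 18L^2\sum_{N} e^{-c\rho^2 N}<\infty .
\end{equation}
We need the bound only for $N$ large enough that Lemma~\ref{EFIM_order} and Lemma~\ref{Z_HW_concerntration} apply (and $N>3L$). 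Discarding finitely many terms does not affect convergence. The first Borel--Cantelli lemma then yields $\Pr\{\limsup_N \Omega_N^c\}=0$. Equivalently, almost surely there is a random $N_0$ such that $\mathbf{s}\in\Omega_N$ for all $N\ge N_0$.

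\textbf{Step 2 (invertibility on $\Omega_N$).} By Lemma~\ref{non_degenerate_lemma}, $\bar{\mathbf{J}}$ is invertible, so
\begin{equation}
\mathbf{J}=\bar{\mathbf{J}}\left(\mathbf{I}+\mathbf{Z}\right).
\end{equation}
On $\Omega_N$ we have $\|\mathbf{Z}\|_2<\rho<1$. Hence the Neumann series $\sum_k(-\mathbf{Z})^k$ converges, $\mathbf{I}+\mathbf{Z}$ is invertible with $\|(\mathbf{I}+\mathbf{Z})^{-1}\|_2\le 1/(1-\rho)$, and $\mathbf{J}^{-1}$ exists. This also justifies \eqref{J_expansion} on this event: the remainder $\mathbf{R}_3$ is well defined, with
\begin{equation}
\|\mathbf{R}_3\|_2\le \frac{\|\mathbf{Z}\|_2^3\,\|\bar{\mathbf{J}}^{-1}\|_2}{1-\rho}.
\end{equation}

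\textbf{Main obstacle.} The key input is the exponential-in-$N$ rate of Lemma~\ref{Z_HW_concerntration}, and we take it as given. Were the tail only polynomial, e.g.\ obtained via Chebyshev from the moment bound $O(N^{-k/2})$, summability would need a high enough moment $k>2$. Even that would still give the same conclusion, which suggests the argument is robust. The only real care needed is the formal point above: the almost-sure statement is about a sequence of events across $N$, so we must interpret it on a common probability space, or else read it as $\Pr(\Omega_N)\to1$ at the rate $1-O(e^{-c\rho^2N})$.
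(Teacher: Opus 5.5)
Your proposal is correct and follows the paper's proof: both apply Lemma~\ref{Z_HW_concerntration} with $\varepsilon=\rho$ to get the summable tail $18L^2e^{-c\rho^2N}$ and then invoke Borel--Cantelli, and your Step~2 is the content of the paper's Remark~2. Your remarks about a common probability space and about discarding the finitely many small-$N$ terms are small clarifications that the paper leaves implicit.
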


\begin{proof}
    The complement event of $\Omega_N$ is defined as
    \begin{equation}
        \Omega_N^c:= \left\{\mathbf{s}\left| \|\mathbf{Z}\|_2 \ge \rho\right. \right\}.
    \end{equation}
    According to Lemma 3, it satisfies
    \begin{equation}
        \Pr\{\Omega_N^c\} = \Pr\left\{ \|\mathbf{Z}\|_2 \ge \rho \right\} \le 18L^2 \exp\left( -c\rho^2 N \right).
    \end{equation}
    Since $\Pr\{\Omega_N^c\}$ decays exponentially with $N$, the Borel-Cantelli lemma~\cite{durrett2019probability} implies that the event sequence $\Omega_N$ occurs eventually almost surely. 
\end{proof}

\textit{Remark 2:} The FIM is invertible on the event $\Omega_N$. To see this, note that
\begin{equation}
    \mathbf{J}=\bar{\mathbf{J}}+\bm{\Delta}=\bar{\mathbf{J}}(\mathbf{I}+\bar{\mathbf{J}}^{-1}\bm{\Delta})=\bar{\mathbf{J}}(\mathbf{I}+\mathbf{Z}).
\end{equation}
Since $\bar{\mathbf{J}}$ is invertible and $\|\mathbf{Z}\|_2<\rho<1$ on $\Omega_N$, the matrix $(\mathbf{I}+\mathbf{Z})$ is also invertible. Hence, $\mathbf{J}$ is invertible on $\Omega_N$. By Proposition 2, $\Omega_N$ occurs eventually almost surely as $N\to\infty$, which implies that $\mathbf{J}$ is invertible almost surely for all sufficiently large $N$.


Finally, we present the core lemma for establishing the asymptotic superiority of OFDM, which characterizes the fluctuation of FIM incurred by the random communication symbols. Let $\mathbf{v} = \operatorname{vec}\left({{\mathbf{J}}}\right)\in\mathbb{C}^{9L^2}$. The covariance matrix of $\bm{\Delta}$ is given as
\begin{align}
\mathbb{E}\left[\operatorname{vec}\left(\bm{\Delta}\right)\operatorname{vec}\left(\bm{\Delta}\right)^H\right] = \operatorname{Cov}\left(\mathbf{v},\mathbf{v}\right)=\bm{\Sigma}_{\mathbf{v}},
\end{align}
where
\begin{equation}
    \bm{\Sigma}_{\mathbf{x}}:= \mathbb{E}\left(\mathbf{x}\mathbf{x}^H\right) - \mathbb{E}\left(\mathbf{x}\right)\mathbb{E}\left(\mathbf{x}^H\right)
\end{equation}
is the covariance matrix of the random vector $\mathbf{x}$. The following lemma establishes the optimality of OFDM in minimizing the covariance matrix of $\bm{\Delta}$.

\begin{lemma}\label{bistochastic_prop}
    For sub-Gaussian constellations ($\kappa < 2$),
    \begin{equation}
        \bm{\Sigma}_{\mathbf{v}}\left(\mathbf{U}\right)\succeq \bm{\Sigma}_{\mathbf{v}}\left(\mathbf{F}_N^H\right),\quad \forall\;\mathbf{U}\in\mathbb{U}\left(N\right).
    \end{equation}
\end{lemma}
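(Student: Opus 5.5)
The plan is to reduce the $9L^2$-dimensional covariance to the covariance of the $N$-dimensional random power vector $\mathbf{w}:=|\mathbf{Q}^H\mathbf{s}|^2$, and then compute the latter in closed form for arbitrary $\mathbf{Q}$.

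\textbf{Step 1: linearity in $\mathbf{w}$.} From \eqref{full_FIM}, every entry of $\mathbf{J}$ is a real linear function of $\mathbf{w}$, since $[\mathbf{J}]_{k,\ell}=\frac{2}{\sigma^2}\sum_{n}\operatorname{Re}\{[\mathbf{H}]_{n,k}^\ast[\mathbf{H}]_{n,\ell}\}\,w_n$. Hence there is a deterministic real matrix $\mathbf{G}\in\mathbb{R}^{9L^2\times N}$, depending only on $\mathbf{H}$ and $\sigma^2$ and not on $\mathbf{U}$, such that $\mathbf{v}=\mathbf{G}\mathbf{w}$. Because $\mathbf{v}$ is real, this gives
\begin{equation*}
\bm{\Sigma}_{\mathbf{v}}(\mathbf{U})=\mathbf{G}\,\bm{\Sigma}_{\mathbf{w}}(\mathbf{U})\,\mathbf{G}^T .
\end{equation*}
Congruence preserves the Loewner order, so it suffices to prove $\bm{\Sigma}_{\mathbf{w}}(\mathbf{U})\succeq\bm{\Sigma}_{\mathbf{w}}(\mathbf{F}_N^H)$.

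\textbf{Step 2: closed form for $\bm{\Sigma}_{\mathbf{w}}$.} Let $\mathbf{q}_k$ denote the $k$th column of $\mathbf{Q}$, so that $w_k=|\mathbf{q}_k^H\mathbf{s}|^2$. For i.i.d.\ symbols satisfying Assumptions~\ref{assumption1}--\ref{assumption2}, the fourth-order moment is
\begin{equation*}
\mathbb{E}\{s_a^\ast s_b s_c^\ast s_d\}=\delta_{a,b}\delta_{c,d}+\delta_{a,d}\delta_{b,c}+(\kappa-2)\,\delta_{a,b}\delta_{b,c}\delta_{c,d}.
\end{equation*}
The pairing $a=c,\,b=d$ does not contribute, because it is proportional to $|\mathbb{E}\{s^2\}|^2=0$. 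Expanding $\mathbb{E}\{w_kw_\ell\}$ and subtracting $\mathbb{E}\{w_k\}\mathbb{E}\{w_\ell\}=1$ gives
\begin{equation*}
[\bm{\Sigma}_{\mathbf{w}}]_{k,\ell}=|\mathbf{q}_k^H\mathbf{q}_\ell|^2+(\kappa-2)\sum_{n}|[\mathbf{Q}]_{n,k}|^2|[\mathbf{Q}]_{n,\ell}|^2 .
\end{equation*}
Since $\mathbf{Q}$ is unitary, this simplifies to
\begin{equation*}
\bm{\Sigma}_{\mathbf{w}}(\mathbf{U})=\mathbf{I}_N-(2-\kappa)\,\mathbf{M}^T\mathbf{M},\qquad \mathbf{M}:=|\mathbf{Q}|^2 .
\end{equation*}
For OFDM we have $\mathbf{Q}=\mathbf{I}_N$, hence $\mathbf{M}=\mathbf{I}_N$ and $\bm{\Sigma}_{\mathbf{w}}(\mathbf{F}_N^H)=(\kappa-1)\mathbf{I}_N$.

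\textbf{Step 3: comparison.} The difference is
\begin{equation*}
\bm{\Sigma}_{\mathbf{w}}(\mathbf{U})-\bm{\Sigma}_{\mathbf{w}}(\mathbf{F}_N^H)=(2-\kappa)\bigl(\mathbf{I}_N-\mathbf{M}^T\mathbf{M}\bigr).
\end{equation*}
The matrix $\mathbf{M}$ is the entrywise squared modulus of a unitary matrix, so it is doubly stochastic. By the Schur test, $\|\mathbf{M}\|_2\le\sqrt{\|\mathbf{M}\|_1\|\mathbf{M}\|_\infty}=1$; equivalently, by Birkhoff's theorem $\mathbf{M}$ is a convex combination of permutation matrices. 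Either way, $\mathbf{M}^T\mathbf{M}\preceq\mathbf{I}_N$. With $\kappa<2$ the difference is therefore positive semidefinite, and congruence by $\mathbf{G}$ from Step~1 gives the lemma.

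\textbf{Main obstacle.} The only delicate step is the fourth-moment bookkeeping in Step~2. One must check that the pseudo-variance pairing vanishes, and that the diagonal correction $(\kappa-2)$ appears with exactly the weight $\sum_n|[\mathbf{Q}]_{n,k}|^2|[\mathbf{Q}]_{n,\ell}|^2$. Once $\bm{\Sigma}_{\mathbf{w}}$ is in the form $\mathbf{I}_N-(2-\kappa)\mathbf{M}^T\mathbf{M}$, the remaining argument is a single application of double stochasticity.
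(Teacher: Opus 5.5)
Your proposal is correct and follows the paper's proof essentially step for step. The paper also writes $\mathbf{v}=\widetilde{\mathbf{H}}\tilde{\mathbf{s}}$, so that $\bm{\Sigma}_{\mathbf{v}}=\widetilde{\mathbf{H}}\bm{\Sigma}_{\tilde{\mathbf{s}}}\widetilde{\mathbf{H}}^T$, uses the same fourth-moment identity to obtain $\bm{\Sigma}_{\tilde{\mathbf{s}}}=\mathbf{I}_N+(\kappa-2)\mathbf{B}^T\mathbf{B}$ with $\mathbf{B}=|\mathbf{Q}|^2$, and concludes $\mathbf{B}^T\mathbf{B}\preceq\mathbf{I}_N$ from double stochasticity; your Schur-test (or Birkhoff) justification of $\|\mathbf{M}\|_2\le 1$ fills in a step the paper only asserts.
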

\begin{proof}
    See Appendix \ref{proof_prop_bistochastic}.
\end{proof}

\section{Main Results}\label{main_sec}
\subsection{OFDM Outperforms Frequency-Spread Waveforms}
We may now establish the superiority of OFDM over frequency-spread waveforms. We first consider the possibility that the FIM $\mathbf{J}$ becomes singular, which can occur on the rare event $\Omega_N^c$. For example, when $\mathbf{U}=\mathbf{I}_N$ (SC modulation), we have $\mathbf{Q}=\mathbf{F}_N^H$. If $\mathbf{s}=s_1\mathbf{1}$ for some QAM/PSK symbol $s_1$, then $\mathbf{Q}^H\mathbf{s}=\sqrt{N}\left[s_1,0,\ldots,0\right]^T$, which forces $\mathbf{J}$ to be singular. The probability of this event equals $|\mathcal{S}|^{-(N-1)}$ and hence decays exponentially with $N$, where $|\mathcal{S}|$ is the cardinality of $\mathcal{S}$. It is also clear that such a singularity never occurs under OFDM modulation. Indeed, when $\mathbf{U}=\mathbf{F}_N^H$, we have $\mathbf{Q}^H\mathbf{s}=\mathbf{s}$, and as long as the constellation alphabet contains no zero elements (which holds for standard QAM/PSK constellations), $\operatorname{Diag}\!\left(\mathbf{Q}^H\mathbf{s}\right)$ is always full rank. Together with the non-degenerate channel condition ensuring that $\mathbf{H}$ is full rank, this implies that $\mathbf{J}$ is invertible for all realizations of $\mathbf{s}$ under OFDM modulation, whereas other modulation schemes may admit singular realizations. 

If the unconditional expected CRB were used to compare the ranging performance of different waveforms, any waveform admitting singular FIM realizations would have an infinite objective value, since $\mathbf{J}^{-1}$ does not exist on such events, while the OFDM objective always remains finite. The comparison would then become trivial and uninformative. To focus on the non-trivial regime in which all considered waveforms yield a well-defined CRB with overwhelmingly high probability, we henceforth restrict the analysis to the almost-sure event $\Omega_N$.


\begin{thm}[Large-$N$ Superiority of OFDM over the Almost-Sure Set]\label{asm_opt_thm}
 Consider uniform constellations satisfying Assumptions \ref{assumption1}--\ref{assumption3} and containing no zero element. Assume that the non-degenerate channel condition \eqref{non_overlapping_targets} holds, and restrict to the almost-sure event $\Omega_N$. Fix any $\alpha>0$, and let $\mathbf{U}\in\mathbb{U}(N)$ be an $\alpha$-frequency-spread waveform in the sense of Definition~\ref{def_3}. Then there exist constants $c_{\alpha}(\mathbf{U})>0$ and $N_0(\mathbf{U})$, both independent of $N$, such that for all $N\ge N_0(\mathbf{U})$,
\begin{equation}
\mathbb{E}\!\left[f(\mathbf{U},\mathbf{s})\left|\Omega_N\right.\right]-\mathbb{E}\!\left[f\!\left(\mathbf{F}_N^H,\mathbf{s}\right)\left|\Omega_N\right.\right]\ge \frac{c_{\alpha}(\mathbf{U})}{N^{2}}.
\end{equation}
Consequently, CP-OFDM achieves a strictly lower ranging CRB than any $\alpha$-frequency-spread waveform in the large-$N$ regime over the almost-sure event $\Omega_N$.
\end{thm}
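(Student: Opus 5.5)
On $\Omega_N$, I will use the resolvent expansion \eqref{J_expansion}, $\mathbf{J}^{-1}=\bar{\mathbf{J}}^{-1}-\mathbf{R}_1+\mathbf{R}_2-\mathbf{R}_3$, and compare the two waveforms term by term after taking $\operatorname{Tr}\{\mathbf{T}\,\cdot\,\}$ and the conditional expectation given $\Omega_N$. By Proposition~\ref{CRB_Jensen_bound}, $\bar{\mathbf{J}}$ does not depend on $\mathbf{U}$, so the zeroth-order term cancels in the difference. For the first-order term, $\mathbb{E}[\bm{\Delta}]=\mathbf{0}$ unconditionally. Restricting to $\Omega_N$ only perturbs this by $\mathbb{E}[\|\bm{\Delta}\|_2\mathbb{1}_{\Omega_N^c}]/\Pr(\Omega_N)$, and by Cauchy--Schwarz together with Lemma~\ref{Z_HW_concerntration} and Lemma~\ref{EFIM_order} this is exponentially small. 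The contribution of $\mathbf{R}_1$ to the gap is therefore $O(e^{-cN})$. Likewise, $\mathbf{R}_3$ is bounded on $\Omega_N$ by $\|\mathbf{Z}\|_2^3(1-\rho)^{-1}\|\bar{\mathbf{J}}^{-1}\|_2$. Lemmas~\ref{EFIM_order} and \ref{Z_HW_concerntration} then give $\mathbb{E}[\,|\operatorname{Tr}\{\mathbf{T}\mathbf{R}_3\}|\mid\Omega_N]=O(N^{-3/2}\cdot N^{-1})=O(N^{-5/2})$ for both waveforms. The whole gap is thus governed by the second-order term $\operatorname{Tr}\{\mathbf{T}\bar{\mathbf{J}}^{-1}\bm{\Delta}\bar{\mathbf{J}}^{-1}\bm{\Delta}\bar{\mathbf{J}}^{-1}\}$, up to $O(N^{-5/2})$, which is $o(N^{-2})$.

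Next I will write the second-order term as a quadratic form in $\operatorname{vec}(\bm{\Delta})$. Using $\operatorname{Tr}(\mathbf{A}\bm{\Delta}\mathbf{B}\bm{\Delta})=\operatorname{vec}(\bm{\Delta})^T(\mathbf{B}^T\otimes\mathbf{A})\operatorname{vec}(\bm{\Delta})$ with $\mathbf{A}=\bar{\mathbf{J}}^{-1}\mathbf{T}\bar{\mathbf{J}}^{-1}\succeq\mathbf{0}$ and $\mathbf{B}=\bar{\mathbf{J}}^{-1}\succ\mathbf{0}$, its expectation equals $\operatorname{Tr}\{(\bar{\mathbf{J}}^{-1}\otimes\mathbf{A})\bm{\Sigma}_{\mathbf{v}}(\mathbf{U})\}$, again up to exponentially small conditioning corrections. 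The weight matrix $\mathbf{W}:=\bar{\mathbf{J}}^{-1}\otimes\mathbf{A}$ is positive semidefinite, since all matrices involved are real symmetric. Lemma~\ref{bistochastic_prop} then gives $\operatorname{Tr}\{\mathbf{W}(\bm{\Sigma}_{\mathbf{v}}(\mathbf{U})-\bm{\Sigma}_{\mathbf{v}}(\mathbf{F}_N^H))\}\ge0$. This alone only yields a nonnegative gap, and the theorem requires it to be at least $c_\alpha N^{-2}$.

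\emph{Main obstacle: quantitative strictness.} I will make the covariance difference explicit. Write $\mathbf{g}=|\mathbf{Q}^H\mathbf{s}|^2$ and $\mathbf{q}_k$ for the columns of $\mathbf{Q}$. A fourth-moment computation using Assumptions~\ref{assumption1}--\ref{assumption3} gives $\operatorname{Cov}(\mathbf{g})=\mathbf{M}\circ\overline{\mathbf{M}}+(\kappa-2)\sum_n\operatorname{Diag}(|\mathbf{Q}^H\mathbf{e}_n|^2)$-type terms, i.e. $\operatorname{Cov}(g_k,g_m)=|\mathbf{q}_k^H\mathbf{q}_m|^2+(\kappa-2)\sum_n|Q_{nk}|^2|Q_{nm}|^2$. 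Here $\mathbf{q}_k^H\mathbf{q}_m=\delta_{km}$ because $\mathbf{Q}$ is unitary. For OFDM this reduces to $(\kappa-1)\mathbf{I}$. Hence the difference is $(2-\kappa)(\mathbf{I}-\mathbf{D})$ with $\mathbf{D}_{km}=\sum_n|Q_{nk}|^2|Q_{nm}|^2$, which is a doubly stochastic PSD matrix. The gap becomes $\frac{4(2-\kappa)}{\sigma^4}\,\mathbf{w}^T(\mathbf{I}-\mathbf{D})\mathbf{w}$-like quadratic forms, where the weights come from the rows of $\mathbf{H}$ sandwiched by $\bar{\mathbf{J}}^{-1}$. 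Concretely, one gets $\sum_{k,m}(\delta_{km}-\mathbf{D}_{km})\,\mathbf{h}_k^H\mathbf{A}'\mathbf{h}_m\,\mathbf{h}_m^H\mathbf{B}'\mathbf{h}_k$ with $\mathbf{h}_k$ the $k$th row of $\mathbf{H}$.

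The key step is to lower bound $\sum_n\mathbf{p}_n^T(\mathbf{I}\text{-weighted variance})$ through the discrete RMS bandwidth. Column $n$ of $\mathbf{Q}^H$ (a row of $\mathbf{Q}$) is the spectrum $\widetilde{\mathbf{u}}_n$, so $\mathbf{I}-\mathbf{D}=\sum_n(\operatorname{Diag}(\mathbf{p}_n)-\mathbf{p}_n\mathbf{p}_n^T)$, a sum of covariance matrices of the distributions $\mathbf{p}_n$. Testing it against the delay-relevant weight $\phi(f_k)$, the function multiplying $(k-1)$ in $\dot{\mathbf{a}}$, I will bound each summand below by $\operatorname{Var}_{\mathbf{p}_n}(\phi)\gtrsim B_{\rm rms}^2(\mathbf{u}_n)\ge\alpha^2$. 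This is immediate for $\phi$ linear in $f_k$; the oscillatory phases $e^{-j2\pi(k-1)\tau_\ell/N}$ from distinct targets have to be handled as well. Summing over $n$ gives a term of order $N\alpha^2$ times the magnitude of the weights. The weights carry $\|\bar{\mathbf{J}}^{-1}\|^3=\Theta(N^{-3})$ from Lemma~\ref{EFIM_order}, a factor $N^2$ from the two squared entries of $\mathbf{H}_\tau$, which are $O(1)$ per row, and the sum over $k$, which gives at least $\Theta(N^{-2})$ once the $B_{\rm rms}$ scaling is accounted for. The result is $c_\alpha(\mathbf{U})N^{-2}$. Choosing $N_0$ so that this dominates the $O(N^{-5/2})$ and exponential remainders completes the argument. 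The delicate part, which I would attack first, is showing that the target-phase cross terms cannot cancel the $\alpha^2$ variance. For this I would isolate the diagonal $\tau$-block via the Schur complement and use that $\mathbf{T}$ selects a PSD block with eigenvalues of order $N^{-1}$.
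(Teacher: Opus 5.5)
\textbf{Where the proof is incomplete.} Your reduction follows the paper's proof step by step: the resolvent expansion on $\Omega_N$, an exponentially small first-order term, an $O(N^{-5/2})$ remainder from $\mathbf{R}_3$, the second-order term written as $\operatorname{Tr}\{\mathbf{W}\bm{\Sigma}_{\mathbf{v}}(\mathbf{U})\}$, and the covariance gap $(2-\kappa)(\mathbf{I}_N-\mathbf{B}^T\mathbf{B})$. Your splitting $\mathbf{I}_N-\mathbf{B}^T\mathbf{B}=\sum_n(\operatorname{Diag}(\mathbf{p}_n)-\mathbf{p}_n\mathbf{p}_n^T)$ is the paper's Laplacian form $\sum_{k<m}\mathbf{b}_k^T\mathbf{b}_m(\mathbf{e}_k-\mathbf{e}_m)(\mathbf{e}_k-\mathbf{e}_m)^T$, regrouped by basis vector.

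The gap is that the step making the theorem quantitative is only announced, not carried out. The weight is not a scalar test function $\phi(f_k)$ but the Gram matrix $\widetilde{\mathbf{H}}^T\mathbf{W}\widetilde{\mathbf{H}}$. So the $n$th summand equals $\frac12\sum_{k,m}[\mathbf{p}_n]_k[\mathbf{p}_n]_m\,\mu_{km}$, with $\mu_{km}=\operatorname{Tr}(\mathbf{T}\bar{\mathbf{J}}^{-1}\mathbf{D}_{km}\bar{\mathbf{J}}^{-1}\mathbf{D}_{km}\bar{\mathbf{J}}^{-1})$ and $\mathbf{D}_{km}=\mathbf{C}_k-\mathbf{C}_m$. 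What you must prove is a pairwise bound $\mu_{km}\ge cN^{-3}(f_k-f_m)^2$, with $c>0$ independent of $k$, $m$ and $N$. Given that bound,
\[
\frac{2-\kappa}{2}\sum_{n}\sum_{k,m}[\mathbf{p}_n]_k[\mathbf{p}_n]_m\,\mu_{km}\ \ge\ (2-\kappa)\,cN^{-3}\sum_{n}B_{\rm rms}^2(\mathbf{u}_n)\ \ge\ (2-\kappa)\,c\,\alpha^2N^{-2}.
\]
Your scaling count (``a factor $N^2$ from the two squared entries of $\mathbf{H}_\tau$'') is also off. Those entries are $O(1)$, because $\dot{\mathbf{a}}$ carries $(n-1)/N$. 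The correct count is $N^{-3}$ from the three factors $\bar{\mathbf{J}}^{-1}$, times $N\alpha^2$ from the bandwidths.

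\textbf{The missing idea for the oscillatory phases.} The paper does not control the cross-target terms at all. Instead it bounds $\|\mathbf{D}_{km}\|_F$ from below using only the same-target delay--amplitude entries $[\mathbf{C}_k]_{\ell,L+\ell}$ and $[\mathbf{C}_k]_{\ell,2L+\ell}$. Up to sign, these equal $\frac{4\pi f_k}{\sigma^2}\operatorname{Im}(\beta_\ell)$ and $\frac{4\pi f_k}{\sigma^2}\operatorname{Re}(\beta_\ell)$; the phase drops out because $|[\mathbf{a}_\ell]_k|=1$. This gives $\|\mathbf{D}_{km}\|_F^2\ge\frac{32\pi^2}{\sigma^4}\sum_\ell|\beta_\ell|^2(f_k-f_m)^2$ with no possible cancellation, since a Frobenius norm dominates any subset of its entries.

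The paper then passes to $\mu_{km}$ via the eigenvalue bounds of Lemma~\ref{EFIM_order}. For $\mathbf{T}=\mathbf{I}_{3L}$ this transfer is immediate, since $\mu_{km}\ge\lambda_{\min}^3(\bar{\mathbf{J}}^{-1})\|\mathbf{D}_{km}\|_F^2$. For the rank-$L$ delay selector, however, $\mu_{km}\ge\lambda_{\min}(\bar{\mathbf{J}}^{-1})\|\mathbf{D}_{km}\bar{\mathbf{J}}^{-1}\mathbf{T}_\tau^T\|_F^2$ involves linear combinations of entries of $\mathbf{D}_{km}$. That is exactly where your cross-target cancellation concern returns, and the paper's proof states the implication without further argument.

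Your Schur-complement remark, that $\mathbf{T}_\tau\bar{\mathbf{J}}^{-1}\mathbf{T}_\tau^T$ has eigenvalues of order $N^{-1}$, does not settle this. It controls $\bar{\mathbf{J}}^{-1}\mathbf{T}_\tau^T$, but says nothing about how $\mathbf{D}_{km}$ acts on its range. A complete argument for the delay CRB needs a non-annihilation statement, for instance $\inf_{f\neq f'}\|(\mathbf{C}(f)-\mathbf{C}(f'))\bar{\mathbf{J}}_\infty^{-1}\mathbf{T}_\tau^T\|_F/|f-f'|>0$ over $[0,1]$, where $\mathbf{C}(f)$ denotes $\mathbf{C}_k$ as a function of $f=f_k$. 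The error from replacing $N\bar{\mathbf{J}}^{-1}$ by $\bar{\mathbf{J}}_\infty^{-1}$ is then absorbed for large $N$, because $\|\mathbf{D}_{km}\|_F=O(|f_k-f_m|)$.
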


\begin{proof}
    See Appendix \ref{proof_theorem_1}.    
\end{proof}

\begin{corollary}
Under the same assumptions as in Theorem \ref{asm_opt_thm}, OFDM asymptotically achieves a lower CRB for amplitude estimation than any $\alpha$-frequency-spread waveform over the almost-sure event $\Omega_N$. 
\end{corollary}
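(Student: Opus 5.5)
The plan is to rerun the proof of Theorem~\ref{asm_opt_thm} with the delay selection matrix $\mathbf{T}$ replaced by the amplitude selection matrix $\mathbf{T}_{ab}:=\mathbf{T}_{\beta}^T\mathbf{T}_{\beta}$, where $\mathbf{T}_{\beta}:=[\mathbf{0}_{2L\times L},\mathbf{I}_{2L}]$. The amplitude CRB is then $f_{ab}(\mathbf{U},\mathbf{s}):=\operatorname{Tr}\{\mathbf{T}_{ab}\mathbf{J}^{-1}\}$. On $\Omega_N$ we use the resolvent expansion \eqref{J_expansion}:
\begin{equation*}
\mathbb{E}[f_{ab}\,|\,\Omega_N]=\operatorname{Tr}(\mathbf{T}_{ab}\bar{\mathbf{J}}^{-1})-\operatorname{Tr}\{\mathbf{T}_{ab}\mathbb{E}[\mathbf{R}_1|\Omega_N]\}+\operatorname{Tr}\{\mathbf{T}_{ab}\mathbb{E}[\mathbf{R}_2|\Omega_N]\}-\operatorname{Tr}\{\mathbf{T}_{ab}\mathbb{E}[\mathbf{R}_3|\Omega_N]\}.
\end{equation*}
The zeroth-order term does not depend on $\mathbf{U}$, so it cancels in the difference between $\mathbf{U}$ and $\mathbf{F}_N^H$.

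For the first-order term, $\mathbb{E}[\bm{\Delta}]=\mathbf{0}$ unconditionally. Conditioning on $\Omega_N$ changes this by at most $O(\|\bar{\mathbf{J}}^{-1}\|_2\,\mathbb{E}[\|\mathbf{Z}\|_2\mathbb{1}_{\Omega_N^c}])$, which by Lemma~\ref{EFIM_order}, Lemma~\ref{Z_HW_concerntration} and Cauchy--Schwarz is exponentially small. The same truncation argument together with $\|(\mathbf{I}+\mathbf{Z})^{-1}\|_2\le(1-\rho)^{-1}$ on $\Omega_N$ bounds the third-order term by $\|\bar{\mathbf{J}}^{-1}\|_2\,\mathbb{E}\|\mathbf{Z}\|_2^3=O(N^{-1}\cdot N^{-3/2})=O(N^{-5/2})$. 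The only $\mathbf{U}$-dependent contribution at order $N^{-2}$ is therefore
\begin{equation*}
\operatorname{Tr}\{\mathbf{T}_{ab}\bar{\mathbf{J}}^{-1}\mathbb{E}[\bm{\Delta}\bar{\mathbf{J}}^{-1}\bm{\Delta}]\bar{\mathbf{J}}^{-1}\}
=\operatorname{vec}(\bar{\mathbf{J}}^{-1}\mathbf{T}_{ab}\bar{\mathbf{J}}^{-1})^T(\mathbf{I}\otimes\bar{\mathbf{J}}^{-1})\cdots,
\end{equation*}
which is a linear functional of $\bm{\Sigma}_{\mathbf{v}}$ of the form $\operatorname{Tr}\{\mathbf{W}\bm{\Sigma}_{\mathbf{v}}\}$ with weight $\mathbf{W}\succeq\mathbf{0}$. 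To see this, write $\operatorname{Tr}\{\mathbf{A}\bm{\Delta}\mathbf{B}\bm{\Delta}\}$ with $\mathbf{A}=\bar{\mathbf{J}}^{-1}\mathbf{T}_{ab}\bar{\mathbf{J}}^{-1}\succeq\mathbf{0}$ and $\mathbf{B}=\bar{\mathbf{J}}^{-1}\succ\mathbf{0}$. By symmetry of $\bm{\Delta}$ this equals $\operatorname{vec}(\bm{\Delta})^T(\mathbf{B}\otimes\mathbf{A})\operatorname{vec}(\bm{\Delta})$, and $\mathbf{B}\otimes\mathbf{A}\succeq\mathbf{0}$. Lemma~\ref{bistochastic_prop} then gives weak superiority of OFDM immediately, since the delay-specific structure of $\mathbf{T}$ was never used.

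For the strict gap $c/N^2$, I would follow the same route as the delay case in Theorem~\ref{asm_opt_thm}. The expression $\bm{\Sigma}_{\mathbf{v}}(\mathbf{U})-\bm{\Sigma}_{\mathbf{v}}(\mathbf{F}_N^H)$ is driven by $(2-\kappa)$ times the deviation of the bistochastic matrix $|\mathbf{Q}|^2$ from the identity, weighted by the frequency-domain profile of $\mathbf{H}$. The $\alpha$-frequency-spread condition forces each column of $|\mathbf{Q}|^2$ to have spectral variance at least $\alpha^2$. The task is to show that $\mathbf{W}$ restricted to the amplitude block still "sees" this spread at order $N^{-2}$. Concretely, the relevant quantity reduces to a quadratic form $\sum_{n}\operatorname{Var}_{p^{(n)}}(g)$, where $g(f_k)$ is built from $\mathbf{a}_\ell$ and $\dot{\mathbf{a}}_\ell$ contracted with $\bar{\mathbf{J}}^{-1}\mathbf{T}_{ab}\bar{\mathbf{J}}^{-1}$. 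Since $\bar{\mathbf{J}}^{-1}$ couples delay and amplitude blocks, this function contains a nonconstant (linear-in-frequency) component with coefficient bounded away from zero.

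The main obstacle is establishing that nondegeneracy uniformly in $N$: the amplitude-block weight must not annihilate the frequency-variance directions. I would first try scaling Lemma~\ref{EFIM_order}'s eigenvalue bounds by $N$ and passing to the limit Gram matrix of $\{e^{-j2\pi f\tau_\ell},\,f e^{-j2\pi f\tau_\ell}\}$ on $[0,1)$. Positive definiteness of this limiting matrix, which comes from the confluent Vandermonde argument of Lemma~\ref{non_degenerate_lemma}, should yield a constant $c_\alpha(\mathbf{U})>0$ exactly as in the delay case. The remaining $O(N^{-5/2})$ terms are then absorbed for $N\ge N_0(\mathbf{U})$.
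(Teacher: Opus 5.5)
Your route is the paper's. Its proof of this corollary is the single remark that replacing $\mathbf{T}_\tau$ by $\mathbf{T}_\beta=[\mathbf{0}_{2L\times L},\mathbf{I}_{2L}]$ leaves the proof of Theorem~\ref{asm_opt_thm} unchanged. Your treatment of the zeroth-, first- and third-order terms is correct and matches Appendix~\ref{proof_theorem_1}. So is the weak inequality you obtain from $\mathbf{W}=\bar{\mathbf{J}}^{-1}\otimes(\bar{\mathbf{J}}^{-1}\mathbf{T}\bar{\mathbf{J}}^{-1})\succeq\mathbf{0}$ together with Lemma~\ref{bistochastic_prop}. Your variance form $(2-\kappa)\sum_n\operatorname{Var}_{p^{(n)}}(g)$, with $g(f_k)=\mathbf{W}^{1/2}\operatorname{vec}(\mathbf{C}_k)$, is just the paper's Laplacian form $(2-\kappa)\sum_{n<m}\mu_{nm}\mathbf{b}_n^T\mathbf{b}_m$ rewritten.

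You have also correctly isolated the only place where $\mathbf{T}$ matters. The strict gap needs $\mu_{nm}=\|\mathbf{T}_\beta\bar{\mathbf{J}}^{-1}\mathbf{D}_{nm}\bar{\mathbf{J}}^{-1/2}\|_F^2\ge cN^{-3}(f_n-f_m)^2$, at least on pairs carrying a fixed fraction of the weight $\mathbf{b}_n^T\mathbf{b}_m(f_n-f_m)^2$. Without it, weak superiority of the second-order term does not beat the $O(N^{-5/2})$ remainder. The paper does not prove this either. It infers $\mu_{nm}=\Theta(N^{-3}|f_n-f_m|^2)$ from $\|\mathbf{D}_{nm}\|_F^2=\Theta(|f_n-f_m|^2)$ and the spectrum of $\bar{\mathbf{J}}^{-1}$. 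That inference is valid for $\mathbf{T}\succ\mathbf{0}$ (e.g.\ the full-CRB corollary) but not for a rank-deficient selector.

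Your proposed fix does not close this gap, for two reasons.

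First, positive definiteness of $\bar{\mathbf{J}}_\infty$ is a property of $\bar{\mathbf{J}}$ alone, while the required bound depends on $\mathbf{T}$. $\mathbf{D}_{nm}$ is always singular (rank at most $4$, and at most $2$ when $L=1$). So for any $\mathbf{w}\neq\mathbf{0}$ with $\mathbf{D}_{nm}\mathbf{w}=\mathbf{0}$, the weight $\mathbf{T}=\bar{\mathbf{J}}\mathbf{w}\mathbf{w}^T\bar{\mathbf{J}}$ gives $\mu_{nm}=0$ with exactly the same $\bar{\mathbf{J}}$. What must actually be excluded is that $\mathbf{D}_{nm}$ (nearly) annihilates the $2L$-dimensional subspace $\bar{\mathbf{J}}^{-1}\operatorname{range}(\mathbf{T}_\beta^T)$. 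Equivalently, its range must not (nearly) lie in the span of the first $L$ (delay) columns of $\bar{\mathbf{J}}$.

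Second, a linear-in-frequency component with coefficient bounded away from zero does not lower-bound a variance. For example, $(f-\tfrac12)^2$ has linear coefficient $-1$ but zero variance under any distribution symmetric about $\tfrac12$. What you need is a Lipschitz-from-below property $\|g(f)-g(f')\|\ge c\,N^{-3/2}|f-f'|$, which is precisely the per-pair bound above.

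One workable route:
\begin{itemize}
\item Let $\mathbf{C}(f)$ be the continuous-frequency version of $\mathbf{C}_n$.
\item Show that $\mathbf{T}_\beta\bar{\mathbf{J}}_\infty^{-1}(\mathbf{C}(f)-\mathbf{C}(f'))/(f-f')$, extended by $\mathbf{T}_\beta\bar{\mathbf{J}}_\infty^{-1}\mathbf{C}'(f)$ on the diagonal, vanishes nowhere on $[0,1]^2$.
\item Conclude by compactness and $N\bar{\mathbf{J}}^{-1}\to\bar{\mathbf{J}}_\infty^{-1}$.
\end{itemize}
For $L=1$ the non-vanishing is immediate: the amplitude-by-amplitude block of that matrix is a constant nonzero rank-one matrix. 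For general $L$ it needs a genuine argument using the structure of $\mathbf{T}_\beta$ and of the target geometry. That argument is missing from both your proposal and the paper's one-line proof.
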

\begin{proof}
Let $\mathbf{T}_{\beta}:=\left[\mathbf{0}_{2L\times L},\mathbf{I}_{2L}\right]\in\mathbb{R}^{2L\times 3L}$, which selects the amplitude parameter vector $\boldsymbol{\beta}:=[\boldsymbol{\beta}_a^T,\boldsymbol{\beta}_b^T]^T$, and define $\mathbf{T}:=\mathbf{T}_{\beta}^T\mathbf{T}_{\beta}$. The proof follows the same steps as that of Theorem \ref{asm_opt_thm}, since the choice of the estimated parameter subset affects the objective only through the fixed weighting matrix $\mathbf{T}$ in $\operatorname{Tr}(\mathbf{T}\mathbf{J}^{-1})$. Replacing $\mathbf{T}_{\tau}$ with $\mathbf{T}_{\beta}$ leaves analysis unchanged, and thus yields the same asymptotic superiority conclusion for amplitude estimation.
\end{proof}
\begin{corollary}
Under the same assumptions as in Theorem \ref{asm_opt_thm}, OFDM asymptotically achieves a lower full CRB for joint delay and amplitude estimation than any $\alpha$-frequency-spread waveform over the almost-sure event $\Omega_N$. 
\end{corollary}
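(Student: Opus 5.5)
The plan is to reuse the proof of Theorem~\ref{asm_opt_thm} with the selection matrix $\mathbf{T}_\tau^T\mathbf{T}_\tau$ replaced by $\mathbf{T}:=\mathbf{I}_{3L}$. The objective then becomes the full trace $\operatorname{Tr}(\mathbf{J}^{-1})$, which is exactly the sum of the delay CRB, $\operatorname{Tr}(\mathbf{T}_\tau^T\mathbf{T}_\tau\mathbf{J}^{-1})$, and the amplitude CRB, $\operatorname{Tr}(\mathbf{T}_\beta^T\mathbf{T}_\beta\mathbf{J}^{-1})$.

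\textbf{Step 1: split the gap.} By linearity of conditional expectation on $\Omega_N$, the full-CRB gap between $\mathbf{U}$ and $\mathbf{F}_N^H$ equals the delay gap plus the amplitude gap. Theorem~\ref{asm_opt_thm} bounds the delay gap below by $c_\alpha(\mathbf{U})/N^2$. So it suffices to show that the amplitude gap is eventually nonnegative, up to a term $o(N^{-2})$.

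\textbf{Step 2: control the amplitude gap.} The preceding corollary gives only a qualitative statement for amplitude, so I will rederive it through the expansion \eqref{J_expansion} with a general PSD weight $\mathbf{T}$:
\[
\operatorname{Tr}(\mathbf{T}\mathbf{J}^{-1})=\operatorname{Tr}(\mathbf{T}\bar{\mathbf{J}}^{-1})-\operatorname{Tr}(\mathbf{T}\mathbf{R}_1)+\operatorname{Tr}(\mathbf{T}\mathbf{R}_2)-\operatorname{Tr}(\mathbf{T}\mathbf{R}_3).
\]
The first term is waveform independent.

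The $\mathbf{R}_1$ term has zero mean unconditionally. Restricting to $\Omega_N$ changes its mean by only an exponentially small amount, using $\Pr\{\Omega_N^c\}\le 18L^2e^{-c\rho^2N}$ together with Cauchy--Schwarz and the moment bounds of Lemma~\ref{Z_HW_concerntration}.

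The $\mathbf{R}_3$ term is $O(N^{-1}\mathbb{E}\|\mathbf{Z}\|_2^3)=O(N^{-5/2})=o(N^{-2})$ on $\Omega_N$, since $\|(\mathbf{I}+\mathbf{Z})^{-1}\|_2\le(1-\rho)^{-1}$ there.

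The $\mathbf{R}_2$ term is quadratic in $\bm{\Delta}$:
\[
\mathbb{E}\operatorname{Tr}(\mathbf{T}\bar{\mathbf{J}}^{-1}\bm{\Delta}\bar{\mathbf{J}}^{-1}\bm{\Delta}\bar{\mathbf{J}}^{-1})=\operatorname{Tr}(\mathbf{M}_{\mathbf{T}}\bm{\Sigma}_{\mathbf{v}})
\]
for the fixed matrix $\mathbf{M}_{\mathbf{T}}=\bar{\mathbf{J}}^{-1}\mathbf{T}\bar{\mathbf{J}}^{-1}\otimes\bar{\mathbf{J}}^{-1}$, with the ordering adjusted as needed.

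\textbf{Step 3 (the main obstacle): positivity of the weight.} The difficulty is to confirm that $\mathbf{M}_{\mathbf{T}}$ is PSD for $\mathbf{T}=\mathbf{T}_\beta^T\mathbf{T}_\beta$ or for $\mathbf{T}=\mathbf{I}$. Both $\bar{\mathbf{J}}^{-1}\mathbf{T}\bar{\mathbf{J}}^{-1}$ and $\bar{\mathbf{J}}^{-1}$ are PSD, so their Kronecker product is PSD, and symmetry of $\bm{\Delta}$ makes the vec-permutation harmless. Lemma~\ref{bistochastic_prop} then gives $\operatorname{Tr}(\mathbf{M}_{\mathbf{T}}[\bm{\Sigma}_{\mathbf{v}}(\mathbf{U})-\bm{\Sigma}_{\mathbf{v}}(\mathbf{F}_N^H)])\ge0$. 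Hence the amplitude gap is at least $-o(N^{-2})$.

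\textbf{Step 4: combine.} Adding the two gaps gives a full-CRB gap of at least $c_\alpha(\mathbf{U})/N^2-o(N^{-2})\ge c_\alpha(\mathbf{U})/(2N^2)$ for all sufficiently large $N$, which proves the corollary.

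An even more direct route is to apply Step 2 with $\mathbf{T}=\mathbf{I}$. Then $\mathbf{M}_{\mathbf{I}}\succeq\mathbf{M}_{\mathbf{T}_\tau^T\mathbf{T}_\tau}$, so the quadratic gap is at least the delay-only gap, which yields the same strict $\Theta(N^{-2})$ separation.
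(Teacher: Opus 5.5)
Your proof is correct, but it is organized differently from the paper's. The paper simply reruns the proof of Theorem~\ref{asm_opt_thm} with $\mathbf{T}=\mathbf{I}_{3L}$. The $\mathbf{R}_1$ and $\mathbf{R}_3$ terms are controlled exactly as before. The second-order gap is again $(2-\kappa)\sum_{n<m}\mu_{nm}\mathbf{b}_n^T\mathbf{b}_m$, now with $\mu_{nm}=\|\bar{\mathbf{J}}^{-1}\mathbf{D}_{nm}\bar{\mathbf{J}}^{-1/2}\|_F^2$, and this gap is shown directly to be $\Theta(N^{-2})$.

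You instead split $\mathbf{I}_{3L}=\mathbf{T}_\tau^T\mathbf{T}_\tau+\mathbf{T}_\beta^T\mathbf{T}_\beta$ and import the delay gap from Theorem~\ref{asm_opt_thm} as a black box. You then need only a one-sided bound for the amplitude part. That bound comes for free from Lemma~\ref{bistochastic_prop} together with the positive semidefiniteness of the Kronecker weight, or equivalently from $\mathbf{M}_{\mathbf{I}}\succeq\mathbf{M}_{\mathbf{T}_\tau^T\mathbf{T}_\tau}$.

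\textbf{What each route buys.} Your route is modular and proves slightly more: any PSD weight $\mathbf{T}\succeq\mathbf{T}_\tau^T\mathbf{T}_\tau$ inherits the delay separation with no new scaling computation. The paper's route is self-contained, and for this particular weight it is the more transparent one. With a full-rank weight, $\mu_{nm}\ge\lambda_{\min}^3(\bar{\mathbf{J}}^{-1})\|\mathbf{D}_{nm}\|_F^2\ge (c_1N)^{-3}\|\mathbf{D}_{nm}\|_F^2$ follows immediately from Lemma~\ref{EFIM_order}. By contrast, the corresponding lower bound for the rank-deficient weight $\mathbf{T}_\tau^T\mathbf{T}_\tau$ is not immediate, and you inherit it through Theorem~\ref{asm_opt_thm}.

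\textbf{One detail to add.} You move only the $\mathbf{R}_1$ term from the unconditional to the conditional expectation on $\Omega_N$. The same must be done for $\mathbf{R}_2$. This is easy, because $\operatorname{Tr}(\mathbf{T}\mathbf{R}_2)\ge 0$ and $\Pr\{\Omega_N^c\}$ is exponentially small for both $\mathbf{U}$ and $\mathbf{F}_N^H$. Hence the different normalizations $\Pr\{\Omega_N\}$ cost only $O(N^{-2}e^{-cN})$.
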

\begin{proof}
This follows from Theorem \ref{asm_opt_thm} by letting $\mathbf{T}=\mathbf{I}_{3L}$ in $\operatorname{Tr}(\mathbf{T}\mathbf{J}^{-1})$. Since the proof of Theorem \ref{asm_opt_thm} depends on the parameter subset only through the fixed weighting matrix $\mathbf{T}$, all arguments on $\Omega_N$ remain unchanged, and the same second-order gap analysis applies to the full CRB, proving the stated asymptotic superiority over $\alpha$-frequency-spread waveforms.
\end{proof}

\subsection{OFDM is Locally Optimal in the Large-Sample Regime}
To further substantiate the optimality of OFDM in minimizing the ranging CRB, we investigate the local geometry of the following stochastic optimization problem:
\begin{equation}\label{CRB_minimization}
    \min_{\mathbf{U}\in\mathbb{U}(N)}\;\mathbb{E}\left[f(\mathbf{U})\right].
\end{equation}
In particular, we show that $\mathbf{F}_N^H$ is a stationary point of \eqref{CRB_minimization} and that its Riemannian Hessian is asymptotically positive semidefinite. Together, these properties provide an asymptotic local-optimality certificate for OFDM. 
\begin{thm}[Stationarity of OFDM]\label{stationarity_thm}
Let $\mathbf{K}$ be any skew-Hermitian matrix with $\|\mathbf{K}\|_2=1$ and consider the unit-speed geodesic $\mathbf{U}(t)=\mathbf{F}_N^H e^{t\mathbf{K}}$ over $\mathbb{U}(N)$. For uniform constellations satisfying Assumptions \ref{assumption1}--\ref{assumption3} and containing no zero element, the Riemannian gradient of $\mathbb{E}\left[f(\mathbf{U})\right]$ vanishes at $\mathbf{U}(0)=\mathbf{F}_N^H$, i.e.,
\begin{equation}
\left.\frac{d}{dt}\mathbb{E}\left\{f\left[\mathbf{U}(t)\right]\right\}\right|_{t=0} = 0.
\end{equation}
\end{thm}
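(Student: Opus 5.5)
The plan is to differentiate directly along the geodesic and exploit the fact that, at the OFDM point, the FIM depends on the symbols only through their magnitudes, while the first-order perturbation is linear in cross products $s_n^\ast s_m$ with $m\neq n$, which average to zero by Assumption~\ref{assumption3}.

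\emph{Step 1: reduce the derivative to a finite sum.} Along $\mathbf{U}(t)=\mathbf{F}_N^H e^{t\mathbf{K}}$ we have $\mathbf{Q}(t)=\mathbf{U}^H(t)\mathbf{F}_N^H=e^{-t\mathbf{K}}$. Since $\mathbf{K}^H=-\mathbf{K}$, this gives $\mathbf{Q}^H(t)\mathbf{s}=e^{t\mathbf{K}}\mathbf{s}$. Define $\mathbf{g}(t):=|e^{t\mathbf{K}}\mathbf{s}|^2$. Then
\begin{equation*}
\mathbf{J}(t)=\frac{2}{\sigma^2}\operatorname{Re}\left\{\mathbf{H}^H\operatorname{Diag}\left(\mathbf{g}(t)\right)\mathbf{H}\right\}.
\end{equation*}
Because the constellation is finite, $\mathbb{E}[f(\mathbf{U}(t))]$ is a finite weighted sum over $\mathbf{s}\in\mathcal{S}^N$. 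At $t=0$ we have $\mathbf{g}(0)=|\mathbf{s}|^2$, which is entrywise positive since the alphabet has no zero element. Together with Lemma~\ref{non_degenerate_lemma}, this makes $\mathbf{J}(0)$ invertible for every realization. By continuity, $\mathbf{J}(t)$ stays invertible for $|t|$ small, uniformly over the finitely many $\mathbf{s}$. Hence we may differentiate term by term. The same reasoning applies if one conditions on $\Omega_N$, since $\Omega_N$ only removes finitely many realizations, and I would remark on this.

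\emph{Step 2: compute the derivative.} Using $\frac{d}{dt}\mathbf{J}^{-1}=-\mathbf{J}^{-1}\dot{\mathbf{J}}\mathbf{J}^{-1}$, we get
\begin{equation*}
\left.\frac{d}{dt}f\right|_{t=0}=-\operatorname{Tr}\left\{\mathbf{T}\mathbf{J}_0^{-1}\dot{\mathbf{J}}\mathbf{J}_0^{-1}\right\},\qquad
\dot{\mathbf{J}}=\frac{2}{\sigma^2}\operatorname{Re}\left\{\mathbf{H}^H\operatorname{Diag}\left(\dot{\mathbf{g}}(0)\right)\mathbf{H}\right\},
\end{equation*}
where $\mathbf{J}_0:=\mathbf{J}(0)$. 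Its entries are
\begin{equation*}
\dot g_n(0)=2\operatorname{Re}\left\{s_n^\ast(\mathbf{K}\mathbf{s})_n\right\}=2\operatorname{Re}\left\{[\mathbf{K}]_{n,n}\right\}|s_n|^2+2\operatorname{Re}\sum_{m\ne n}[\mathbf{K}]_{n,m}s_n^\ast s_m .
\end{equation*}
The diagonal term vanishes because $[\mathbf{K}]_{n,n}$ is purely imaginary. The key structural observation is that $\mathbf{J}_0^{-1}$ is a function of $\{|s_k|^2\}_{k=1}^N$ only. Meanwhile, the derivative is linear in $\dot{\mathbf{g}}(0)$, which in turn is a linear combination of $s_n^\ast s_m$ with $m\ne n$.

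\emph{Step 3: average.} Condition on the magnitude vector $\boldsymbol{\rho}=(|s_1|^2,\ldots,|s_N|^2)$. Given $\boldsymbol{\rho}$, the symbols remain independent, and Assumption~\ref{assumption3} gives $\mathbb{E}(s_k\mid|s_k|^2)=0$. Hence, for $m\ne n$,
\begin{equation*}
\mathbb{E}\left(s_n^\ast s_m\mid\boldsymbol{\rho}\right)=\mathbb{E}\left(s_n^\ast\mid|s_n|^2\right)\mathbb{E}\left(s_m\mid|s_m|^2\right)=0,
\end{equation*}
and likewise for the conjugate term. Thus $\mathbb{E}[\dot{\mathbf{g}}(0)\mid\boldsymbol{\rho}]=\mathbf{0}$. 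Pulling the $\boldsymbol{\rho}$-measurable factors $\mathbf{J}_0^{-1}$ outside the conditional expectation, the conditional derivative is zero for every $\boldsymbol{\rho}$. The tower property then yields $\left.\frac{d}{dt}\mathbb{E}\{f[\mathbf{U}(t)]\}\right|_{t=0}=0$. This holds for every skew-Hermitian $\mathbf{K}$, so the Riemannian gradient vanishes.

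The main obstacle is Step 1: justifying the interchange of derivative and expectation and the well-definedness of $f$ near $t=0$, given that other unitaries may produce singular FIMs. I would handle this via finiteness of the alphabet and the positivity of $|\mathbf{s}|^2$ at OFDM, as described. If the statement is interpreted conditionally on $\Omega_N$, I would additionally check that the conditioning set can be taken locally constant in $t$, so that the same argument applies.
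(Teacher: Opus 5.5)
Your proposal is correct and follows essentially the same route as the paper: differentiate $\operatorname{Tr}\{\mathbf{T}\mathbf{J}^{-1}(t)\}$ along the geodesic, drop the diagonal part of $\dot g_n(0)$ because $[\mathbf{K}]_{n,n}$ is purely imaginary, use that $\mathbf{J}^{-1}(0)$ depends on the symbols only through $|\mathbf{s}|^2$, and then make the cross terms $s_n^\ast s_m$ vanish by conditioning on the magnitudes and applying Assumption~\ref{assumption3}. Your Step~1 (finite alphabet plus uniform invertibility of $\mathbf{J}(t)$ for small $|t|$) also supplies the justification for differentiating under the expectation, which the paper leaves implicit.
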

\begin{proof}
See Appendix \ref{proof_thm_2}.
\end{proof}
Note that Theorem \ref{stationarity_thm} holds for the finite-$N$ regime and applies to the total expectation. We next establish an asymptotic second-order property at $\mathbf{U}=\mathbf{F}_N^H$, which again holds for the total expectation rather than restricting to $\Omega_N$. 
\begin{thm}[Semidefiniteness of the Large-$N$ Hessian at OFDM]\label{hessian_thm}
Under the same assumptions as in Theorem~\ref{stationarity_thm}, let
$\mathbf K$ be any skew-Hermitian matrix with $\|\mathbf K\|_2=1$, and consider
the geodesic $\mathbf U(t)=\mathbf F_N^H e^{t\mathbf K}$. Then there exists
$N_0(\mathbf K)$ such that, for all $N\ge N_0(\mathbf K)$,
\begin{equation}
\left.\frac{d^{2}}{dt^{2}}\mathbb{E}\left\{f\left[\mathbf{U}(t)\right]\right\}\right|_{t=0}\ge 0.
\end{equation}
\end{thm}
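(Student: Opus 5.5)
Along the geodesic we have $\mathbf{Q}(t)=\mathbf{U}(t)^H\mathbf{F}_N^H=e^{-t\mathbf{K}}$, so $\mathbf{Q}(t)^H\mathbf{s}=e^{t\mathbf{K}}\mathbf{s}$ and $\mathbf{J}(t)=\frac{2}{\sigma^2}\operatorname{Re}\{\mathbf{H}^H\operatorname{Diag}(\mathbf{w}(t))\mathbf{H}\}$ with $\mathbf{w}(t)=|e^{t\mathbf{K}}\mathbf{s}|^2$. The plan is to differentiate $f=\operatorname{Tr}\{\mathbf{T}\mathbf{J}^{-1}\}$ twice at $t=0$. Since $\mathbf{J}(0)$ is invertible for every realization under OFDM (no zero symbols), differentiation under the expectation is legitimate for the finite alphabet. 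Writing $\dot{\mathbf{J}},\ddot{\mathbf{J}}$ for the derivatives at $0$ and $\mathbf{J}_0=\mathbf{J}(0)$, we get
\begin{equation*}
\frac{d^2 f}{dt^2}\Big|_{0}=2\operatorname{Tr}\{\mathbf{T}\mathbf{J}_0^{-1}\dot{\mathbf{J}}\mathbf{J}_0^{-1}\dot{\mathbf{J}}\mathbf{J}_0^{-1}\}-\operatorname{Tr}\{\mathbf{T}\mathbf{J}_0^{-1}\ddot{\mathbf{J}}\mathbf{J}_0^{-1}\}.
\end{equation*}
The first term is always nonnegative, because $\mathbf{J}_0^{-1}\dot{\mathbf{J}}\mathbf{J}_0^{-1}\mathbf{T}\mathbf{J}_0^{-1}\dot{\mathbf{J}}\mathbf{J}_0^{-1}\succeq 0$. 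The whole difficulty therefore lies in the $\ddot{\mathbf{J}}$ term. The weights expand as $\dot w_n=2\operatorname{Re}\{s_n^*[\mathbf{K}\mathbf{s}]_n\}$ and $\ddot w_n=2|[\mathbf{K}\mathbf{s}]_n|^2+2\operatorname{Re}\{s_n^*[\mathbf{K}^2\mathbf{s}]_n\}$.

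\textbf{Step 1 (leading order).} Replace $\mathbf{J}_0^{-1}$ by $\bar{\mathbf{J}}^{-1}$ using the expansion \eqref{J_expansion}. For OFDM, $\mathbf{J}_0-\bar{\mathbf{J}}=\frac{2}{\sigma^2}\operatorname{Re}\{\mathbf{H}^H\operatorname{Diag}(|\mathbf{s}|^2-\mathbf{1})\mathbf{H}\}$, and Lemma~\ref{Z_HW_concerntration} gives $\mathbb{E}\|\mathbf{Z}\|_2^k=O(N^{-k/2})$. Combined with Lemma~\ref{EFIM_order}, $\|\mathbf{J}_0^{-1}\|_2=O(N^{-1})$ deterministically under OFDM, since $|s_n|^2\ge s_{\min}^2>0$. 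At leading order the second-derivative term becomes $-\operatorname{Tr}\{\mathbf{T}\bar{\mathbf{J}}^{-1}\mathbb{E}(\ddot{\mathbf{J}})\bar{\mathbf{J}}^{-1}\}$. Using Assumptions~\ref{assumption1}--\ref{assumption2} with independence, $\mathbb{E}|[\mathbf{K}\mathbf{s}]_n|^2=[\mathbf{K}\mathbf{K}^H]_{nn}$ and $\mathbb{E}\operatorname{Re}\{s_n^*[\mathbf{K}^2\mathbf{s}]_n\}=\operatorname{Re}[\mathbf{K}^2]_{nn}=-[\mathbf{K}\mathbf{K}^H]_{nn}$. Hence $\mathbb{E}\ddot w_n=0$. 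This is just the fact that $\mathbb{E}\mathbf{w}(t)=\mathbf{1}$ for all $t$, and it makes the $\ddot{\mathbf{J}}$ term vanish at leading order $O(N^{-1})$.

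\textbf{Step 2 (correlation corrections).} The second-order quantity to control is
\begin{equation*}
D:=2\mathbb{E}\operatorname{Tr}\{\mathbf{T}\bar{\mathbf{J}}^{-1}\dot{\mathbf{J}}\bar{\mathbf{J}}^{-1}\dot{\mathbf{J}}\bar{\mathbf{J}}^{-1}\}+\mathbb{E}\operatorname{Tr}\{\mathbf{T}\bar{\mathbf{J}}^{-1}(\mathbf{J}_0-\bar{\mathbf{J}})\bar{\mathbf{J}}^{-1}\ddot{\mathbf{J}}\bar{\mathbf{J}}^{-1}\}+(\text{sym.}),
\end{equation*}
which comes from the first-order term of the resolvent expansion multiplying $\ddot{\mathbf{J}}$. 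I will compute $\mathbb{E}[(|s_m|^2-1)\ddot w_n]$ with Assumption~\ref{assumption3}, which kills odd conditional moments. Only diagonal-type terms survive, giving quantities proportional to $(\kappa-1)$ times $|K_{nn}|^2$- and $[\mathbf{K}\mathbf{K}^H]_{nn}$-weighted sums. The $\dot{\mathbf{J}}$ covariance, computed similarly, has the form $\mathbb{E}[\dot w_m\dot w_n]=2\operatorname{Re}\{\cdots\}$ with off-diagonal entries $|K_{mn}|^2$ plus diagonal $(\kappa-1)$ corrections. The goal is to show that the positive quadratic form in the $|K_{mn}|^2$ ($m\neq n$) dominates. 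That form scales like $N^{-2}\sum_{m\ne n}|K_{mn}|^2\cdot(\text{geometry})$. The potentially negative $(\kappa-1)$ pieces involve only diagonal entries of $\mathbf{K}$, and the diagonal part of $\mathbf{K}$ is a tangent direction along which $\mathbf{w}(t)\equiv|\mathbf{s}|^2$ is constant, so that direction has an exactly zero second derivative. Decomposing $\mathbf{K}=\mathbf{K}_d+\mathbf{K}_o$, I expect the $\mathbf{K}_d$ contributions to cancel exactly (diagonal phases), leaving a form in $\mathbf{K}_o$. Its sign would then follow from a Lemma~\ref{bistochastic_prop}-type inequality with $\kappa<2$: the off-diagonal mixing increases the FIM covariance, which raises the Miller--Chang bound.

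\textbf{Step 3 (remainder).} Bound all remaining terms, which contain at least three fluctuation factors, by $O(N^{-5/2})$ uniformly in $\mathbf{K}$ with $\|\mathbf{K}\|_2=1$. This uses Lemma~\ref{Z_HW_concerntration}, Hölder's inequality, and the deterministic bound on $\mathbf{J}_0^{-1}$. The main obstacle is Step~2. When $\mathbf{K}_o$ is small relative to $N$-scaling, the positive main term may be of the same order as the remainder, which is why $N_0$ must depend on $\mathbf{K}$. I would first try to show that the main term is at least $c\,N^{-2}\|\mathbf{K}_o\|_F^2$ times a fixed positive constant. The conclusion then holds for $N\ge N_0(\mathbf{K})$; when $\mathbf{K}_o=\mathbf{0}$ the second derivative is exactly zero.
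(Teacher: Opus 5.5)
Your Steps~1--2 coincide with the paper's treatment of the first- and second-order resolvent terms. Step~1 is the identically vanishing second derivative of $\mathbb{E}\{\operatorname{Tr}[\mathbf T\mathbf R_1(t)]\}$. Your $D$ is exactly $\left.\frac{d^2}{dt^2}\mathbb{E}\{\operatorname{Tr}[\mathbf T\mathbf R_2(t)]\}\right|_{t=0}=4(2-\kappa)\sum_{n<m}|[\mathbf K]_{n,m}|^2\mu_{nm}$. The genuine gap is Step~3. Since $\mu_{nm}\le\|\bar{\mathbf J}^{-1}\|_2^3\|\mathbf D_{nm}\|_F^2=O(N^{-3}|f_n-f_m|^2)$, this leading term is at most $O(N^{-3}C_{\mathbf K})$, where $C_{\mathbf K}=\sum_{n<m}|[\mathbf K]_{n,m}|^2(f_n-f_m)^2$. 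Your target lower bound $cN^{-2}\|\mathbf K_o\|_F^2$ is therefore false. Take the unit-norm rotation of two adjacent subcarriers, $\mathbf K=\mathbf e_1\mathbf e_2^T-\mathbf e_2\mathbf e_1^T$: the leading term is $O(N^{-5})$, whereas your remainder estimate is a uniform $O(N^{-5/2})$. Letting $N_0$ depend on $\mathbf K$ does not help, because the ratio keeps shrinking as $N$ grows. The same failure occurs for every direction with $C_{\mathbf K}=o(N^{1/2})$, which includes the local subcarrier mixings that motivate the result. H\"older's inequality plus the moment bounds of Lemma~\ref{Z_HW_concerntration}, applied uniformly over $\|\mathbf K\|_2=1$, cannot certify the sign.

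What is missing is a remainder bound that carries the same $\mathbf K$-dependent weight, e.g.\ $O(N^{-4}C_{\mathbf K})$ as in the paper. The paper obtains it from structure rather than norms, using three facts:
\begin{itemize}
\item $\dot{\mathbf Z}=2\sum_{n<m}\xi_{nm}\bar{\mathbf J}^{-1}\mathbf D_{nm}$, with $\mathbb{E}(\xi_{nm}\xi_{pq}\mid|\mathbf s|^2)=0$ for $(p,q)\neq(n,m)$;
\item $\mathbb{E}(\ddot{\mathbf Z}\mid|\mathbf s|^2)=2\sum_{n<m}|[\mathbf K]_{n,m}|^2(g_m-g_n)\bar{\mathbf J}^{-1}\mathbf D_{nm}$;
\item a further cancellation that supplies a second factor $|f_n-f_m|$.
\end{itemize}
Your direct-differentiation setup can do this even more cleanly. $\mathbf J_0$ depends only on $|\mathbf s|^2$, and $\|\mathbf J_0^{-1}\|_2=O(N^{-1})$ deterministically. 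So conditioning your exact formula for $\ddot f(0)$ on $|\mathbf s|^2$ turns both terms into pair sums weighted by $|[\mathbf K]_{n,m}|^2$. Exchangeability of $|s_n|^2$ and $|s_m|^2$ then gives the second factor $|f_n-f_m|$ in the $\ddot{\mathbf J}$ term, since swapping them changes $\mathbf J_0$ by $(|s_m|^2-|s_n|^2)\mathbf D_{nm}$. This route needs neither $\Omega_N$ nor the Neumann series.

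Separately, your Step~2 bookkeeping is off. $\mathbb{E}[\dot w_m\dot w_n]$ has no $(\kappa-1)$ correction; it is twice the graph Laplacian with weights $|[\mathbf K]_{n,m}|^2$. By contrast, $\mathbb{E}[(|s_m|^2-1)\ddot w_n]=2(\kappa-1)(|[\mathbf K]_{n,m}|^2-\delta_{n,m}[\mathbf K\mathbf K^H]_{n,n})$. Here the diagonal of $\mathbf K$ cancels, and what remains is $-(\kappa-1)$ times that same off-diagonal Laplacian. The negative part therefore lives on $\mathbf K_o$, not $\mathbf K_d$, and is dominated only because $\kappa-1<1$.
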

\begin{proof}
See Appendix \ref{proof_thm3}. 
\end{proof}

\begin{figure}[!t]
\centering
\subfloat[Ranging CRB under 16-QAM constellation.]{\includegraphics[width=\columnwidth]{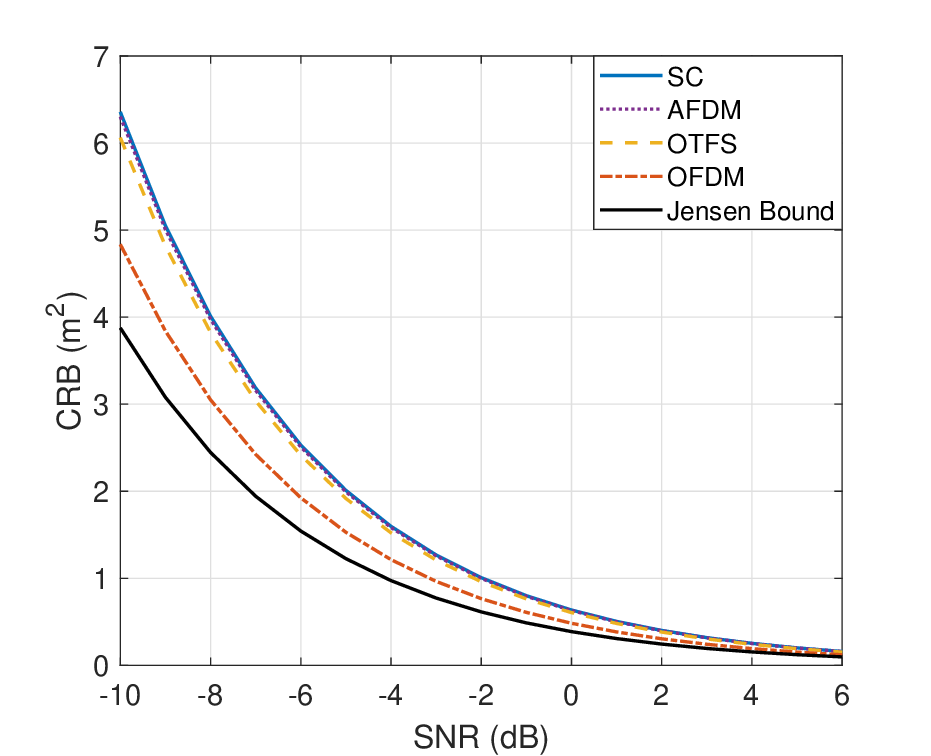}
\label{fig:crb_snr_16qam} }
\vspace{0.1in}
\subfloat[Ranging CRB under 16-PSK constellation.]{\includegraphics[width=\columnwidth]{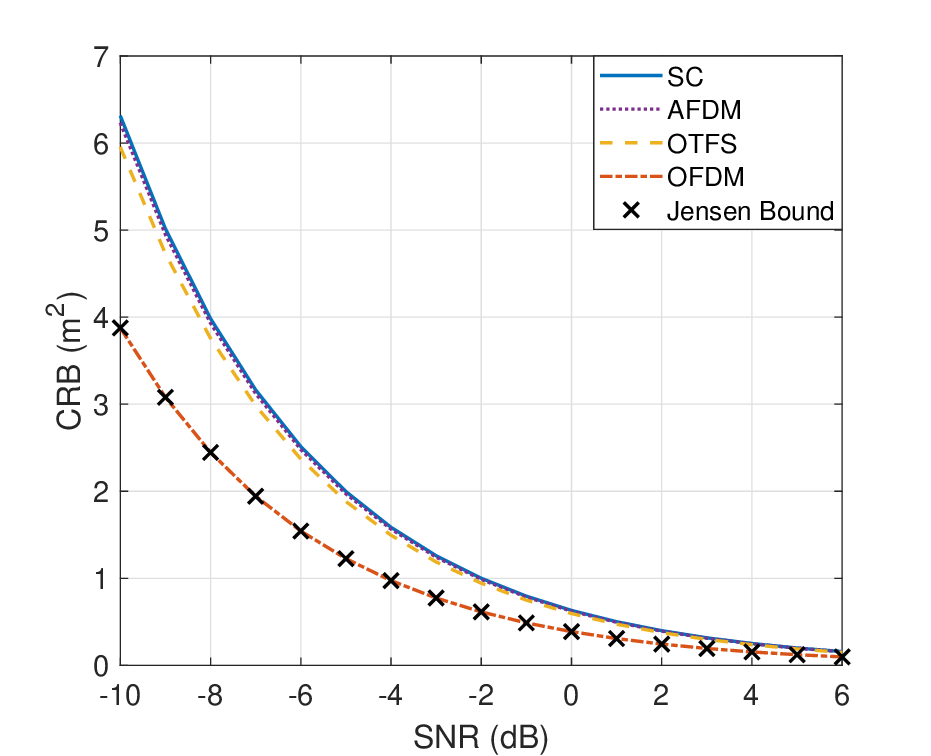}
\label{PCS_results_tradeoff}}
\caption{Ranging CRB for SC, AFDM, OTFS and OFDM under 16-QAM and 16-PSK constellations, $N = 128$, $L = 40$. The parameter settings for AFDM and OTFS are: $c_1 = \frac{1}{16}, c_2 = \frac{1}{8}$, and $N_1 = 32, N_2 = 4$.}
\label{fig:crb_snr}
\end{figure}

\subsection{Numerical Examples}
Fig.~\ref{fig:crb_snr} compares the ranging CRB of different orthogonal modulation schemes under 16-QAM and 16-PSK constellations. The numbers of transmitted symbols and targets are set as $N=128$ and $L=40$, respectively. For AFDM and OTFS, we set $c_1=1/16$, $c_2=1/8$, and $N_1=32$, $N_2=4$. The SNR is defined as $\mathrm{SNR}=1/\sigma^2$. In the simulation, we set the effective bandwidth as $B=160$ MHz. Under this bandwidth, one normalized delay unit corresponds to $\frac{c}{2B}=0.9375~\mathrm{m}$ in range, where $c$ is the speed of light, and the delay interval $\tau_\ell\in[0,127]$ corresponds to the range interval $[0,120~\mathrm{m}]$. Since the delay parameter $\tau_\ell$ in the considered model is measured in normalized sampling units, the normalized delay CRB is converted to physical ranging units by multiplying it by $\frac{c^2}{4B^2}$. Accordingly, the $40$ targets are randomly placed within this range interval.

It is observed from Fig.~\ref{fig:crb_snr} that OFDM consistently achieves the lowest ranging CRB among all tested waveforms under both constellations. Under 16-QAM, OFDM provides approximately $1$ dB SNR gain over the other representative waveforms for a given ranging accuracy. Under 16-PSK, OFDM exactly attains the Jensen bound, and the SNR gain increases to approximately $2$ dB. This exact achievability follows from the constant-modulus property of PSK symbols over the frequency domain under OFDM modulation, as discussed in Corollary \ref{CRB_OFDM_PSK}. In comparison, SC, AFDM, and OTFS show close CRB levels, and replacing 16-QAM with 16-PSK brings only marginal improvement for these waveforms. These observations confirm that the CRB advantage of OFDM originates from the waveform-dependent structure of the random FIM, and that the achievability of the Jensen bound requires the combination of unit-modulus constellations and the OFDM modulation basis.

\subsection{Discussions}
\subsubsection{Role of Bandwidth in ISAC Ranging}
The above results clarify that the advantage of OFDM does not come from a larger average waveform bandwidth, but from how random data symbols are placed in the frequency domain. This distinction is subtle because two notions of bandwidth are involved. Each OFDM basis function is localized on a single subcarrier and hence has zero individual RMS bandwidth, whereas the complete OFDM block still occupies the full system bandwidth. Indeed, for any unitary modulation matrix $\mathbf{U}$ and $\mathbf{x}=\mathbf{U}\mathbf{s}$, since $\mathbf{F}_N\mathbf{U}$ is unitary and $\mathbf{s}$ has i.i.d. unit-power entries, we have
\begin{align}
    \mathbb{E}\left\{\left|\mathbf{F}_N\mathbf{x}\right|^2\right\}
    =
    \mathbf{1}_N.
\end{align}
Thus, all unitary orthogonal modulations have the same average spectral power profile. On a uniform frequency grid, the RMS bandwidth induced by this average profile is
\begin{align}
    \overline{B}_{\rm rms}
    =
    \sqrt{\frac{1}{N}\sum_{n=0}^{N-1}\left(f_n-\frac{1}{N}\sum_k f_k\right)^2}
    =
    \sqrt{\frac{N^2-1}{12N^2}}.
\end{align}
Therefore, the comparison among OFDM, SC, OTFS, AFDM, and other waveforms is not a comparison between different average bandwidths. The essential difference is whether the full bandwidth is used in a \textit{stable way} for each data realization.

From the FIM perspective, all unitary bases induce the same average FIM $\bar{\mathbf{J}}$, and their difference appears only through the random fluctuation around this average. The Jensen gap between $\mathbb{E}\{\mathbf{J}^{-1}\}$ and $\bar{\mathbf{J}}^{-1}$ is therefore the price paid for symbol randomness. OFDM reduces this price by avoiding coherent mixing of independent data symbols across frequency bins. In contrast, in a frequency-spread basis, the signal on each frequency bin is a coherent sum of many random data symbols, leading to strong energy fluctuations across frequency. Destructive superposition may create spectral holes in some frequency bins, and these unfavorable realizations are heavily penalized because the CRB depends on the inverse of the random FIM. This explains why average RMS bandwidth alone is insufficient for characterizing the ranging CRB of data-bearing ISAC waveforms.

\subsubsection{Connection between CRB and Ambiguity Function}
It is also interesting to note that there is a deep connection between the ranging CRB and our previous analysis on the average ranging sidelobe of auto-correlation functions (ACFs) \cite{liu2024ofdm}. To see this, recall from~\cite[Eq.~(23)]{liu2024ofdm} that the average squared periodic ACF at lag $\ell$ can be written as
\begin{align}
    \mathbb{E}\{|\tilde r_\ell|^2\}
    = N^2\delta_{0,\ell}  + 
    N+(\kappa-2)\left\| \sum_{n}
    \mathbf{b}_n e^{-j\frac{2\pi \ell n}{N}}\right\|_2^2.
\end{align}
Consequently, since $\mathbf{B}$ is a permutation matrix under OFDM, which gives $\mathbf{b}_n^T\mathbf{b}_m=0$ for all $n\neq m$, the expected sidelobe level (ESL) gap at lag $\ell$ is
\begin{align}
    &\Delta{\rm ESL}_\ell(\mathbf{U})
    :=
    \mathbb{E}\{|\tilde r_\ell(\mathbf{U})|^2\}
    -
    \mathbb{E}\{|\tilde r_\ell({\rm OFDM})|^2\} \nonumber\\
    &=
    2(2-\kappa)
    \sum_{n<m}
    \left[
    1-\cos\left(\frac{2\pi\ell(m-n)}{N}\right)
    \right]\mathbf{b}_n^T\mathbf{b}_m .
\end{align}
Thus, the same pairwise overlap $\mathbf{b}_n^T\mathbf{b}_m$ that appears in the CRB analysis also controls the average sidelobe gap. Summing over all non-zero lags yields the expected integrated sidelobe level (EISL) gap as
\begin{align}
    \nonumber\Delta{\rm EISL}(\mathbf{U})
    &=
    2N(2-\kappa)
    \sum_{n<m}
    \mathbf{b}_n^T\mathbf{b}_m\\
    & =N(2-\kappa) \left(N -\left\|\mathbf{B}\right\|_F^2\right),
\end{align}
which is equivalent to the EISL expression in~\cite[Eq.~(27)]{liu2024ofdm}. In comparison, as indicated in \eqref{2nd_order_CRB_gap_closed_form}, the CRB gap is dominated by the second-order term in the form of
\begin{align}
    \nonumber\Delta{\rm CRB}(\mathbf{U}) &\approx \mathbb{E}\left\{\operatorname{Tr}\left[\mathbf{T}\mathbf{R}_2\left(\mathbf{U}\right)\right]\right\} - \mathbb{E}\left\{\operatorname{Tr}\left[\mathbf{T}\mathbf{R}_2\left(\mathbf{F}_N^H\right)\right]\right\}\\
    & =  (2-\kappa)
    \sum_{n<m}
    \mu_{nm}\mathbf{b}_n^T\mathbf{b}_m ,
    \end{align}
where $\mu_{nm}$, defined in Appendix \ref{proof_theorem_1}, is purely determined by the target geometry. Therefore, the EISL gap is an unweighted accumulation of the pairwise spectral overlaps, whereas the CRB gap is a geometry-weighted accumulation of the same quantities. This shows that average sidelobes and expected CRB are two different metrics of the same symbol-induced spectral fluctuations. The former measures their correlation-domain energy, while the latter measures their impact on the random FIM after being weighted by the target geometry and amplified through matrix inversion.

\subsubsection{Near-OFDM Waveforms}
The local optimality result under large $N$ gives the same message from a complementary perspective. The comparison with $\alpha$-spread waveforms covers bases that introduce a non-vanishing amount of symbol mixing across the frequency domain. By contrast, the local result focuses on the near-OFDM regime, where such mixing may vanish with $N$ and is therefore not captured by the frequency-spread condition. It shows that, even in this regime, a small departure from OFDM cannot systematically reduce the CRB in the large-$N$ limit. In other words, the apparent freedom to slightly mix neighboring subcarriers does not translate into a lower ranging CRB, because such mixing increases the leading fluctuation penalty induced by random symbols.

This complementarity is important for interpreting the scope of the theory. The frequency-spread result rules out broad classes of non-OFDM bases such as SC, OTFS, AFDM, and other spectrum-spread constructions, while the local result rules out small, carefully designed perturbations around OFDM. The two optimality results leave only a narrow unresolved region in the unitary waveform space. Any improvement in the ranging CRB, if it exists, would have to come from a non-localized yet non-spread waveform basis, for example a highly structured mixing of nearby subcarriers whose RMS bandwidth still vanishes with $N$. Such waveforms are neither conventional spectrum-spread modulations nor small perturbations of OFDM in the Riemannian sense. This observation provides a concrete interpretation of the remaining gap between the present results and a full global optimality theorem.

The common lesson learned from these results is that, within linear unitary modulations, changing the modulation basis alone from OFDM offers limited room for performance improvement. More substantial gains are more likely to come from pulse shaping, constellation shaping through $\kappa$ and higher moments~\cite{liu2026jsac_tut}, or waveform designs beyond the linear orthogonal framework.
\begin{conjecture}[Global Optimality of OFDM over the Almost-Sure Set]
    Under the same assumptions as Theorems \ref{asm_opt_thm}--\ref{hessian_thm}, OFDM is the global minimizer of the ranging CRB over $\Omega_N$, i.e., for all $\mathbf{U}\in\mathbb{U}(N)$,
    \begin{equation}
        \mathbb{E}\!\left[f(\mathbf{U})\left|\Omega_N\right.\right]\ge \mathbb{E}\!\left[f(\mathbf{F}_N^{H})\left|\Omega_N\right.\right].
    \end{equation}
\end{conjecture}

\section{Conclusions}\label{conclusion_sec}
This paper investigated how the orthogonal modulation basis affects the ranging CRB in ISAC systems with random communication symbols. We derived a Jensen-type lower bound that is independent of both the modulation basis and the constellation, and showed that CP-OFDM exactly attains this bound under PSK constellations. For QAM and more general sub-Gaussian constellations, we proved that CP-OFDM achieves a lower ranging CRB than any frequency-spread orthogonal waveform in the large-sample regime, with the gap governed by the dominant second-order term in the inverse-FIM expansion. The same superiority was also extended to amplitude estimation and to the full CRB for joint delay and amplitude estimation. We further showed that CP-OFDM is a stationary point for finite $N$, and that its Riemannian Hessian is positive semidefinite for sufficiently large $N$. This establishes the asymptotic local optimality of CP-OFDM. Numerical results confirmed the theoretical findings, with OFDM yielding the lowest CRB among representative waveforms, including SC, OTFS, and AFDM.

\appendices
\section{Proof of Lemma \ref{EFIM_order}}\label{proof_lemma_2}
Let us first partition $\bar{\mathbf{J}}$ following the convention in \eqref{J_partition}, such that the $\left(k,\ell\right)$-th entry of each block can be given as
\begin{align}
    \left[\bar{\mathbf{J}}_{\tau\tau}\right]_{k,\ell} &\nonumber= \frac{{2}}{{\sigma ^2}}\operatorname{Re} \left\{\beta_k^\ast\beta_{\ell} \dot{\mathbf{a}}_k^H\dot{\mathbf{a}}_\ell \right\}\\
    &\nonumber = \frac{{8\pi^2}}{{\sigma ^2}}\operatorname{Re} \left\{\beta_k^\ast\beta_{\ell} \sum_{n=0}^{N-1}\frac{n^2}{N^2}e^{-\frac{j2\pi n\left(\tau_\ell-\tau_k\right)}{N}} \right\},\\
    \left[\bar{\mathbf{J}}_{\tau a}\right]_{k,\ell} &\nonumber= \frac{{2}}{{\sigma ^2}}\operatorname{Re} \left\{\beta_k^\ast \dot{\mathbf{a}}_k^H{\mathbf{a}}_\ell \right\}\\
    &\nonumber = \frac{{4\pi}}{{\sigma ^2}}\operatorname{Re} \left\{j\beta_k^\ast \sum_{n=0}^{N-1}\frac{n}{N}e^{-\frac{j2\pi n\left(\tau_\ell-\tau_k\right)}{N}} \right\},\\
    \left[\bar{\mathbf{J}}_{\tau b}\right]_{k,\ell} &\nonumber= \frac{{2}}{{\sigma ^2}}\operatorname{Re} \left\{j\beta_k^\ast \dot{\mathbf{a}}_k^H{\mathbf{a}}_\ell \right\}\\
    &\nonumber = \frac{{4\pi}}{{\sigma ^2}}\operatorname{Re} \left\{-\beta_k^\ast \sum_{n=0}^{N-1}\frac{n}{N}e^{-\frac{j2\pi n\left(\tau_\ell-\tau_k\right)}{N}} \right\},\\
    \left[\bar{\mathbf{J}}_{aa}\right]_{k,\ell} &\nonumber= \left[\bar{\mathbf{J}}_{bb}\right]_{k,\ell}=\frac{{2}}{{\sigma ^2}}\operatorname{Re} \left\{{\mathbf{a}}_k^H{\mathbf{a}}_\ell \right\} \\&\nonumber= \frac{{2}}{{\sigma ^2}}\operatorname{Re} \left\{ \sum_{n=0}^{N-1}e^{-\frac{j2\pi n\left(\tau_\ell-\tau_k\right)}{N}} \right\},\\
    \left[\bar{\mathbf{J}}_{ab}\right]_{k,\ell} &= \frac{{2}}{{\sigma ^2}}\operatorname{Re} \left\{j{\mathbf{a}}_k^H{\mathbf{a}}_\ell \right\} = \frac{{2}}{{\sigma ^2}}\operatorname{Re} \left\{ j\sum_{n=0}^{N-1}e^{-\frac{j2\pi n\left(\tau_\ell-\tau_k\right)}{N}} \right\},
\end{align}
Define
\begin{align}
    &\nonumber I_{\tau\tau}\left(\tau\right):=\lim_{N\to\infty}\frac{1}{N}\sum_{n=0}^{N-1} \left(\frac{n}{N}\right)^2e^{\frac{-j2\pi n\tau}{N}} = \int_0^1 {x^2}e^{-j2\pi x \tau} dx,\\
    &\nonumber I_{\beta\beta}\left(\tau\right):=\lim_{N\to\infty}\frac{1}{N}\sum_{n=0}^{N-1} e^{\frac{-j2\pi n\tau}{N}} = \int_0^1e^{-j2\pi x \tau} dx,\\
    &I_{\tau \beta}\left(\tau\right):=\lim_{N\to\infty}\frac{1}{N}\sum_{n=0}^{N-1} \frac{n}{N}e^{\frac{-j2\pi n\tau}{N}} = \int_0^1xe^{-j2\pi x \tau} dx.
\end{align}
Then
\begin{align}\label{infty_entries}
    &\nonumber\lim_{N\to\infty}\frac{1}{N}\left[\bar{\mathbf{J}}_{\tau\tau}\right]_{k,\ell}= \frac{{8{\pi ^2}}}{{\sigma ^2}}\operatorname{Re} \left\{\beta_k^\ast\beta_{\ell} I_{\tau\tau}\left(\tau_\ell - \tau_k\right) \right\},\\
    &\nonumber\lim_{N\to\infty}\frac{1}{N}\left[\bar{\mathbf{J}}_{\tau a}\right]_{k,\ell}= \frac{{4{\pi}}}{{\sigma ^2}}\operatorname{Re} \left\{j\beta_k^\ast I_{\tau \beta}\left(\tau_\ell - \tau_k\right) \right\},\\
    &\nonumber\lim_{N\to\infty}\frac{1}{N}\left[\bar{\mathbf{J}}_{\tau b}\right]_{k,\ell}= \frac{{4{\pi}}}{{\sigma ^2}}\operatorname{Re} \left\{-\beta_k^\ast I_{\tau \beta}\left(\tau_\ell - \tau_k\right) \right\},\\
    &\nonumber\lim_{N\to\infty}\frac{1}{N}\left[\bar{\mathbf{J}}_{aa}\right]_{k,\ell}= \frac{{2}}{{\sigma ^2}}\operatorname{Re} \left\{I_{\beta \beta}\left(\tau_\ell - \tau_k\right) \right\},\\
    &\lim_{N\to\infty}\frac{1}{N}\left[\bar{\mathbf{J}}_{ab}\right]_{k,\ell}= \frac{{2}}{{\sigma ^2}}\operatorname{Re} \left\{jI_{\beta \beta}\left(\tau_\ell - \tau_k\right) \right\}.
\end{align}
We further define the limit matrix
\begin{equation}\label{limit_barJ}
    {\bar {\mathbf{J}}}_{\infty}:=\lim_{N\to\infty}\frac{1}{N}{\bar {\mathbf{J}}},
\end{equation}
where the entries of ${\bar {\mathbf{J}}}_{\infty}$ are given in \eqref{infty_entries}. 
Observe that for any $\mathbf{d}\in\mathbb{R}^{3L}$, we have
\begin{align}\label{semidefiniteness_J_bar} 
    &\nonumber\mathbf{d}^T{\bar {\mathbf{J}}}_{\infty}\mathbf{d} = \\&\frac{2}{\sigma^2}\int_0^1\left|\sum_{\ell}\left(d_{L+\ell}+jd_{2L+\ell}-j2\pi \beta_\ell d_\ell x\right)e^{-j2\pi\tau_\ell x}\right|^2dx \ge 0.
\end{align}
Let us assume the equality in \eqref{semidefiniteness_J_bar} is attained at some non-zero $\mathbf{d}$. Since
\begin{equation}
    h(x):=\left|\sum_{\ell}\left(d_{L+\ell}+jd_{2L+\ell}-j2\pi \beta_\ell d_\ell x\right)e^{-j2\pi\tau_\ell x}\right|^2
\end{equation}
is continuous and non-negative on $(0,1]$, the equality holds only if $h(x)\equiv 0$ on $(0,1]$, namely,
\begin{equation}
    \sum_{\ell}\left(d_{L+\ell}+jd_{2L+\ell}-j2\pi \beta_\ell d_\ell x\right)e^{-j2\pi\tau_\ell x} \equiv 0, \;\;x\in(0,1].
\end{equation}
By noting that the delays $\{\tau_\ell\}_{\ell=1}^L$ are distinct, the functions $\{e^{-j2\pi \tau_\ell x}\}_{\ell=1}^L$ are linearly independent on $(0,1]$. This implies 
\begin{equation}\label{constant_zero_condition}
    d_{L+\ell}+jd_{2L+\ell}-j2\pi \beta_\ell d_\ell x \equiv 0, \;\;x\in(0,1],\quad \forall \ell.
\end{equation}
Since $\beta_\ell \ne 0$ for all $\ell$, \eqref{constant_zero_condition} holds for all $x\in(0,1]$ if and only if $\mathbf{d} = \mathbf{0}$, contradicting the assumption. Therefore,
\begin{equation}
    \mathbf{d}^T\bar{\mathbf{J}}_{\infty}\mathbf{d}>0
\end{equation}
for any non-zero real vector $\mathbf{d}$, and hence $\bar{\mathbf{J}}_{\infty}$ is positive definite. In particular, $\lambda_{\min}\!\left(\bar{\mathbf{J}}_{\infty}\right)>0$. By continuity of the eigenvalue functions, we further have
\begin{align}
    &\mathop {\lim }\limits_{N \to \infty } \lambda_{\min}\left(\frac{1}{N}{\bar {\mathbf{J}}}\right) = \lambda_{\min}\left(\bar{\mathbf{J}}_{\infty}\right) >0,\\
    &\mathop {\lim }\limits_{N \to \infty } \lambda_{\max}\left(\frac{1}{N}{\bar {\mathbf{J}}}\right) = \lambda_{\max}\left(\bar{\mathbf{J}}_{\infty}\right) >0.
\end{align}
Let
\begin{equation}
    c_0 = \frac{1}{2}\lambda_{\min}\left({\bar {\mathbf{J}} _{\infty }}\right),\quad c_1 = 2\lambda_{\max}\left({\bar {\mathbf{J}} _{\infty }}\right).
\end{equation}
Then for sufficiently large $N$,
\begin{equation}
    \lambda_{\min}\left({\bar {\mathbf{J}}}\right) \ge c_0 N,\quad \lambda_{\max}\left({\bar {\mathbf{J}}}\right) \le c_1 N,
\end{equation}
which completes the proof.

\section{Proof of Lemma \ref{Z_HW_concerntration}}\label{proof_lemma_3}
Since $s$ is sub-Gaussian, there exists a constant $R$, such that $\|s\|_{\psi_2}\le R$, where $\|\cdot\|_{\psi_2}$ denotes the sub-Gaussian norm, defined for real random variable $x$ as \cite{rudelson2013hanson}
\begin{equation}
    \|x\|_{\psi_2}:= \sup_{p\ge 1}p^{-1/2}\left[\mathbb{E}\left(\left|x\right|^p\right)\right]^{1/p}
\end{equation}
and for complex random variable as
\begin{equation}
    \left\|s\right\|_{\psi_2} = \max\left\{\left\|\operatorname{Re}(s)\right\|_{\psi_2},\left\|\operatorname{Im}(s)\right\|_{\psi_2}\right\}.
\end{equation}
Observe that the perturbation matrix $\bm{\Delta}$ can be expressed as
\begin{equation}
    \bm{\Delta} = \mathbf{J} - \bar{\mathbf{J}} = \frac{2}{\sigma^2} \operatorname{Re} \left\{ \mathbf{H}^H \operatorname{Diag}\left( |\mathbf{Q}^H\mathbf{s}|^2 - \mathbf{1}_N \right) \mathbf{H} \right\}.
\end{equation}
Accordingly, the $(k, \ell)$-th entry of $\bm{\Delta}$ is given by
\begin{equation}
    \left[\bm{\Delta}\right]_{k,\ell} = \operatorname{Re} \left\{ \mathbf{s}^H \mathbf{G}_{k\ell} \mathbf{s} - \mathbb{E}\left( \mathbf{s}^H \mathbf{G}_{k\ell} \mathbf{s} \right) \right\},
\end{equation}
where $\mathbf{G}_{k\ell}$ is defined as
\begin{equation}
    \mathbf{G}_{k\ell} = \frac{2}{\sigma^2} \mathbf{Q} \mathbf{M}_{k\ell} \mathbf{Q}^H,
\end{equation}
and $\mathbf{M}_{k\ell}:=\operatorname{Diag}({\mathbf{h}}_{k}^\ast\odot{\mathbf{h}}_{\ell})$. Here, ${\mathbf{h}}_{k} \in \mathbb{C}^N$ represents the $k$-th column of the Jacobian matrix $\mathbf{H}$, which is given by
\begin{equation}
    \mathbf{h}_k=
    \begin{cases}
        \beta_k \dot{\mathbf{a}}_k, & 1\le k\le L,\\[1mm]
        \mathbf{a}_{k-L}, & L+1\le k\le 2L,\\[1mm]
        j\,\mathbf{a}_{k-2L}, & 2L+1\le k\le 3L.
    \end{cases}
\end{equation}
By the Hanson-Wright inequality \cite{rudelson2013hanson}, for quadratic forms of sub-Gaussian random vector $\mathbf{s}$, there exists a positive constant $c>0$, such that
    \begin{align}
        &\nonumber \Pr\left\{\textstyle|\left[\bm{\Delta}\right]_{k,\ell}|\ge t\right\} \\&\nonumber= \Pr\left\{\textstyle\left|\operatorname{Re}\left\{\mathbf{s}^H\mathbf{G}_{k\ell}\mathbf{s}-\mathbb{E}\left(\mathbf{s}^H\mathbf{G}_{k\ell}\mathbf{s}\right)\right\}\right|\ge t\right\}\\
        &\nonumber\le 2\exp\left\{-c\min\left(\frac{t^2}{R^4\left\|\mathbf{G}_{k\ell}\right\|_F^2},\frac{t}{R^2\left\|\mathbf{G}_{k\ell}\right\|_2}\right)\right\}.
    \end{align}
It is straightforward to verify that the Frobenius norm of $\mathbf{G}_{k\ell}$ satisfies
    \begin{align}
        &\|\mathbf{G}_{k\ell}\|_F^2 = \frac{4}{\sigma^4}\|\mathbf{M}_{k\ell}\|_F^2 =\frac{4}{\sigma^4}\|{\mathbf{h}}_{k}^\ast\odot{\mathbf{h}}_{\ell}\|^2 \nonumber \\&=
        \begin{cases}\nonumber
             \frac{64\pi^4|\beta_k|^2|\beta_\ell|^2}{\sigma^4}\sum_{n=1}^N \frac{(n-1)^4}{N^4}, & \scriptstyle 1\le k\le L,\;\;1\le \ell\le L\\[1mm]
             \frac{16\pi^2|\beta_k|^2}{\sigma^4}\sum_{n=1}^N \frac{(n-1)^2}{N^2}, & \scriptstyle1\le k\le L,\;\; L+1\le \ell\le 3L,\\[1mm]
             \frac{16\pi^2|\beta_\ell|^2}{\sigma^4}\sum_{n=1}^N \frac{(n-1)^2}{N^2}, & \scriptstyle L+1\le k\le 3L,\;\; 1\le \ell\le L,\\[1mm]
            \frac{4N}{\sigma^4}, &\scriptstyle L+1\le k\le 3L,\;\;L+1\le \ell\le 3L
        \end{cases}\\
        & = \Theta(N),
    \end{align}
while its spectral norm may be calculated as
    \begin{align}
        \|\mathbf{G}_{k\ell}\|_2 &=  \frac{2}{\sigma^2}\|\mathbf{M}_{k\ell}\|_2 \nonumber \\
        &=
        \begin{cases}\nonumber
            \frac{8\pi^2|\beta_k||\beta_\ell|(N-1)^2}{\sigma^2N^2}, & \scriptstyle1\le k\le L,\;\;1\le \ell\le L,\\[1mm]
            \frac{4\pi|\beta_k|(N-1)}{\sigma^2N}, & \scriptstyle1\le k\le L,\;\; L+1\le \ell\le 3L,\\[1mm]
             \frac{4\pi|\beta_\ell|(N-1)}{\sigma^2N}, & \scriptstyle L+1\le k\le 3L,\;\; 1\le \ell\le L,\\[1mm]
            \frac{2}{\sigma^2}, &\scriptstyle L+1\le k\le 3L,\;\;L+1\le \ell\le 3L,
        \end{cases}\\
        & = \Theta(1).
    \end{align}
Let $\Delta_{\max} = {\max }_{k,\ell} |\left[\bm{\Delta}\right]_{k,\ell}|$ denote the entry of $\bm{\Delta}$ with the maximum magnitude. Noting that
    \begin{equation}
        \|\bm{\Delta}\|_F^2 = \sum_{k,\ell} |\left[\bm{\Delta}\right]_{k,\ell}|^2 \le 9L^2 |\Delta_{\max}|^2,
    \end{equation}
we may bound the tail probability of the Frobenius norm using the union bound:
    \begin{align}
        &\Pr\left\{ \|\bm{\Delta}\|_F \ge t \right\} \le \Pr\left\{ |\Delta_{\max}| \ge \frac{t}{3L} \right\} \nonumber \\
        &\le \sum_{k,\ell} \Pr\left\{ |\left[\bm{\Delta}\right]_{k,\ell}| \ge \frac{t}{3L} \right\} \le 9L^2 \Pr\left\{ |\left[\bm{\Delta}\right]_{k^{\prime}\ell^{\prime}}| \ge \frac{t}{3L} \right\},
    \end{align}
    where 
    \begin{equation}
        \left(k^{\prime}, \ell^{\prime}\right) =\arg \mathop {\max }\limits_{k,\ell}\Pr\left\{ |\left[\bm{\Delta}\right]_{k,\ell}|^2 \ge \frac{t^2}{9L^2} \right\}.
    \end{equation}
By absorbing all constants into $c$ and setting $t = \varepsilon^{\prime}N$ for some $\varepsilon^{\prime} > 0$, we arrive at
    \begin{equation}
        \Pr\left\{ \|\bm{\Delta}\|_F \ge \varepsilon^{\prime}N \right\} \le 18L^2 \exp\left\{ -cN \min({\varepsilon^{\prime}}^2, \varepsilon^{\prime}) \right\}.
    \end{equation}
For sufficiently large $N$, recall that $\|\bar{\mathbf{J}}^{-1}\|_2 \le \frac{1}{c_0N}$. Consequently, the spectral norm of $\mathbf{Z}$ satisfies
    \begin{equation}
        \|\mathbf{Z}\|_2 \le \|\mathbf{Z}\|_F \le \|\bar{\mathbf{J}}^{-1}\|_2 \|\bm{\Delta}\|_F \le \frac{1}{c_0N} \|\bm{\Delta}\|_F.
    \end{equation}
By defining $\varepsilon = \frac{\varepsilon^\prime}{c_0}$ and consolidating the constants into $c$, the concentration inequality \eqref{concentration_sub_Gaussian} is immediately established.

Next, we evaluate the moments of $\|\mathbf{Z}\|_2$. The $k$-th moment can be expressed in terms of the tail probability as~\cite{durrett2019probability}
    \begin{equation}
        \mathbb{E}\left( \|\mathbf{Z}\|_2^k \right) = k \int_0^\infty t^{k-1} \Pr\left( \|\mathbf{Z}\|_2 \ge t \right) dt.
    \end{equation}
For the interval $0 < t < 1$, we have $\min(t^2, t) = t^2$. It follows that
    \begin{equation}
        \int_0^1 t^{k-1} e^{-cNt^2} dt \le \int_0^\infty t^{k-1} e^{-cNt^2} dt = \frac{1}{2} (cN)^{-\frac{k}{2}} \Gamma(k/2),
    \end{equation}
where $\Gamma(\cdot)$ denotes the Gamma function. Conversely, for $t \ge 1$, we have $\min(t^2, t) = t$, yielding
    \begin{equation}
        \int_1^\infty t^{k-1} e^{-cNt} dt \le \int_0^\infty t^{k-1} e^{-cNt} dt = (cN)^{-k} \Gamma(k).
    \end{equation}
By combining these bounds and incorporating the constants, we obtain
    \begin{align}
        \mathbb{E}\left( \|\mathbf{Z}\|_2^k \right)&\le 18L^2 k \left[ (cN)^{-k} \Gamma(k) + \frac{1}{2} (cN)^{-k/2} \Gamma(k/2) \right] \nonumber \\
        &= O(N^{-k/2}).
    \end{align}
This completes the proof.

\section{Proof of Lemma \ref{bistochastic_prop}}\label{proof_prop_bistochastic}
Let us first re-express the FIM as
        \begin{align}\label{FIM_decomposition}
            \mathbf{J} = \sum_{n=1}^N \tilde{s}_n\mathbf{C}_n,
        \end{align}
        where 
        \begin{align}\label{Cn_def}
           \tilde{s}_n := \left|\mathbf{q}_n^H \mathbf{s}\right|^2,\quad\mathbf{C}_n:=\frac{2}{\sigma^2}
        \operatorname{Re}\left(\widetilde{\mathbf{h}}_{n}\widetilde{\mathbf{h}}_{n}^H\right), 
        \end{align}
        in which $\mathbf{q}_n$ denotes the $n$th column of $\mathbf{Q}$, and $\widetilde{\mathbf{h}}_{n}\in\mathbb{C}^{3L}$ is the $n$th column of $\mathbf{H}^H$.
        Define 
        \begin{equation}\label{tilde_Jacobian}
            \widetilde{\mathbf{H}}: = \left[\operatorname{vec}\left(\mathbf{C}_1\right),\operatorname{vec}\left(\mathbf{C}_2\right),\ldots,\operatorname{vec}\left(\mathbf{C}_N\right)\right].
        \end{equation}
        It follows that
        \begin{equation}
            \mathbf{v} = \sum_{n=1}^N \tilde{s}_n\operatorname{vec}\left(\mathbf{C}_n\right) = \widetilde{\mathbf{H}}\tilde{\mathbf{s}}.
        \end{equation}
        Consequently,
        \begin{equation}\label{covaraince_Jacobian}
            \bm{\Sigma}_{\mathbf{v}} = \widetilde{\mathbf{H}}\bm{\Sigma}_{\tilde{\mathbf{s}}}\widetilde{\mathbf{H}}^T,
        \end{equation}
        where $\bm{\Sigma}_{\tilde{\mathbf{s}}} = \operatorname{Cov}\left(\tilde{\mathbf{s}},\tilde{\mathbf{s}}\right)$, with $\tilde{\mathbf{s}} = \left[\tilde{s}_1,\tilde{s}_2,\ldots,\tilde{s}_N\right]^T$. 
        Therefore, it suffices to show
        \begin{equation}\label{inequality_bi_stochastic}
            \bm{\Sigma}_{\tilde{\mathbf{s}}}\left(\mathbf{U}\right)\succeq\bm{\Sigma}_{\tilde{\mathbf{s}}}\left(\mathbf{F}_N^H\right).
        \end{equation}
        Note that
        \begin{align}\label{cov_n_m}
            &\nonumber\mathbb{E}\left(\left|\mathbf{q}_n^H \mathbf{s}\right|^2\left|\mathbf{q}_m^H \mathbf{s}\right|^2\right) = \mathbb{E}\left(\sum_{k,p} q_{nk}^\ast s_k s_p^\ast q_{np}\sum_{i\ell} q_{mi}^\ast s_i s_\ell^\ast q_{m\ell}\right)\\
            & = \sum_{i,\ell,k,p}q_{mi}^\ast q_{m\ell}q_{nk}^\ast q_{np}\mathbb{E}\left( s_i s_\ell^\ast s_k s_p^\ast\right),
        \end{align}
        where
        \begin{equation}\label{4s}
            \mathbb{E}\left( s_i s_\ell^\ast s_k s_p^\ast\right) = \delta_{i,\ell}\delta_{k,p}+\delta_{i,p}\delta_{\ell,k}+\left(\kappa-2\right)\delta_{i,\ell}\delta_{\ell,k}\delta_{k,p}.
        \end{equation}
        Substituting \eqref{4s} into \eqref{cov_n_m} yields
        \begin{align}\label{cov_n_m_1}
            &\nonumber\mathbb{E}\left(\left|\mathbf{q}_n^H \mathbf{s}\right|^2\left|\mathbf{q}_m^H \mathbf{s}\right|^2\right)  \\
            &\nonumber=\left\|\mathbf{q}_n\right\|^2\left\|\mathbf{q}_m\right\|^2+\left|\mathbf{q}_n^H\mathbf{q}_m\right|^2+\left(\kappa-2\right){\left|\mathbf{q}_n\right|^2}^T\left|\mathbf{q}_m\right|^2\\
            & = 1+\delta_{n,m}+\left(\kappa-2\right){\left|\mathbf{q}_n\right|^2}^T\left|\mathbf{q}_m\right|^2.
        \end{align}
        Since
        \begin{align}
            \nonumber\left[\bm{\Sigma}_{\tilde{\mathbf{s}}}\right]_{n,m} &= \mathbb{E}\left(\left|\mathbf{q}_n^H \mathbf{s}\right|^2\left|\mathbf{q}_m^H \mathbf{s}\right|^2\right) - \mathbb{E}\left(\left|\mathbf{q}_n^H \mathbf{s}\right|^2\right)\mathbb{E}\left(\left|\mathbf{q}_m^H \mathbf{s}\right|^2\right)\\
            & = \delta_{n,m}+\left(\kappa-2\right){\left|\mathbf{q}_n\right|^2}^T\left|\mathbf{q}_m\right|^2,
        \end{align}
        we have
        \begin{equation}\label{sigma_s_expression}
            \bm{\Sigma}_{\tilde{\mathbf{s}}} = \mathbf{I}_N+\left(\kappa-2\right){\mathbf{B}}^T\mathbf{B}.
        \end{equation}
        Since $\mathbf{B} := \left|\mathbf{Q}\right|^2$ is attained through entrywise square of a unitary matrix, it is a doubly stochastic matrix, satisfying
        \begin{equation}
            \left\|\mathbf{B}\right\|_2\le 1.
        \end{equation}
        Hence
        \begin{equation}
            {\mathbf{B}}^T\mathbf{B}\preceq \mathbf{I}_N,
        \end{equation}
        where the equality holds if and only if $\mathbf{B}$ is a permutation matrix. This is only possible if
        \begin{equation}
            \mathbf{U} = \mathbf{F}_N^H\bm{\Pi},
        \end{equation}
        where $\bm{\Pi}\in\mathbb{C}^{N\times N}$ is a complex permutation matrix, corresponding to the OFDM waveform up to subcarrier permutation and phase rotation. Since $\kappa < 2$, the inequality \eqref{inequality_bi_stochastic} follows immediately, completing the proof.

\section{Proof of Theorem \ref{asm_opt_thm}}\label{proof_theorem_1}
    By noting the resolvent-type expansion in \eqref{J_expansion}, we have
    \begin{align}\label{trace_expansion}
         &\mathbb{E}\left[f\left(\mathbf{U}\right)\left|\Omega_N\right.\right] \nonumber= \operatorname{Tr}\left(\mathbf{T}{{{\bar {\mathbf{J}} }^{-1}}}\right) -\mathbb{E}\left\{\operatorname{Tr}\left(\mathbf{T}\mathbf{R}_1\right)\left|\Omega_N\right.\right\}\\&\qquad\quad+ \mathbb{E}\left\{\operatorname{Tr}\left(\mathbf{T}\mathbf{R}_2\right)\left|\Omega_N\right.\right\}-\mathbb{E}\left\{\operatorname{Tr}\left(\mathbf{T}\mathbf{R}_3\right)\left|\Omega_N\right.\right\}.
    \end{align}
    Observe that the zero-order term $\operatorname{Tr}\left(\mathbf{T}{{{\bar {\mathbf{J}} }^{-1}}}\right)$ is exactly the Jensen bound, and is thus irrelevant to the adopted modulation waveform. In what follows, we present an analysis of the remaining terms.
    \subsection{First-Order Term}\label{1st_term}
    Since $\mathbb{E}\left(\bm{\Delta}\right) = \mathbf{0}$,
    \begin{align}
        \mathbb{E}\left(\mathbf{T}\mathbf{R}_1\right) &\nonumber= \Pr\left\{\Omega_N\right\}\mathbb{E}\left(\mathbf{T}\mathbf{R}_1\left|\Omega_N\right.\right) \\
        &\quad\;\;+ \Pr\left\{\Omega_N^c\right\}\mathbb{E}\left(\mathbf{T}\mathbf{R}_1\left|\Omega_N^c\right.\right)= \mathbf{0},
    \end{align}
    yielding
    \begin{align}
        \nonumber\mathbb{E}\left\{\operatorname{Tr}\left(\mathbf{T}\mathbf{R}_1\right)\left|\Omega_N\right.\right\} &= -\frac{\Pr\left\{\Omega_N^c\right\}\mathbb{E}\left\{\operatorname{Tr}\left(\mathbf{T}\mathbf{R}_1\right)\left|\Omega_N^c\right.\right\}}{\Pr\left\{\Omega_N\right\}}\\
        &= -\frac{\mathbb{E}\left\{\operatorname{Tr}\left(\mathbf{T}\mathbf{R}_1\right)\cdot 1_{\Omega_N^c}\right\}}{\Pr\left\{\Omega_N\right\}},
    \end{align}
    where $1_{\Omega_N^c}$ is an indicator function that takes value $1$ if the event $\Omega_N^c$ occurs, and $0$ otherwise. Consequently, by the Cauchy-Schwarz inequality,
    \begin{align}
        &\nonumber\left|\mathbb{E}\left\{\operatorname{Tr}\left(\mathbf{T}\mathbf{R}_1\right)\cdot 1_{\Omega_N^c}\right\}\right|\le \sqrt{\mathbb{E}\left(\left|\operatorname{Tr}\left(\mathbf{T}\mathbf{R}_1\right)\right|^2\right)\mathbb{E}\left(1_{\Omega_N^c}^2\right)}\\
        &\nonumber =\sqrt{\mathbb{E}\left(\left|\operatorname{Tr}\left(\mathbf{T}\mathbf{R}_1\right)\right|^2\right)\Pr\left\{\Omega_N^c\right\}}\\
        &\le L\sqrt{\mathbb{E}\left(\left\|\mathbf{T}\mathbf{R}_1\right\|_2^2\right)\Pr\left\{\Omega_N^c\right\}},
    \end{align}
    and thus
    \begin{align}
        \left|\mathbb{E}\left\{\operatorname{Tr}\left(\mathbf{T}\mathbf{R}_1\right)\left|\Omega_N\right.\right\}\right|\le \frac{L}{\Pr\left\{\Omega_N\right\}}\sqrt{\mathbb{E}\left(\left\|\mathbf{T}\mathbf{R}_1\right\|_2^2\right)\Pr\left\{\Omega_N^c\right\}}.
    \end{align}
    According to Lemma 3, since
    \begin{align}
        \nonumber\mathbb{E}\left(\left\|\mathbf{T}\mathbf{R}_1\right\|_2^2\right) &= \mathbb{E}\left(\left\|\mathbf{T}\mathbf{Z}{{{\bar {\mathbf{J}} }^{-1}}}\right\|_2^2\right)\le  \left\|\mathbf{T}\right\|_2^2\left\|{{{\bar {\mathbf{J}} }^{-1}}}\right\|_2^2\mathbb{E}\left(\left\|\mathbf{Z}\right\|_2^2\right) \\
        & =\Theta\left(N^{-2}\right)O\left(N^{-1}\right)= O\left(N^{-3}\right),
    \end{align}
    we have
    \begin{align}
        \nonumber\left|\mathbb{E}\left\{\operatorname{Tr}\left(\mathbf{T}\mathbf{R}_1\right)\left|\Omega_N\right.\right\}\right|&\le\frac{L}{\Pr\left\{\Omega_N\right\}}\sqrt{O\left(N^{-3}\right)\Pr\left\{\Omega_N^c\right\}}\\
        &= O\left(N^{-3/2}e^{-cN}\right).
    \end{align}

    \subsection{Second-Order Term}
    By a similar argument, it is straightforward to show that
    \begin{equation}\label{R2_order_omega_n}
        \mathbb{E}\left\{\operatorname{Tr}\left(\mathbf{T}\mathbf{R}_2\right)\left|\Omega_N\right.\right\} =\frac{\mathbb{E}\left\{\operatorname{Tr}\left(\mathbf{T}\mathbf{R}_2\right)\right\}}{\Pr\left\{\Omega_N\right\}}+O\left(N^{-2}e^{-cN}\right),
    \end{equation}
    which suggests that the order of $\mathbb{E}\left\{\operatorname{Tr}\left(\mathbf{T}\mathbf{R}_2\right)\left|\Omega_N\right.\right\}$ is dominated by $\mathbb{E}\left\{\operatorname{Tr}\left(\mathbf{T}\mathbf{R}_2\right)\right\}$.
    Note that
    \begin{align}\label{2nd_term_analysis}
        &\nonumber\mathbb{E}\left\{\operatorname{Tr}\left(\mathbf{T}\mathbf{R}_2\right)\right\} = \mathbb{E}\left\{\operatorname{Tr}\left(\mathbf{T}{{{\bar{\mathbf{J}} }^{-1}}}\bm{\Delta}{{{\bar {\mathbf{J}} }^{-1}}}\bm{\Delta}{{{\bar {\mathbf{J}} }^{-1}}}\right)\right\} \\
        &\nonumber= \mathbb{E}\left\{\operatorname{vec}\left(\bm{\Delta}\right)^T\mathbf{W}\operatorname{vec}\left(\bm{\Delta}\right)\right\} \\
        &= \operatorname{Tr}\left\{\mathbf{W}\mathbb{E}\left[\operatorname{vec}\left(\bm{\Delta}\right)\operatorname{vec}\left(\bm{\Delta}\right)^T\right]\right\} = \operatorname{Tr}\left\{\mathbf{W}\bm{\Sigma}_{\mathbf{v}}\left(\mathbf{U}\right)\right\},
    \end{align}
    where
    \begin{equation}
        \mathbf{W} = \left({{{\bar{\mathbf{J}} }^{-1}}}\right)^T\otimes\left({{{\bar{\mathbf{J}} }^{-1}}}\mathbf{T}{{{\bar{\mathbf{J}} }^{-1}}}\right)\succeq \mathbf{0}.
    \end{equation}
    
    We next show that $\mathbb{E}\left\{\operatorname{Tr}\left(\mathbf{T}\mathbf{R}_2\right)\right\} = \Theta\left(N^{-2}\right)$. To this end, note that
    \begin{equation}
        \operatorname{Tr}\left(\mathbf{T}\mathbf{R}_2\right) = \left\|\mathbf{T}_{\tau}\mathbf{Z}{{{\bar {\mathbf{J}} }^{-\frac{1}{2}}}}\right\|_F^2,
    \end{equation}
    Consequently,
    \begin{align}
       \lambda_{\min}\left({{{\bar {\mathbf{J}} }^{-1}}}\right) \left\|\mathbf{T}_{\tau}\mathbf{Z}\right\|_F^2 &\nonumber \le \|\mathbf{T}_{\tau}\mathbf{Z}{{{\bar {\mathbf{J}} }^{-\frac{1}{2}}}}\|_F^2\le \lambda_{\max}\left({{{\bar {\mathbf{J}} }^{-1}}}\right) \left\|\mathbf{T}_{\tau}\mathbf{Z}\right\|_F^2,
    \end{align}
    and it suffices to characterize the scaling behavior of the two bounding terms. As a direct consequence of Lemma \ref{EFIM_order}, the eigenvalues of ${{{\bar {\mathbf{J}} }^{-1}}}$ satisfy
        \begin{align}
            c_1^{-1}N^{-1}\le \lambda_{\min}\left({{{\bar {\mathbf{J}} }^{-1}}}\right)\le\lambda_{\max}\left({{{\bar {\mathbf{J}} }^{-1}}}\right) \le c_0^{-1}N^{-1}.
        \end{align}
    We now turn to the evaluation of $\mathbb{E}\left(\|\mathbf{T}_{\tau}\mathbf{Z}\|_F^2\right)$. Recall Lemma \ref{Z_HW_concerntration}. It is straightforward to observe
    \begin{align}
        &\mathbb{E}\left(\left\|\mathbf{T}_\tau\mathbf{Z}\right\|_F^2\right)\le \left\|\mathbf{T}_\tau\right\|_F^2\mathbb{E}\left(\left\|\mathbf{Z}\right\|_2^2\right) = O\left(N^{-1}\right).
    \end{align}
     Following \eqref{FIM_decomposition}, one may rewrite $\mathbf{T}_{\tau}\mathbf{Z}$ as
     \begin{equation}
         \mathbf{T}_{\tau}\mathbf{Z} = \sum_{n=1}^N \left(\tilde{s}_n-1\right)\mathbf{T}_{\tau}{{{\bar {\mathbf{J}} }^{-1}}}\mathbf{C}_n.
     \end{equation}
     By letting
     \begin{equation}
         \widetilde{\mathbf{C}}: = \left[\operatorname{vec}\left(\mathbf{T}_{\tau}{{{\bar {\mathbf{J}} }^{-1}}}\mathbf{C}_1\right),\ldots,\operatorname{vec}\left(\mathbf{T}_{\tau}{{{\bar {\mathbf{J}} }^{-1}}}\mathbf{C}_N\right)\right],
     \end{equation}
     we have
     \begin{equation}
          \mathbb{E}\left(\left\|\mathbf{T}_{\tau}\mathbf{Z}\right\|_F^2\right) =  \mathbb{E}\left(\left\|\operatorname{vec}\left(\mathbf{T}_{\tau}\mathbf{Z}\right)\right\|_2^2\right) = \operatorname{Tr}\left(\widetilde{\mathbf{C}}\bm{\Sigma}_{\tilde{\mathbf{s}}}\widetilde{\mathbf{C}}^H\right)
     \end{equation}
    Recall \eqref{sigma_s_expression} in Appendix \ref{proof_prop_bistochastic}. When $1<\kappa<2$, we have
    \begin{equation}
        \left(\kappa-1\right)\mathbf{I}_N\preceq \bm{\Sigma}_{\tilde{\mathbf{s}}}\preceq \mathbf{I}_N.
    \end{equation}
    It follows that
    \begin{equation}
        \operatorname{Tr}\left(\widetilde{\mathbf{C}}\bm{\Sigma}_{\tilde{\mathbf{s}}}\widetilde{\mathbf{C}}^H\right)\ge \left(\kappa-1\right)\|\widetilde{\mathbf{C}}\|_F^2.
    \end{equation}
    It remains to examine the order of $\|\widetilde{\mathbf{C}}\|_F^2$. Note that
    \begin{equation}
        \sum_{n=1}^N \mathbf{T}_{\tau}{{{\bar {\mathbf{J}} }^{-1}}}\mathbf{C}_n = \mathbf{T}_{\tau}{{{\bar {\mathbf{J}} }^{-1}}} \sum_{n=1}^N \mathbf{C}_n = \mathbf{T}_{\tau}{{{\bar {\mathbf{J}} }^{-1}}} \bar{\mathbf{J}} = \mathbf{T}_{\tau}.
    \end{equation}
    Then according to the Cauchy-Schwarz inequality
    \begin{align}
        \|\widetilde{\mathbf{C}}\|_F^2 &\nonumber = \sum_{n=1}^N \|\mathbf{T}_{\tau}{{{\bar {\mathbf{J}} }^{-1}}}\mathbf{C}_n\|_F^2 \ge \frac{1}{N}\left\|\sum_{n=1}^N \mathbf{T}_{\tau}{{{\bar {\mathbf{J}} }^{-1}}}\mathbf{C}_n\right\|_F^2 = \frac{L}{N}.
    \end{align}
    This indicates that $\mathbb{E}\left(\|\mathbf{T}_{\tau}\mathbf{Z}\|_F^2\right) = \Theta(N^{-1})$, and thereby $\mathbb{E}\left\{\operatorname{Tr}\left(\mathbf{T}\mathbf{R}_2\right)\right\} = \Theta\left(N^{-2}\right)$.

    \subsection{Higher-Order Term}
    In this subsection, we characterize the order of
    the higher-order term. Recall \eqref{R_3_original_formulation}. On $\Omega_N$, we have $\|\mathbf{Z}\|_2<\rho<1$, and hence
    \begin{equation}
        \left\|(\mathbf{I}+\mathbf{Z})^{-1}\right\|_2
        \le
        \frac{1}{1-\rho}.
    \end{equation}
    Therefore,
    \begin{align}
        \left|
        \operatorname{Tr}\left(\mathbf{T}\mathbf{R}_3\right)
        \right|
        &\le
        \|\mathbf{T}\|_F
        \left\|
        \mathbf{Z}^3(\mathbf{I}+\mathbf{Z})^{-1}\bar{\mathbf{J}}^{-1}
        \right\|_F \nonumber\\
        &\le
        \frac{L}{1-\rho}
        \|\mathbf{Z}\|_2^3
        \left\|\bar{\mathbf{J}}^{-1}\right\|_2,
    \end{align}
    By Lemma~\ref{EFIM_order}, $\|\bar{\mathbf{J}}^{-1}\|_2=O(N^{-1})$. Moreover, since $\Pr(\Omega_N)=1-O(e^{-cN})$, Lemma~\ref{Z_HW_concerntration} gives
    \begin{align}
        \mathbb{E}\left(\|\mathbf{Z}\|_2^3\mid\Omega_N\right)
        &\nonumber=
        \frac{\mathbb{E}\left(\|\mathbf{Z}\|_2^3\cdot 1_{\Omega_N}\right)}
        {\Pr(\Omega_N)}\le
        \frac{\mathbb{E}\left(\|\mathbf{Z}\|_2^3\right)}
        {\Pr(\Omega_N)}
        =
        O(N^{-3/2}).
    \end{align}
    Combining the above estimates yields
    \begin{align}
        \left|
        \mathbb{E}\left\{
        \operatorname{Tr}\left(\mathbf{T}\mathbf{R}_3\right)
        \mid\Omega_N
        \right\}
        \right|
        =
        O(N^{-1})O(N^{-3/2})
        =
        O(N^{-5/2}).
    \end{align}

    \subsection{CRB Gap between OFDM and Frequency-Spread Waveforms} 
    Combining the aforementioned results, the order of each term in \eqref{trace_expansion} can be evaluated as 
    \begin{align}
    &\mathbb{E}\left[f\left(\mathbf{U}\right)\left|\Omega_N\right.\right]\nonumber= \operatorname{Tr} \left( \mathbf{T}\bar{\mathbf{J}}^{-1} \right) - \underbrace{ \mathbb{E}\{ \operatorname{Tr}(\mathbf{T}\mathbf{R}_1) \left|\Omega_N\right.\} }_{O\left(N^{-3/2}e^{-cN}\right)} \\
    & \quad \quad\quad+ \underbrace{ \mathbb{E}\{ \operatorname{Tr}(\mathbf{T}\mathbf{R}_2) \left|\Omega_N\right.\} }_{\Theta(N^{-2})} - \underbrace{ \mathbb{E}\{ \operatorname{Tr}(\mathbf{T}\mathbf{R}_3) \left|\Omega_N\right.\} }_{O(N^{-5/2})}.
    \end{align}
    We next quantify the CRB gap between OFDM and frequency-spread waveforms. By viewing each term as a function of the waveform $\mathbf{U}$, we obtain 
    \begin{align}
    &\mathbb{E}\left[f\left(\mathbf{U}\right)\left|\Omega_N\right.\right] - \mathbb{E}\left[f\left(\mathbf{F}_N^H\right)\left|\Omega_N\right.\right] =\nonumber\\
    &  -\underbrace{ \left(\mathbb{E}\left\{\operatorname{Tr}\left[\mathbf{T}\mathbf{R}_1\left(\mathbf{U}\right)\left|\Omega_N\right.\right]\right\} - \mathbb{E}\left\{\operatorname{Tr}\left[\mathbf{T}\mathbf{R}_1\left(\mathbf{F}_N^H\right)\left|\Omega_N\right.\right]\right\}\right)}_{O\left(N^{-3/2}e^{-cN}\right)} \nonumber\\
    & + \underbrace{ \left(\mathbb{E}\left\{\operatorname{Tr}\left[\mathbf{T}\mathbf{R}_2\left(\mathbf{U}\right)\left|\Omega_N\right.\right]\right\} - \mathbb{E}\left\{\operatorname{Tr}\left[\mathbf{T}\mathbf{R}_2\left(\mathbf{F}_N^H\right)\left|\Omega_N\right.\right]\right\}\right)}_{O\left(N^{-2}\right)} \nonumber\\
    & - \underbrace{ \left(\mathbb{E}\left\{\operatorname{Tr}\left[\mathbf{T}\mathbf{R}_3\left(\mathbf{U}\right)\left|\Omega_N\right.\right]\right\} - \mathbb{E}\left\{\operatorname{Tr}\left[\mathbf{T}\mathbf{R}_3\left(\mathbf{F}_N^H\right)\left|\Omega_N\right.\right]\right\}\right)}_{O\left(N^{-5/2}\right)}.
    \end{align}
    Therefore, to show that OFDM achieves a lower CRB, it suffices to prove that the second-order gap is positive and equals to $\Theta(N^{-2})$. Since the conditional expectation is dominated by the total expectation $\mathbb{E}\left\{\operatorname{Tr}\left(\mathbf{T}\mathbf{R}_2\right)\right\}$, we focus on the gap in terms of $\mathbb{E}\left\{\operatorname{Tr}\left[\mathbf{T}\mathbf{R}_2\right]\right\}$, which is given as 
    \begin{align}\label{2nd_gap}
    &\nonumber\mathbb{E}\left\{\operatorname{Tr}\left[\mathbf{T}\mathbf{R}_2\left(\mathbf{U}\right)\right]\right\} - \mathbb{E}\left\{\operatorname{Tr}\left[\mathbf{T}\mathbf{R}_2\left(\mathbf{F}_N^H\right)\right]\right\}\\
    &\nonumber = \operatorname{Tr}\left\{\widetilde{\mathbf{H}}^T\mathbf{W}\widetilde{\mathbf{H}}\left[\bm{\Sigma}_{\tilde{\mathbf{s}}}\left(\mathbf{U}\right)-\bm{\Sigma}_{\tilde{\mathbf{s}}}\left(\mathbf{F}_N^H\right)\right]\right\}\\
    & = \left(2 - \kappa\right)\operatorname{Tr}\left[\widetilde{\mathbf{H}}^T\mathbf{W}\widetilde{\mathbf{H}}\left(\mathbf{I}_N - \mathbf{B}^T\mathbf{B}\right)\right],
    \end{align}
    where $\widetilde{\mathbf{H}}$ is defined in \eqref{tilde_Jacobian}, and $\mathbf{B}$
    is a doubly stochastic matrix defined in \eqref{sigma_s_expression}, whose rows and columns sum to one. For $n\ne m$, the off-diagonal entries satisfy 
    \begin{equation}
    \left[\mathbf{I}_N - \mathbf{B}^T\mathbf{B}\right]_{n,m} = - \mathbf{b}_n^T\mathbf{b}_m,\quad \forall n\ne m,
    \end{equation}
    while the diagonal entries can be expressed as 
    \begin{align}
    &\nonumber\left[\mathbf{I}_N - \mathbf{B}^T\mathbf{B}\right]_{n,n} = 1 -  \sum_{i=1}^N b_{i,n}^2 = \sum_{i=1}^Nb_{i,n}\left(1-b_{i,n}\right)\\
    & = \sum_{i=1}^N b_{i,n}\sum_{m\ne n}^N b_{i,m} = \sum_{m\ne n}^N\sum_{i=1}^N b_{i,n} b_{i,m}= \sum_{m\ne n}^N  \mathbf{b}_n^T\mathbf{b}_m.
    \end{align}
    Consequently, $\mathbf{I}_N-\mathbf{B}^T\mathbf{B}$ admits the Laplacian decomposition 
    \begin{equation}\label{Laplacian}
    \mathbf{I}_N - \mathbf{B}^T\mathbf{B} = \sum_{n<m}  \mathbf{b}_n^T\mathbf{b}_m\left(\mathbf{e}_n - \mathbf{e}_m\right)\left(\mathbf{e}_n - \mathbf{e}_m\right)^T,
    \end{equation}
    where $\mathbf{e}_n\in\mathbb{R}^N$ is the $n$-th column of $\mathbf{I}_N$. By defining
    \begin{equation}
        \mathbf{D}_{nm}:=\mathbf{C}_n - \mathbf{C}_m,
    \end{equation}
    and substituting \eqref{Laplacian} into \eqref{2nd_gap}, we obtain 
    \begin{align}\label{2nd_order_CRB_gap_closed_form}
    &\nonumber\mathbb{E}\left\{\operatorname{Tr}\left[\mathbf{T}\mathbf{R}_2\left(\mathbf{U}\right)\right]\right\} - \mathbb{E}\left\{\operatorname{Tr}\left[\mathbf{T}\mathbf{R}_2\left(\mathbf{F}_N^H\right)\right]\right\}\\
    &\nonumber =  \left(2 - \kappa\right)\sum_{n<m}  \mathbf{b}_n^T\mathbf{b}_m \left(\mathbf{e}_n - \mathbf{e}_m\right)^T\widetilde{\mathbf{H}}^T\mathbf{W}\widetilde{\mathbf{H}} \left(\mathbf{e}_n - \mathbf{e}_m\right)\\
    &\nonumber = \left(2 - \kappa\right)\sum_{n<m}  \mathbf{b}_n^T\mathbf{b}_m \operatorname{vec}\left(\mathbf{D}_{nm}\right)^T\mathbf{W}\operatorname{vec}\left(\mathbf{D}_{nm}\right)\\
    &\nonumber = \left(2 - \kappa\right)\sum_{n<m}  \mathbf{b}_n^T\mathbf{b}_m \operatorname{Tr}\left(\mathbf{T}{{{\bar{\mathbf{J}} }^{-1}}}\mathbf{D}_{nm}{{{\bar {\mathbf{J}} }^{-1}}}\mathbf{D}_{nm}{{{\bar {\mathbf{J}} }^{-1}}}\right)\\
    &  = \left(2 - \kappa\right)\sum_{n<m} \mu_{nm} \mathbf{b}_n^T\mathbf{b}_m \ge 0,
    \end{align}
    where $\mu_{nm}:=\|\mathbf{T}_{\tau}{{{\bar {\mathbf{J}} }^{-1}}}\mathbf{D}_{nm}{{{\bar {\mathbf{J}} }^{-1/2}}}\|_F^2$.
    
    Let us recall the definition of the RMS bandwidth of the $n$th basis for the $\alpha$-spread waveform $\mathbf{U}$, with the power of its $k$th subcarrier $f_k = \frac{k-1}{N}$ denoted as $p_{k,n}$. It follows that
    \begin{align}
    &\nonumber\alpha^2 \le \left|B_{\rm rms}(\mathbf{u}_n)\right|^2 = \sum_{k=1} |f_k-\bar f(\mathbf{u}_n)|^2 p_{k,n} \\
    & = \frac{1}{2}\sum_{k,\ell}(f_k - f_\ell)^2p_{k,n}p_{\ell,n} = \sum_{k<\ell} (f_k - f_\ell)^2p_{k,n}p_{\ell,n},
    \end{align}
    where $p_{k,n} = |[\mathbf{F}_N\mathbf{U}]_{k,n}|^2 = b_{n,k}$. This immediately yields 
    \begin{equation}\label{frequency_difference_weighted_bnbm}
    \sum_{n<m} (f_n - f_m)^2\sum_{r =1}^N p_{n,r}p_{m,r} =  \sum_{n<m} (f_n - f_m)^2 \mathbf{b}_n^T \mathbf{b}_m\ge \alpha^2 N.
    \end{equation}
    
    Moreover, by examining the entries in $\mathbf{C}_n$, we have
    \begin{align}
        &\nonumber\left[\mathbf{C}_n\right]_{\ell,L+\ell} = \frac{4\pi f_n}{\sigma^2}\operatorname{Im}\left(\beta_\ell\right),\quad \ell =1,2,\ldots, L,\\
        &\left[\mathbf{C}_n\right]_{\ell,2L+\ell} = \frac{4\pi f_n}{\sigma^2}\operatorname{Re}\left(\beta_\ell\right),\quad \ell =1,2,\ldots, L.
    \end{align}
    Therefore
    \begin{align}
        &\nonumber\left[\mathbf{D}_{nm}\right]_{\ell,L+\ell} = \frac{4\pi\operatorname{Im}\left(\beta_\ell\right)}{\sigma^2}\left(f_n-f_m\right),\quad \ell =1,2,\ldots, L,\\
        &\left[\mathbf{D}_{nm}\right]_{\ell,2L+\ell} = \frac{4\pi\operatorname{Re}\left(\beta_\ell\right)}{\sigma^2}\left(f_n-f_m\right),\quad \ell =1,2,\ldots, L,
    \end{align}
    which yields
    \begin{align}
        \left\|\mathbf{D}_{nm}\right\|_F^2 \nonumber&\ge 2\sum_{\ell = 1}^L\left(\left|\left[\mathbf{D}_{nm}\right]_{\ell,L+\ell}\right|^2+\left|\left[\mathbf{D}_{nm}\right]_{\ell,2L+\ell}\right|^2\right)\\
        & = \frac{32\pi^2}{\sigma^4} \sum_{\ell}|\beta_\ell|^2 |f_n-f_m|^2.
    \end{align}
    
    To establish the upper bound, note that the $(p,q)$-th entry of
    $\mathbf{D}_{nm}=\mathbf{C}_n-\mathbf{C}_m$ is a finite linear combination of
    differences of the following forms with $0\le f_n,f_m\le 1$:
    \begin{align}
        &\nonumber f_n^2 e^{-j2\pi f_n(\tau_q-\tau_p)}
        -
        f_m^2 e^{-j2\pi f_m(\tau_q-\tau_p)}=
        O\left(|f_n-f_m|\right),\\
        &\nonumber f_n e^{-j2\pi f_n(\tau_q-\tau_p)}
        -
        f_m e^{-j2\pi f_m(\tau_q-\tau_p)}=
        O\left(|f_n-f_m|\right),\\
        &e^{-j2\pi f_n(\tau_q-\tau_p)}
        -
        e^{-j2\pi f_m(\tau_q-\tau_p)}=
        O\left(|f_n-f_m|\right).
    \end{align}
    Therefore,
    \begin{equation}
        [\mathbf{D}_{nm}]_{p,q}
        =
        O\left(|f_n-f_m|\right),\quad 1\le p,q\le 3L .
    \end{equation}
    Since $L$ is fixed, it follows that
    \begin{align}
        \left\|\mathbf{D}_{nm}\right\|_F^2
        &=
        \sum_{p=1}^{3L}\sum_{q=1}^{3L}
        \left|[\mathbf{D}_{nm}]_{p,q}\right|^2 =
        O\left(|f_n-f_m|^2\right).
    \end{align}
    Combining both bounds gives
    \begin{equation}\label{Dnm_order}
        \|\mathbf{D}_{nm}\|_F^2
        =
        \Theta\left(|f_n-f_m|^2\right),
    \end{equation}
    which implies that
    \begin{equation}
        \mu_{nm} =\|\mathbf{T}_{\tau}{{{\bar {\mathbf{J}} }^{-1}}}\mathbf{D}_{nm}{{{\bar {\mathbf{J}} }^{-1/2}}}\|_F^2 = \Theta\left(N^{-3}|f_n-f_m|^2\right).
    \end{equation}
    Together with \eqref{frequency_difference_weighted_bnbm}, this suggests that
    \begin{align}\label{R2_order}
    &\nonumber\mathbb{E}\left\{\operatorname{Tr}\left[\mathbf{T}\mathbf{R}_2\left(\mathbf{U}\right)\right]\right\} - \mathbb{E}\left\{\operatorname{Tr}\left[\mathbf{T}\mathbf{R}_2\left(\mathbf{F}_N^H\right)\right]\right\}\\
    & =  \left(2 - \kappa\right)\sum_{n<m} \mu_{nm} \mathbf{b}_n^T\mathbf{b}_m = \Theta(N^{-2}) >0.
    \end{align}
    Therefore, the CRB gap between OFDM and any $\alpha$-frequency-spread waveform is dominated by the second-order gap, which is strictly positive. As a result, with sufficiently large $N$, OFDM achieves lower ranging CRB compared to frequency-spread waveforms. 

\section{Proof of Theorem \ref{stationarity_thm}}\label{proof_thm_2}
    Consider a geodesic on $\mathbb{U}(N)$ passing through $\mathbf{U}=\mathbf{F}_N^H$, parameterized by~\cite{absil2008optimization}
    \begin{equation}
        \mathbf{U}(t)=\mathbf{F}_N^H e^{t\mathbf{K}}, \quad \mathbf{U}(0)=\mathbf{F}_N^H,
    \end{equation}
    where $\mathbf{K}=-\mathbf{K}^H$ is a skew-Hermitian matrix satisfying $\|\mathbf{K}\|_2=1$. Under this parameterization,
    \begin{equation}\label{Qt_form}
        \mathbf{Q}(t)=e^{-t\mathbf{K}}, \quad \mathbf{Q}(t)^H\mathbf{s}=e^{t\mathbf{K}}\mathbf{s}.
    \end{equation}
    Let $\mathbf{J}(t):=\mathbf{J}(\mathbf{U}(t),\mathbf{s})$, $\bm{\Delta}(t):=\mathbf{J}(t)-\bar{\mathbf{J}}$, and $\mathbf{Z}(t):=\bar{\mathbf{J}}^{-1}\bm{\Delta}(t)$. The FIM can be written as
    \begin{align}\label{FIM_expansion}
        \nonumber
        \mathbf{J}(t)
        &=\frac{2}{\sigma^2}\operatorname{Re}\left\{
        \mathbf{H}^H\operatorname{Diag}\left(\left|e^{t\mathbf{K}}\mathbf{s}\right|^2\right)\mathbf{H}
        \right\}\\
        & =
        \sum_{n=1}^N
        \left|\left[e^{t\mathbf{K}}\mathbf{s}\right]_n\right|^2\mathbf{C}_n,
    \end{align}
    where $\mathbf{C}_n$ is defined in \eqref{Cn_def}. For notational brevity, let $f(t): = f(\mathbf{U}(t))$. Our objective is to demonstrate that
    \begin{equation}
        \mathbb{E}\left[\dot{f}(0)\right] = \left. \frac{d\mathbb{E}\left[{f}(t)\right]}{dt} \right|_{t=0} = 0
    \end{equation}
    holds for any skew-Hermitian matrix $\mathbf{K}$.

    Applying the matrix inverse derivative identity, we have
    \begin{equation}
        \dot{\mathbf{J}}^{-1}(t) = -\mathbf{J}^{-1}(t) \dot{\mathbf{J}}(t) \mathbf{J}^{-1}(t).
    \end{equation}
    Consequently, the derivative of $\mathbb{E}\left[{f}(t)\right]$ at $t=0$ is given by
    \begin{align}
        \mathbb{E}\left[\dot{f}(0)\right] &= \left. \frac{d}{dt} \operatorname{Tr} \left\{ \mathbb{E} \left[\mathbf{T}\mathbf{J}^{-1}(t) \right] \right\} \right|_{t=0} \nonumber \\
        &= -\mathbb{E} \left\{ \operatorname{Tr} \left[\mathbf{T}\mathbf{J}^{-1}(0) \dot{\mathbf{J}}(0) \mathbf{J}^{-1}(0) \right] \right\}.
    \end{align}
    Define
    \begin{equation}\label{gt_def}
        g_n(t)=\left|\left[e^{t\mathbf{K}}\mathbf{s}\right]_n\right|^2-1,
    \end{equation}
    the FIM may be expressed as
    \begin{equation}
        \mathbf{J}(t) = \sum_{n=1}^N \left[g_n(t)+1\right] \mathbf{C}_n,
    \end{equation}
    The derivative of the FIM at $t=0$ is therefore
    \begin{equation}
        \dot{\mathbf{J}}(0) = \sum_{n=1}^N \dot{g}_n(0) \mathbf{C}_n,
    \end{equation}
    which yields
    \begin{equation}
        \mathbb{E}\left[\dot{f}(0)\right] = -\sum_{n=1}^N \mathbb{E} \{ \dot{g}_n(0) \gamma_n(\mathbf{s}) \},
    \end{equation}
    where
    \begin{equation}
        \gamma_n(\mathbf{s}) := \operatorname{Tr} \left[ \mathbf{T}\mathbf{J}^{-1}(0) \mathbf{C}_n \mathbf{J}^{-1}(0) \right].
    \end{equation}

    Applying the Taylor expansion of the matrix exponential, we have
    \begin{equation}
        e^{t\mathbf{K}}\mathbf{s} = \mathbf{s}+t\mathbf{K}\mathbf{s}+o(t).
    \end{equation}
    It follows that 
    \begin{equation}
        |[e^{t\mathbf{K}}\mathbf{s}]_n|^2 = |s_n|^2 + 2t\operatorname{Re}(s_n^\ast[\mathbf{K}\mathbf{s}]_n) + o(t),
    \end{equation}
    and the derivative at $t=0$ is given by
    \begin{align}
        \dot{g}_n(0) &= 2\operatorname{Re}(s_n^\ast[\mathbf{K}\mathbf{s}]_n) = 2\operatorname{Re}\left(\sum_{m=1}^N[\mathbf{K}]_{n,m}s_m s_n^\ast\right) \nonumber \\
        &= 2\operatorname{Re}([\mathbf{K}]_{n,n}|s_n|^2) + 2\operatorname{Re}\left(\sum_{\substack{m = 1 \\ m \ne n}}^N [\mathbf{K}]_{n,m}s_m s_n^\ast\right).
    \end{align}
    Since $\mathbf{K}$ is skew-Hermitian, its diagonal elements $[\mathbf{K}]_{n,n}$ are purely imaginary, which implies $\operatorname{Re}([\mathbf{K}]_{n,n}|s_n|^2) = 0$.
    
    By the law of total expectation, we have
    \begin{equation}
        \mathbb{E}\left[\dot{g}_n(0)\gamma_n(\mathbf{s})\right] = \mathbb{E}\left\{\mathbb{E}\left[\dot{g}_n(0)\gamma_n(\mathbf{s}) \left|\left|\mathbf{s}\right|^2\right.\right]\right\}.
    \end{equation}
    As $\mathbf{J}\left(0\right)$ depends on the symbols $\mathbf{s}$ only through their squared magnitudes $\left|\mathbf{s}\right|^2$, the same applies to $\gamma_n(\mathbf{s})$, which remains constant given $\left|\mathbf{s}\right|^2$. Consequently,
    \begin{align}
        &\mathbb{E}\left[\dot{g}_n(0)\gamma_n(\mathbf{s}) \left|\left|\mathbf{s}\right|^2\right.\right] = \gamma_n(\mathbf{s}) \mathbb{E}\left[\dot{g}_n(0)\left|\left|\mathbf{s}\right|^2\right.\right] \nonumber \\
        &\nonumber= 2\gamma_n(\mathbf{s}) \operatorname{Re} \left\{ \sum_{\substack{m = 1 \\ m \ne n}}^N [\mathbf{K}]_{n,m} \mathbb{E}\left(s_m s_n^\ast\left|\left|\mathbf{s}\right|^2\right.\right)\right\}\\
        &\nonumber=2\gamma_n(\mathbf{s}) \operatorname{Re} \left\{ \sum_{\substack{m = 1 \\ m \ne n}}^N [\mathbf{K}]_{n,m} \mathbb{E}\left(s_m \left|\left|s_m\right|^2\right.\right)\mathbb{E}\left(s_n^\ast\left|\left|s_n\right|^2\right.\right)\right\}\\
        & = 0.
    \end{align}
    where the last equality holds from Assumption \ref{assumption3} due to the central symmetry of the constellation. This leads to $\mathbb{E}[\dot{f}(0)] = 0$ for any skew-Hermitian matrix $\mathbf{K}$, thereby completing the proof.

    \section{Proof of Theorem \ref{hessian_thm}}\label{proof_thm3}
    \subsection{General Framework}
    Let us recall the resolvent expansion of $\mathbf{J}^{-1}$ in \eqref{J_expansion}, and define $\mathbf{R}_1(t)$, $\mathbf{R}_2(t)$, and $\mathbf{R}_3(t)$ by replacing $\mathbf{U}$ with $\mathbf{U}(t)$. The Riemannian Hessian may thus be decomposed as
    \begin{align}\label{Hessian_decomposition_total}
        &\nonumber\frac{d^{2}}{dt^{2}}\mathbb{E}\left\{f\left[\mathbf{U}(t)\right]\right\} =-\frac{d^{2}}{dt^{2}}\mathbb{E}\left\{\operatorname{Tr}\left[\mathbf{T}\mathbf{R}_1(t)\right]\right\}\\&+ \frac{d^{2}}{dt^{2}}\mathbb{E}\left\{\operatorname{Tr}\left[\mathbf{T}\mathbf{R}_2(t)\right]\right\}-\frac{d^{2}}{dt^{2}}\mathbb{E}\left\{\operatorname{Tr}\left[\mathbf{T}\mathbf{R}_3(t)\right]\right\}.
    \end{align}
    Under OFDM, i.e., at $t=0$, the FIM is invertible for every symbol realization because the constellation alphabet contains no zero element. Since the constellation alphabet is finite, there exist constants $s_{\min},s_{\max}>0$ such that
    \begin{equation}
        s_{\min}\le |s_n|\le s_{\max},\quad \forall n .
    \end{equation}
    At $t=0$, we have
    \begin{equation}
        \mathbf J(0)=\sum_{n=1}^N |s_n|^2\mathbf C_n \succeq s_{\min}^2 \sum_{n=1}^N\mathbf C_n = s_{\min}^2\bar{\mathbf J}.
    \end{equation}
    Together with $\|\bar{\mathbf J}^{-1}\|_2=O(N^{-1})$, this implies
    \begin{equation}
        \|\mathbf J^{-1}(0)\|_2=O(N^{-1}).
    \end{equation}
    Next, note that
    \begin{equation}
        \dot{\mathbf J}(0)
        =
        \sum_{n=1}^N \dot g_n(0)\mathbf C_n,\qquad
        \ddot{\mathbf J}(0)
        =
        \sum_{n=1}^N \ddot g_n(0)\mathbf C_n .
    \end{equation}
    Since $\|\mathbf K\|_2=1$ and $\|\mathbf s\|_2=O(N^{1/2})$, we have
    \begin{equation}
        |[\mathbf K\mathbf s]_n|\le \|\mathbf K\mathbf s\|_2=O(N^{1/2}),
    \end{equation}
    and similarly
    \begin{equation}
        |[\mathbf K^2\mathbf s]_n|\le \|\mathbf K^2\mathbf s\|_2=O(N^{1/2}).
    \end{equation}
    Therefore,
    \begin{equation}
        |\dot g_n(0)|=\left|2\operatorname{Re}\left(s_n^\ast[\mathbf K\mathbf s]_n\right)\right|=O(N^{1/2}),
    \end{equation}
    and
    \begin{align}
        \nonumber|\ddot g_n(0)|&=2\left|[\mathbf{K}\mathbf{s}]_n\right|^2
        +
        2\operatorname{Re}\left(s_n^\ast[\mathbf{K}^2\mathbf{s}]_n\right)\\
        &\le
        2\left|[\mathbf K\mathbf s]_n\right|^2+2|s_n|\left|[\mathbf K^2\mathbf s]_n\right| =O(N).
    \end{align}
    Since
    \begin{equation}
        \|\mathbf{C}_n\|_2
        \le
        \frac{2}{\sigma^2}\left\|\widetilde{\mathbf{h}}_{n}\widetilde{\mathbf{h}}_{n}^H\right\|_2
        =
        \frac{2}{\sigma^2}\|\widetilde{\mathbf{h}}_{n}\|_2^2
        =
        O(1),
    \end{equation}
    we obtain
    \begin{equation}
        \|\dot{\mathbf J}(0)\|_2=O(N^{3/2}),\qquad \|\ddot{\mathbf J}(0)\|_2=O(N^2).
    \end{equation}
    Moreover, the second derivative of $\mathbf{J}^{-1}(0)$ is
    \begin{align}
        &\nonumber\ddot{\mathbf J}^{-1}(0)
        =\\
        &2\mathbf J^{-1}(0)\dot{\mathbf J}(0)
        \mathbf J^{-1}(0)\dot{\mathbf J}(0)\mathbf J^{-1}(0)-\mathbf J^{-1}(0)\ddot{\mathbf J}(0)\mathbf J^{-1}(0).
    \end{align}
    Consequently,
    \begin{align}
        \left\|\ddot{\mathbf J}^{-1}(0)\right\|_2
        &\le
        2O(N^{-1})O(N^{3/2})O(N^{-1})
        O(N^{3/2})O(N^{-1}) \nonumber\\
        &\quad
        +O(N^{-1})O(N^2)O(N^{-1})=O(1).
    \end{align}
    It follows that
    \begin{equation}
        \operatorname{Tr}\left[\mathbf T\ddot{\mathbf J}^{-1}(0)\right]=
        O(1).
    \end{equation}
    Therefore, using $\Pr(\Omega_N^c)=O(e^{-cN})$, we have
    \begin{align}
        \left.
        \frac{d^2}{dt^2}
        \mathbb E\left\{
        \operatorname{Tr}\left[\mathbf T\mathbf J^{-1}(t)\right]\cdot
        1_{\Omega_N^c}
        \right\}
        \right|_{t=0}
        =
        O(e^{-cN}).
    \end{align}
    Consequently, the total expected Hessian and its restriction to $\Omega_N$ differ only by an exponentially small term:
    \begin{align}
        \left.
        \frac{d^2}{dt^2}
        \mathbb E\left\{
        f[\mathbf U(t)]
        \right\}
        \right|_{t=0}=
        \left.
        \frac{d^2}{dt^2}
        \mathbb E\left\{f[\mathbf U(t)]
        \cdot 1_{\Omega_N}
        \right\}
        \right|_{t=0}
        +
        O(e^{-cN}).
    \end{align}
    Moreover, the same complement estimate applies to the finite-order terms $\mathbf R_1(t)$ and $\mathbf R_2(t)$. Indeed, their second derivatives contain only finitely many products of $\bar{\mathbf J}^{-1}$, $\bm\Delta(t)$, $\dot{\bm\Delta}(t)$, and $\ddot{\bm\Delta}(t)$ when $t = 0$, and are therefore polynomially bounded in $N$ under the finite-alphabet assumption. Hence,
    \begin{equation}
        \left.
        \frac{d^2}{dt^2}
        \mathbb E\left\{
        \operatorname{Tr}\left[\mathbf T\mathbf R_i(t)\right]\cdot
        1_{\Omega_N^c}
        \right\}
        \right|_{t=0}
        =
        O(e^{-cN}),\quad i=1,2.
    \end{equation}
    Therefore, one may analyze the total expectations of the first- and second-order terms, while restricting the higher-order term to the almost-sure set $\Omega_N$.
    
    \subsection{First-Order Term}
    Since $\mathbb{E}\left[\bm{\Delta}(t)\right]=\mathbf{0}$, we have
    \begin{align}
        \mathbb E\left\{
        \operatorname{Tr}\left[\mathbf T\mathbf R_1(t)\right]
        \right\}
        =
        \operatorname{Tr}\left[
        \mathbf T\bar{\mathbf J}^{-1}
        \mathbb E[\bm\Delta(t)]
        \bar{\mathbf J}^{-1}
        \right]=0,
    \end{align}
    which immediately gives
    \begin{equation}
        \left.
        \frac{d^2}{dt^2}
        \mathbb E\left\{
        \operatorname{Tr}\left[\mathbf T\mathbf R_1(t)\right]
        \right\}
        \right|_{t=0}
        =
        0.
    \end{equation}
    
    \subsection{Second-Order Term}
    Recall \eqref{2nd_term_analysis} and \eqref{covaraince_Jacobian}. We have
    \begin{align}\label{R2_Hessian}
        & \nonumber \frac{d^{2}}{dt^{2}}\mathbb{E}\left\{\operatorname{Tr}\left[\mathbf{T}\mathbf{R}_2(t)\right]\right\} = \frac{d^{2}}{dt^{2}}\operatorname{Tr}\left\{\widetilde{\mathbf{H}}^T\mathbf{W}\widetilde{\mathbf{H}}\bm{\Sigma}_{\tilde{\mathbf{s}}}\left(t\right)\right\}\\
        & = \frac{d^{2}}{dt^{2}}\operatorname{Tr}\left\{\widetilde{\mathbf{H}}^T\mathbf{W}\widetilde{\mathbf{H}}\left[\mathbf{I}_N+\left(\kappa-2\right)\mathbf{B}^T\left(t\right)\mathbf{B}\left(t\right)\right]\right\},
    \end{align}
    where $\bm{\Sigma}_{\tilde{\mathbf{s}}}(t)$ and $\mathbf{B}(t)$ are defined by replacing $\mathbf{U}$ in $\bm{\Sigma}_{\tilde{\mathbf{s}}}$ and $\mathbf{B}$ with $\mathbf{U}(t)$. Recalling \eqref{Qt_form},
    \begin{equation}
        \mathbf{Q}\left(t\right) = e^{-t\mathbf{K}} = \mathbf{I}_N - t\mathbf{K}+\frac{t^2}{2}\mathbf{K}^2 + O(t^3).
    \end{equation}
    Thus,
    \begin{equation}
        \left[\mathbf{Q}\left(t\right)\right]_{m,n} = \delta_{m,n} - t\left[\mathbf{K}\right]_{m,n} + \frac{t^2}{2}\left[\mathbf{K}^2\right]_{m,n} + O(t^3),
    \end{equation}
    which yields
    \begin{align}
        \left[\mathbf{B}\left(t\right)\right]_{m,n} &=\left|\left[\mathbf{Q}\left(t\right)\right]_{m,n}\right|^2\nonumber \\
        &= \begin{cases}
            t^2\left|\left[\mathbf{K}\right]_{m,n}\right|^2 +O(t^3), & m\ne n, \\[1mm]
          1 -t^2\sum_{\ell\ne n} \left|\left[\mathbf{K}\right]_{\ell,n}\right|^2  + O(t^3), & m = n.
        \end{cases}
    \end{align}
    Accordingly,
    \begin{equation}
        \mathbf{B}\left(0\right) = \mathbf{I}_N,\quad\dot{\mathbf{B}}\left(0\right) = \mathbf{0},
    \end{equation}
    and
    \begin{align}
        \left[\ddot{\mathbf{B}}\left(0\right)\right]_{m,n} = \begin{cases}
            2\left|\left[\mathbf{K}\right]_{m,n}\right|^2, & m\ne n,\\[1mm]
          -2\sum_{\ell\ne n} \left|\left[\mathbf{K}\right]_{\ell,n}\right|^2 , & m = n.
        \end{cases}
    \end{align}
    Therefore,
    \begin{align}\label{I_BB_Hessian}
        &\nonumber\left.\frac{d^{2}}{dt^{2}}\left[\mathbf{I}_N -\mathbf{B}^T\left(t\right)\mathbf{B}\left(t\right)\right]\right|_{t=0} \\
        &\nonumber= -\ddot{\mathbf{B}}^T\left(0\right){\mathbf{B}}\left(0\right) - 2 \dot{\mathbf{B}}^T\left(0\right)\dot{\mathbf{B}}\left(0\right) - {\mathbf{B}}^T\left(0\right)\ddot{\mathbf{B}}\left(0\right)\\
        & = -\ddot{\mathbf{B}}^T\left(0\right) - \ddot{\mathbf{B}}\left(0\right) = -2\ddot{\mathbf{B}}\left(0\right)\nonumber \\
        & = 4\sum_{n<m}\left|\left[\mathbf{K}\right]_{m,n}\right|^2\left(\mathbf{e}_n - \mathbf{e}_m\right)\left(\mathbf{e}_n - \mathbf{e}_m\right)^T.
    \end{align}
    Substituting \eqref{I_BB_Hessian} into \eqref{R2_Hessian} yields
    \begin{align}
         &\nonumber\left.\frac{d^{2}}{dt^{2}}\mathbb{E}\left\{\operatorname{Tr}\left[\mathbf{T}\mathbf{R}_2(t)\right]\right\}\right|_{t=0}\\
         &= 4\left(2-\kappa\right) \sum_{n<m}  \left|\left[\mathbf{K}\right]_{n,m}\right|^2 \operatorname{Tr}\left(\mathbf{T}{{{\bar{\mathbf{J}} }^{-1}}}\mathbf{D}_{nm}{{{\bar {\mathbf{J}} }^{-1}}}\mathbf{D}_{nm}{{{\bar {\mathbf{J}} }^{-1}}}\right)\nonumber \\
         & = 4\left(2-\kappa\right) \sum_{n<m}  \left|\left[\mathbf{K}\right]_{n,m}\right|^2 \mu_{nm}.
    \end{align}
    By a similar argument as in \eqref{R2_order}, we obtain
    \begin{align}
         &\left.\frac{d^{2}}{dt^{2}}\mathbb{E}\left\{\operatorname{Tr}\left[\mathbf{T}\mathbf{R}_2(t)\right]\right\}\right|_{t=0} = \Theta\left(N^{-3}C_{\mathbf{K}}\right)\ge0,
    \end{align}
    where
    \begin{align} \label{CK_DEF}
        C_{\mathbf{K}}: =\sum_{n<m} \left|\left[\mathbf{K}\right]_{n,m}\right|^2(f_n - f_m)^2.
    \end{align}
    
    \subsection{Higher-Order Term}
    We now analyze the higher-order term over the almost-sure set
    $\Omega_N=\{\|\mathbf Z\|_2<\rho\}$, where $0<\rho<1$. Define $\mathbf{A}_n=\bar{\mathbf{J}}^{-1}\mathbf{C}_n$. Then,
    \begin{equation}
        \mathbf{Z}(t)=\sum_{n=1}^N g_n(t)\mathbf{A}_n,
    \end{equation}
    where $g_n(t)$ is defined in \eqref{gt_def}. In particular,
    \begin{align}
        &\nonumber
        \mathbf{Z}(0)=\sum_{n=1}^N g_n(0)\mathbf{A}_n,\quad
        \dot{\mathbf{Z}}(0)=\sum_{n=1}^N \dot g_n(0)\mathbf{A}_n,\\
        &
        \ddot{\mathbf{Z}}(0)=\sum_{n=1}^N \ddot g_n(0)\mathbf{A}_n.
    \end{align}
    Since $ \|\mathbf{C}_n\|_2 = O(1)$, Lemma~\ref{EFIM_order} gives
    \begin{equation}
        \|\mathbf{A}_n\|_2
        =
        \|\bar{\mathbf{J}}^{-1}\mathbf{C}_n\|_2
        \le
        \|\bar{\mathbf{J}}^{-1}\|_2\|\mathbf{C}_n\|_2
        =
        O(N^{-1}).
    \end{equation}
    Moreover, note that
    \begin{align}
        &\nonumber
        g_n(0)=|s_n|^2-1,\\
        &\nonumber
        \dot g_n(0)=2\operatorname{Re}\left(s_n^\ast[\mathbf{K}\mathbf{s}]_n\right) =2\operatorname{Re}\left(\sum_{m}s_n^\ast[\mathbf{K}]_{n,m}s_m \right),\\
        &
        \ddot g_n(0)
        =
        2\left|[\mathbf{K}\mathbf{s}]_n\right|^2
        +
        2\operatorname{Re}\left(s_n^\ast[\mathbf{K}^2\mathbf{s}]_n\right).
    \end{align}
    Based on the fact $[\mathbf{K}]_{m,n} = -[\mathbf{K}]_{n,m}^\ast$, it follows that
    \begin{align}\label{pairwise_dotZ}
        \nonumber\dot{\mathbf{Z}}(0) &= 2\sum_{n<m}\operatorname{Re}\left(s_n^\ast[\mathbf{K}]_{n,m}s_m \right)\left(\mathbf{A}_n- \mathbf{A}_m\right)\\
        & = 2\sum_{n<m}\xi_{nm}\bar{\mathbf{J}}^{-1}\mathbf{D}_{nm},
    \end{align}
    where $\xi_{nm}: =\operatorname{Re}\left(s_n^\ast[\mathbf{K}]_{n,m}s_m \right)$.

    On $\Omega_N$, the Neumann expansion of $\mathbf R_3(t)$ is valid. Thus,
    \begin{equation}
        \mathbf{R}_3(t)
        =
        \mathbf{Z}^3(t)
        \left[\mathbf{I}+\mathbf{Z}(t)\right]^{-1}
        \bar{\mathbf{J}}^{-1}
        =
        \sum_{r=3}^{\infty}
        (-1)^{r-3}\mathbf{Z}^r(t)\bar{\mathbf{J}}^{-1}.
    \end{equation}
    Consequently,
    \begin{align}\label{R_3_Hessian_expansion}
        &\nonumber
        \left.
        \frac{d^2}{dt^2}
        \mathbb{E}\left\{
        \operatorname{Tr}\left[\mathbf T\mathbf R_3(t)\right]
        \cdot 1_{\Omega_N}
        \right\}
        \right|_{t=0}\\
        &=
        \sum_{r=3}^{\infty}
        (-1)^{r-3}
        \operatorname{Tr}\left[
        \mathbf T
        \mathbb{E}\left[
        \ddot{\mathbf Z}^{r}(0)\cdot 1_{\Omega_N}
        \right]
        \bar{\mathbf J}^{-1}
        \right].
    \end{align}
    
    For notational simplicity, we omit ``$(0)$'' in the following and write $\mathbf Z$, $\dot{\mathbf Z}$, $\ddot{\mathbf Z}$, $g_n$, $\dot{g}_n$ and $\ddot{g}_n$ for their values at $t=0$. We obtain
    \begin{align}\label{ddot_Z_r}
        &\nonumber
        \ddot{\mathbf{Z}}^r
        :=
        \left.\frac{d^2}{dt^2}\mathbf{Z}^r(t)\right|_{t=0}\\
        &=
        2\sum_{1\le \ell < k\le r}
        \mathbf{Z}^{\ell-1}\dot{\mathbf{Z}}\mathbf{Z}^{k-\ell-1}
        \dot{\mathbf{Z}}\mathbf{Z}^{r-k} + \sum_{k=1}^r
        \mathbf{Z}^{k-1}\ddot{\mathbf{Z}}\mathbf{Z}^{r-k}.
    \end{align}
    Thus, each term in \eqref{ddot_Z_r} takes one of the following two forms:
    \begin{align}
       &\text{Form 1:}\quad \mathbf{Z}^a\dot{\mathbf{Z}}\mathbf{Z}^b\dot{\mathbf{Z}}\mathbf{Z}^c,\quad a+b+c=r-2,
         \label{1st_form}\\
       &\text{Form 2:}\quad  \mathbf{Z}^a\ddot{\mathbf{Z}}\mathbf{Z}^b,\qquad\;\;\; a+b=r-1.
        \label{2nd_form}
    \end{align}
    
    We first analyze Form~1. Since $\|\mathbf{Z}\|_2<\rho$ on $\Omega_N$,
    \begin{equation}
        \left\|
        \mathbb{E}\left[
        \mathbf Z^a\dot{\mathbf Z}\mathbf Z^b
        \dot{\mathbf Z}\mathbf Z^c
        \cdot 1_{\Omega_N}
        \right]
        \right\|_2
        \le \rho^{r-3} \left\|
        \mathbb{E}\left[
        \mathbf Z\dot{\mathbf Z}
        \dot{\mathbf Z}
        \cdot 1_{\Omega_N}
        \right]
        \right\|_2,
    \end{equation}
    By using the pairwise representation in \eqref{pairwise_dotZ}, we have
    \begin{align}\label{form2_expectation_expansion}
        &\nonumber
        \mathbb{E}\left(
        \mathbf{Z}\dot{\mathbf{Z}}
        \dot{\mathbf{Z}}
        \right)=
        4\sum_{n<m}\sum_{p<q}
        \mathbb{E}\left(
        \xi_{nm}\xi_{pq}
        \mathbf{Z}
        \bar{\mathbf{J}}^{-1}\mathbf{D}_{nm}
        \bar{\mathbf{J}}^{-1}\mathbf{D}_{pq}
        \right).
    \end{align}
    Note that $\mathbf{Z}$ depends on $\mathbf{s}$ only through $|\mathbf{s}|^2$. For notational convenience, write
    \begin{equation}
        \eta_{nm}:=s_n^\ast[\mathbf{K}]_{n,m}s_m,\qquad n<m ,
    \end{equation}
    so that $\xi_{nm}=\operatorname{Re}(\eta_{nm})$. Then
    \begin{equation}
        \xi_{nm}\xi_{pq}
        =
        \frac{1}{4}
        \left(\eta_{nm}+\eta_{nm}^{\ast}\right)
        \left(\eta_{pq}+\eta_{pq}^{\ast}\right).
    \end{equation}
    If $(p,q)\ne(n,m)$, then in each of the four products
    \begin{align}
        \eta_{nm}\eta_{pq},\quad
        \eta_{nm}\eta_{pq}^{\ast},\quad
        \eta_{nm}^{\ast}\eta_{pq},\quad
        \eta_{nm}^{\ast}\eta_{pq}^{\ast},
    \end{align}
    there exists at least one symbol whose phase is unbalanced. By the conditional
    symmetry of Assumption \ref{assumption3}, all such terms vanish after
    conditioning on $|\mathbf{s}|^2$. Hence,
    \begin{equation}\label{xi_cross_zero}
        \mathbb{E}\left(
        \xi_{nm}\xi_{pq}
        \left|\left|\mathbf{s}\right|^2\right.
        \right)
        =
        0,\qquad (p,q)\ne(n,m).
    \end{equation}
    Thus only the terms with $(p,q)=(n,m)$ survive after conditional averaging. We obtain
    \begin{align}
        &\nonumber\mathbb{E}\left(
        \mathbf{Z}\dot{\mathbf{Z}}
        \dot{\mathbf{Z}}
        \right) \\
        &\nonumber = 4\sum_{n<m}\sum_{p<q}
        \mathbb{E}\left[\mathbb{E}\left(
        \xi_{nm}\xi_{pq}
        \mathbf{Z}
        \bar{\mathbf{J}}^{-1}\mathbf{D}_{nm}
        \bar{\mathbf{J}}^{-1}\mathbf{D}_{pq}\left||\mathbf{s}|^2\right.
        \right)\right]\\
        & \nonumber = 4\sum_{n<m}\sum_{p<q}
        \mathbb{E}\left[\mathbb{E}\left(
        \xi_{nm}\xi_{pq}
        \left||\mathbf{s}|^2\right.
        \right)\mathbf{Z}
        \bar{\mathbf{J}}^{-1}\mathbf{D}_{nm}
        \bar{\mathbf{J}}^{-1}\mathbf{D}_{pq}\right] \\
        & \nonumber = 4\sum_{n<m}
        \mathbb{E}\left[
        \mathbb{E}\left(\xi_{nm}^2\left||\mathbf{s}|^2\right.\right)
        \mathbf{Z}
        \bar{\mathbf{J}}^{-1}\mathbf{D}_{nm}
        \bar{\mathbf{J}}^{-1}\mathbf{D}_{nm}
        \right]\\
        & \nonumber   = 4\sum_{n<m}
        \mathbb{E}\left[
        \mathbb{E}\left(\xi_{nm}^2\mathbf{Z}\left||\mathbf{s}|^2\right.\right)
        \bar{\mathbf{J}}^{-1}\mathbf{D}_{nm}
        \bar{\mathbf{J}}^{-1}\mathbf{D}_{nm}
        \right]\\
        &\nonumber  = 4\sum_{n<m}\sum_{\ell}
        \mathbb{E}\left[
        \mathbb{E}\left(\xi_{nm}^2g_{\ell}\mathbf{A}_{\ell}\left||\mathbf{s}|^2\right.\right)
        \bar{\mathbf{J}}^{-1}\mathbf{D}_{nm}
        \bar{\mathbf{J}}^{-1}\mathbf{D}_{nm}
        \right]\\
        & = 4\sum_{n<m}\sum_{\ell}
        \mathbb{E}\left(\xi_{nm}^2g_{\ell}\right)
        \mathbf{A}_{\ell}
        \bar{\mathbf{J}}^{-1}\mathbf{D}_{nm}
        \bar{\mathbf{J}}^{-1}\mathbf{D}_{nm}
    \end{align}
    Since $\xi_{nm}$ depends only on $s_n$ and $s_m$, while $g_\ell$ is centered and independent of them for $\ell\notin\{n,m\}$, all terms with $\ell\notin\{n,m\}$ vanish. Therefore,
    \begin{align}
        \sum_{\ell}
        \mathbb{E}\left(\xi_{nm}^2g_{\ell}\right)= \mathbb{E}\left(\xi_{nm}^2g_{n} +  \xi_{nm}^2g_{m}\right).
    \end{align}
    Moreover, for a fixed pair $n<m$, we have
    \begin{align}
         &\nonumber\mathbb{E}\left(\xi_{nm}^2g_{n}\right) \\&= \frac{1}{4}\mathbb{E}\left(\eta_{nm}^2g_{n}\right)+\frac{1}{4}\mathbb{E}\left(\eta_{nm}^{\ast2}g_{n}\right) + \frac{1}{2}\mathbb{E}\left(\left|\eta_{nm}\right|^2g_{n}\right).
    \end{align}
    By noting $\mathbb{E}(s_n^2) = 0$ due to Assumption \ref{assumption2}, $\mathbb{E}\left(\eta_{nm}^2g_{n}\right) = \mathbb{E}\left(\eta_{nm}^{\ast2}g_{n}\right) = 0$, yielding
    \begin{align}
         \nonumber\mathbb{E}\left(\xi_{nm}^2g_{n}\right) &= \frac{1}{2}\mathbb{E}\left(\left|\eta_{nm}\right|^2g_{n}\right) 
         \\
         &\nonumber= \frac{1}{2}\mathbb{E}\left[\left|s_n^\ast[\mathbf{K}]_{n,m}s_m\right|^2\left(\left|s_{n}\right|^2-1\right)\right] \\
         &\nonumber= \frac{1}{2}\left|[\mathbf{K}]_{n,m}\right|^2\mathbb{E}\left(\left|s_m\right|^2\right)\mathbb{E}\left[\left|s_n\right|^2\left(\left|s_n\right|^2-1\right)\right]\\
         & = \frac{1}{2}\left|[\mathbf{K}]_{n,m}\right|^2\left(\kappa - 1\right).
    \end{align}
    As a consequence,
    \begin{align}
        &\nonumber\mathbb{E}\left(
        \mathbf{Z}\dot{\mathbf{Z}}
        \dot{\mathbf{Z}}
        \right)  =\\
        &2\left(\kappa-1\right)\sum_{n<m}\left|[\mathbf{K}]_{n,m}\right|^2\left(\mathbf{A}_{n} +\mathbf{A}_m\right)
        \bar{\mathbf{J}}^{-1}\mathbf{D}_{nm}
        \bar{\mathbf{J}}^{-1}\mathbf{D}_{nm}.
    \end{align}
    Recall $\|\mathbf{D}_{nm}\|_F^2=\Theta\left(|f_n-f_m|^2\right)$, we have
    \begin{align}\label{JD_scale}
        \left\|
        \bar{\mathbf{J}}^{-1}\mathbf{D}_{nm}
        \right\|_2^2
        =
        \Theta\left(N^{-2}|f_n-f_m|^2\right).
    \end{align}
    Observing $\|\mathbf A_n\|_2=O(N^{-1})$, we arrive at
    \begin{align}
        \nonumber
        \left\|
        \mathbb{E}\left(
        \mathbf{Z}\dot{\mathbf{Z}}\dot{\mathbf{Z}}
        \right)
        \right\|_2&=
        O\left(
        N^{-3}
        \sum_{n<m}|[\mathbf K]_{n,m}|^2|f_n-f_m|^2
        \right)\\
        &=
        O\left(N^{-3}C_\mathbf{K}\right),
    \end{align}
    where $C_{\mathbf{K}}$ is defined in \eqref{CK_DEF}.

    We next incorporate the indicator $1_{\Omega_N}$. Since the
    constellation alphabet is finite, $\|\mathbf Z\|_2$, $\|\dot{\mathbf Z}\|_2$,
    and $\|\ddot{\mathbf Z}\|_2$ are polynomially bounded in $N$ at $t=0$.
    Moreover, $\Pr(\Omega_N^c)=O(e^{-cN})$. Hence
    \begin{align}
        \left\|
        \mathbb{E}\left(
        \mathbf{Z}\dot{\mathbf{Z}}\dot{\mathbf{Z}}
        \cdot 1_{\Omega_N}
        \right)
        -
        \mathbb{E}\left(
        \mathbf{Z}\dot{\mathbf{Z}}\dot{\mathbf{Z}}
        \right)
        \right\|_2=
        O(e^{-cN}),
    \end{align}
    for some constant $c>0$. Therefore,
    \begin{equation}
        \left\|
        \mathbb{E}\left[
        \mathbf{Z}\dot{\mathbf{Z}}\dot{\mathbf{Z}}
        \cdot 1_{\Omega_N}
        \right]
        \right\|_2
        =
        O\left(N^{-3}C_\mathbf{K}\right).
    \end{equation}
    This proves that every term of Form~1 satisfies
    \begin{equation}\label{form1_weighted_bound_on_Omega_final}
        \left\|
        \mathbb{E}\left[
        \mathbf Z^a\dot{\mathbf Z}\mathbf Z^b
        \dot{\mathbf Z}\mathbf Z^c
        \cdot 1_{\Omega_N}
        \right]
        \right\|_2
        =
        O(\rho^{r-3}N^{-3}C_{\mathbf K}).
    \end{equation}
    
    We now turn to Form~2. As before, the undifferentiated factors of $\mathbf Z$ contribute powers of $\rho$ on $\Omega_N$. Thus,
    \begin{equation}\label{form1_weighted_bound_on_Omega}
        \left\|
        \mathbb{E}\left[
        \mathbf Z^a\ddot{\mathbf Z}\mathbf Z^b
        \cdot 1_{\Omega_N}
        \right]
        \right\|_2
        \le \rho^{r-3} \left\|
        \mathbb{E}\left[
        \mathbf Z\ddot{\mathbf Z}
        {\mathbf Z}
        \cdot 1_{\Omega_N}
        \right]
        \right\|_2,
    \end{equation}
    and it remains to bound the core term
    $\mathbb{E}(\mathbf Z\ddot{\mathbf Z}\mathbf Z\cdot 1_{\Omega_N})$.
    We first consider the corresponding untruncated expectation. Conditioned on
    $|\mathbf{s}|^2$, the phase-unbalanced terms vanish, and hence
    \begin{align}
        \mathbb{E}\left(
        \ddot g_\ell
        \left|\left|\mathbf{s}\right|^2\right.
        \right)=
        2\sum_{i=1}^N |[\mathbf K]_{\ell,i}|^2 |s_i|^2
        +
        2\operatorname{Re}\left\{
        [\mathbf K^2]_{\ell,\ell}|s_\ell|^2
        \right\}.
    \end{align}
    Note that
    $[\mathbf K^2]_{\ell,\ell}=-\sum_{i=1}^N |[\mathbf K]_{\ell,i}|^2$ for the skew-Hermitian matrix $\mathbf{K}$. Therefore,
    \begin{equation}
        \mathbb{E}\left(
        \ddot g_\ell
        \left|\left|\mathbf{s}\right|^2\right.
        \right)
        =
        2\sum_{i=1}^N |[\mathbf K]_{\ell,i}|^2
        \left(g_i-g_\ell\right).
    \end{equation}
    Since $\ddot{\mathbf Z}=\sum_{\ell=1}^N \ddot g_\ell \mathbf A_\ell$, we have
    \begin{align}
        \nonumber
        \mathbb{E}\left(
        \ddot{\mathbf Z}
        \left|\left|\mathbf{s}\right|^2\right.
        \right)
        &=
        \sum_{\ell=1}^N
        \mathbb{E}\left(
        \ddot g_\ell
        \left|\left|\mathbf{s}\right|^2\right.
        \right)\mathbf A_\ell\\
        &=
        2\sum_{\ell=1}^N\sum_{i=1}^N
        |[\mathbf K]_{\ell,i}|^2
        \left(g_i-g_\ell\right)\mathbf A_\ell .
    \end{align}
    The terms with $i=\ell$ vanish. For each unordered pair $n<m$, the two
    ordered pairs $(\ell,i)=(n,m)$ and $(\ell,i)=(m,n)$ contribute
    \begin{align}
        &\nonumber
        2|[\mathbf K]_{n,m}|^2(g_m-g_n)\mathbf A_n
        +
        2|[\mathbf K]_{m,n}|^2(g_n-g_m)\mathbf A_m\\
        &=
        2|[\mathbf K]_{n,m}|^2(g_m-g_n)
        \left(\mathbf A_n-\mathbf A_m\right),
    \end{align}
    where we used $|[\mathbf K]_{m,n}|^2=|[\mathbf K]_{n,m}|^2$. Since
    $\mathbf A_n-\mathbf A_m=\bar{\mathbf J}^{-1}\mathbf D_{nm}$, it follows that
    \begin{align}
        \nonumber
        \mathbb{E}\left(
        \ddot{\mathbf Z}
        \left|\left|\mathbf{s}\right|^2\right.
        \right)
        &=
        2\sum_{\ell=1}^N\sum_{i=1}^N
        |[\mathbf K]_{\ell,i}|^2
        \left(g_i-g_\ell\right)\mathbf A_\ell\\
        &=
        2\sum_{n<m}|[\mathbf K]_{n,m}|^2
        \left(g_m-g_n\right)
        \bar{\mathbf J}^{-1}\mathbf D_{nm}.
    \end{align}
    Therefore,
    \begin{align}
        \mathbb{E}\left(
        \mathbf Z\ddot{\mathbf Z}\mathbf Z
        \right)=
        2\sum_{n<m}|[\mathbf K]_{n,m}|^2
        \mathbb{E}\left[
        \mathbf Z
        \left(g_m-g_n\right)
        \bar{\mathbf J}^{-1}\mathbf D_{nm}
        \mathbf Z
        \right].
    \end{align}
    For each fixed pair $(n,m)$, expanding
    $\mathbf Z=\sum_{\ell=1}^N g_\ell\mathbf A_\ell$ gives
    \begin{align}
        &\nonumber
        \mathbb{E}\left[
        \mathbf Z
        \left(g_m-g_n\right)
        \bar{\mathbf J}^{-1}\mathbf D_{nm}
        \mathbf Z
        \right]\\
        &=
        \sum_{i,j}
        \mathbb{E}\left[
        g_i\left(g_m-g_n\right)g_j
        \right]
        \mathbf A_i
        \bar{\mathbf J}^{-1}\mathbf D_{nm}
        \mathbf A_j .
    \end{align}
    By independence and $\mathbb{E}(g_\ell)=0$, all terms vanish except those
    with $(i,j)=(n,n)$ and $(i,j)=(m,m)$. Since the random variables
    $\{g_\ell\}_{\ell =1}^N$ are identically distributed, we have
    \begin{align}
        &\nonumber
        \mathbb{E}\left[
        \mathbf Z
        \left(g_m-g_n\right)
        \bar{\mathbf J}^{-1}\mathbf D_{nm}
        \mathbf Z
        \right]\\
        &=
        \mathbb{E}(g_n^3)
        \left(
        \mathbf A_m
        \bar{\mathbf J}^{-1}\mathbf D_{nm}
        \mathbf A_m
        -
        \mathbf A_n
        \bar{\mathbf J}^{-1}\mathbf D_{nm}
        \mathbf A_n
        \right),
    \end{align}
    where $\mathbb{E}(g_n^3) = \mathbb{E}\left(|s_n|^6\right)-3\kappa +2$ is uniformly bounded for finite constellations. The bracketed term has an additional cancellation. Indeed, since
    $\mathbf A_n-\mathbf A_m=\bar{\mathbf J}^{-1}\mathbf D_{nm}$, we have
    \begin{align}
        &\nonumber
        \mathbf A_m
        \bar{\mathbf J}^{-1}\mathbf D_{nm}
        \mathbf A_m
        -
        \mathbf A_n
        \bar{\mathbf J}^{-1}\mathbf D_{nm}
        \mathbf A_n\\
        &=
        -\bar{\mathbf J}^{-1}\mathbf D_{nm}
        \bar{\mathbf J}^{-1}\mathbf D_{nm}
        \mathbf A_m
        -
        \mathbf A_n
        \bar{\mathbf J}^{-1}\mathbf D_{nm}
        \bar{\mathbf J}^{-1}\mathbf D_{nm}.
    \end{align}
    Again, using $\|\mathbf A_n\|_2=O(N^{-1})$ and \eqref{JD_scale}, we obtain
    \begin{align}
        &
        \left\|
        \mathbb{E}\left[
        \mathbf Z
        \left(g_m-g_n\right)
        \bar{\mathbf J}^{-1}\mathbf D_{nm}
        \mathbf Z
        \right]
        \right\|_2=
        O\left(N^{-3}|f_n-f_m|^2\right),
    \end{align}
    Consequently,
    \begin{align}
        \nonumber
        \left\|
        \mathbb{E}\left(
        \mathbf Z\ddot{\mathbf Z}\mathbf Z
        \right)
        \right\|_2&=
        O\left(
        N^{-3}
        \sum_{n<m}|[\mathbf K]_{n,m}|^2|f_n-f_m|^2
        \right)\\
        &=
        O\left(N^{-3}C_{\mathbf K}\right).
    \end{align}

    We now incorporate the indicator $1_{\Omega_N}$. By the same argument used for Form~1, we obtain
    \begin{equation}
        \left\|
        \mathbb{E}\left(
        \mathbf Z\ddot{\mathbf Z}\mathbf Z\cdot 1_{\Omega_N}
        \right)
        -
        \mathbb{E}\left(
        \mathbf Z\ddot{\mathbf Z}\mathbf Z
        \right)
        \right\|_2
        =
        O(e^{-cN}).
    \end{equation}
    Hence,
    \begin{equation}
        \left\|
        \mathbb{E}\left(
        \mathbf Z\ddot{\mathbf Z}\mathbf Z\cdot 1_{\Omega_N}
        \right)
        \right\|_2
        =
        O\left(N^{-3}C_{\mathbf K}\right).
    \end{equation}
    Therefore,
    \begin{equation}\label{form2_weighted_bound_on_Omega_final}
        \left\|
        \mathbb{E}\left[
        \mathbf Z^a\ddot{\mathbf Z}\mathbf Z^b
        \cdot 1_{\Omega_N}
        \right]
        \right\|_2
        =
        O\left(\rho^{r-3}N^{-3}C_{\mathbf K}\right).
    \end{equation}

    It remains to sum all differentiated terms in \eqref{ddot_Z_r}. The second derivative of $\mathbf Z^r(t)$ contains at most $r^2$ terms of Form~1 and $r$ terms of Form~2. Therefore, combining
    \eqref{form1_weighted_bound_on_Omega_final} and
    \eqref{form2_weighted_bound_on_Omega_final} gives
    \begin{equation}\label{Z_power_derivative_weighted_bound}
        \left\|
        \mathbb{E}\left[
        \ddot{\mathbf Z}^{r}(0)\cdot 1_{\Omega_N}
        \right]
        \right\|_2
        =
        O(r^2\rho^{r-3}N^{-3}C_{\mathbf K}),
        \qquad r\ge 3.
    \end{equation}
    Moreover, using
    $\|\bar{\mathbf J}^{-1}\|_2=O(N^{-1})$, each differentiated term satisfies
    \begin{align}\label{R3_each_term_bound}
        &\nonumber
        \left|
        \operatorname{Tr}\left[
        \mathbf T
        \mathbb{E}\left[
        \ddot{\mathbf Z}^{r}(0)\cdot 1_{\Omega_N}
        \right]
        \bar{\mathbf J}^{-1}
        \right]\right|\\
        &\le
        \|\mathbf T\|_F
        \left\|
        \mathbb{E}\left[
        \ddot{\mathbf Z}^{r}(0)\cdot 1_{\Omega_N}
        \right]
        \right\|_2
        \left\|\bar{\mathbf J}^{-1}\right\|_2 
         =
        O(r^2\rho^{r-3}N^{-4}C_{\mathbf K}).
    \end{align}
    Since $\rho<1$, the series $\sum_{r=3}^{\infty}r^2\rho^{r-3}$ is finite.
    Therefore, the differentiated Neumann series in \eqref{R_3_Hessian_expansion} is absolutely summable. Summing over $r$ in \eqref{R_3_Hessian_expansion}, we obtain
    \begin{equation}
       \left.
        \frac{d^2}{dt^2}
        \mathbb{E}\left\{
        \operatorname{Tr}\left[\mathbf T\mathbf R_3(t)\right]
        \cdot 1_{\Omega_N}
        \right\}
        \right|_{t=0}
        =
        O(N^{-4}C_{\mathbf K}).
    \end{equation}

    \subsection{Final Result}
    Combining the above estimates with the resolvent expansion, we obtain
    \begin{align}\label{Hessian_final}
        &\nonumber\left.\frac{d^{2}}{dt^{2}}\mathbb{E}\left\{f\left[\mathbf{U}(t)\right]\right\}\right|_{t=0} =\underbrace{\left.\frac{d^{2}}{dt^{2}}\mathbb{E}\left\{\operatorname{Tr}\left[\mathbf{T}\mathbf{R}_2(t)\right]\right\}\right|_{t=0}}_{\Theta\left(N^{-3}C_{\mathbf{K}}\right)}\\&-\underbrace{\left.\frac{d^{2}}{dt^{2}}\mathbb{E}\left\{\operatorname{Tr}\left[\mathbf{T}\mathbf{R}_3(t)\right]\cdot 1_{\Omega_N}\right\}\right|_{t=0}}_{O\left(N^{-4}C_{\mathbf{K}}\right)} +O(e^{-cN}).
    \end{align}
    The leading term is non-negative and dominates the higher-order term. This proves that the Riemannian Hessian at the OFDM point is non-negative for sufficiently large $N$.

\balance
\bibliographystyle{IEEEtran}
\bibliography{CRB_OFDM_OPT}

\end{document}